\titleformat{\section}[block]{\Large\bfseries\filcenter}{}{1em}{}
\titleformat{\subsection}[hang]{\bfseries\filcenter}{}{1em}{}
\titleformat{\subsubsection}[hang]{\bfseries\filcenter}{}{1em}{}
\setlist[enumerate]{label*=\arabic*.}
\newtheorem{thm}{Theorem}
\newtheorem*{assumption*}{Assumption}
\newtheorem*{question*}{Question}
\newtheorem{prop}{Proposition}[subsection]
\newtheorem{dfn}{Definition}[subsection]
\newtheorem{lem}{Lemma}[subsection]
\newtheorem{cor}{Corollary}[subsection]
\newtheorem{question}{Question}[subsection]
\theoremstyle{definition}
\newtheorem{construction}{Construction}
\theoremstyle{definition}
\newtheorem{concquestion}{Question}
\title{Payment $\centernot \implies$ Consensus}
\author{Thomas Orton \\[0.5cm]{\small Senior Thesis, Harvard University.}\\[0.5cm]{\small Thesis Supervisor: Boaz Barak.}
\\[0.5cm]{\small Submitted March 2019. Revised May 2021.}}
\date{}
\begin{document}
\maketitle

\begin{abstract}
Decentralized payment systems such as Bitcoin have become massively popular in the last few years, yet there is still much to be done in understanding their formal properties. The vast majority of decentralized payment systems work by achieving consensus on the state of the network; a natural question to therefore ask is whether this consensus is necessary. In this paper, we formally define a model of payment systems, and present two main results. In Theorem \ref{thm:no-black-box}, we show that even though there exists a single step black box reduction from Payment Systems to Byzantine Broadcast, there does not exist any black box reduction in the other direction which is significantly better than a trivial reduction.  In Theorem \ref{thrm:short-cyc}, we show how to construct Payment Systems which only require a very small number of messages to be sent per transaction. In particular, global consensus about which transactions have occurred is not necessary for payments in this model. We then show a relation between the construction in Theorem \ref{thrm:short-cyc} and the Lightning Network, relating the formal model constructions we have given to a practical algorithm proposed by the cryptocurrency community. 
\end{abstract}

\newpage

\tableofcontents

\newpage

\part{Introduction}
\section{Motivation} 

Decentralized payment systems, commonly known as "cryptocurrencies", solve the following problem: provide a group of participants the ability to "send" and "receive" virtual money to each other, such that no small group of individuals can violate the security of the monetary system. For example, participants should not be able to spend a virtual $\$100$ note twice, nor should they be able to "steal" money from other participants. Since the launch and successful operation of Bitcoin in 2009, we have seen a significant increase in interest, funding and research into understanding decentralized payment systems. For example, at its peak in December 2017, the market capitalization of Bitcoin alone had grown from nothing to over $300$ billion USD in just nine years  \cite{coinmarketcap}. Throughout this timeperiod, it is undeniable that research in distributed consensus algorithms pioneered by Lamport and others \cite{LSP82} in the 1980s has been significantly influential in shaping how both researchers and practitioners think about and design distributed payment systems. Algorithms for consensus aim to solve the problem of coordinating a group of participants to communicate in a way such that after some time, all participants agree on some fact, and there is no way for a small group of individuals to prevent this from happening. The connection to decentralized payment systems is that if everyone can use a consensus protocol to agree on how much money each person has spent, then we can ensure that no-one spends their $\$100$ twice or steals money from someone else. To give a few examples of the extent of this influence:

\begin{enumerate}
    \item All of the top 10 cryptocurrencies\footnote{As of March 2019} by market capitalization provide payment system functionality by reaching consensus on the current monetary state of the network. \cite{coinmarketcap} \footnote{The author makes no claim about the thousands of relatively unknown cyrotocurrencies which often do not have well understood security guarantees.}
    
    \item Prominent researchers in this field introduce the distributed payment system problem as one in which it is crucial all participants agree on which transactions have occurred. \cite{youtubevid-micali}. 
    
    \item Current implementations of cryptocurrencies developed by the academic community are directly based on consensus solutions for Byzantine Agreement \cite{CGMV18} or built on top of consensus protocols \cite{SCG+14} which reach agreement on each transaction.  
    
    \item Prominent members of the cryptocurrency community explicitly and repeatedly frame cryptocurrencies a solution for solving a global consensus problem about the monetary state of the network. \footnote{The development of Bitcoin \cite{Bit} is claimed to have been significantly influenced by known solutions and impossibility results for Byzantine Agreement. For example, it is claimed the author of Bitcoin (whose real identity is unknown) would introduce Bitcoin as a solution to the Byzantine Agreement problem on chat forums. Consequentially, the thousands of cryptocurrencies which are built on top of Bitcoin's Blockchain architecture are all derived from solving the consensus problem as well. The founder of Ethereum, which is considered one of the foundational cryptocurrencies within the cryptocurrency community, maintains a website at \url{https://vitalik.ca/} where he gives his perspectives on designing distributed payment systems. The arguments he gives make heavy use of language and ideas drawn from distributed consensus.}
\end{enumerate}

Based on the above, it seems that the following assumption is implicit in a significant part of the work and discussion in the cryptocurrency community:

\begin{assumption*}
Distributed payment systems cannot exist without achieving regular global consensus about which payments have occurred.
\end{assumption*}

\newpage

Understanding the truth of falsehood of this implicit assumption is of central importance: 

\begin{enumerate}
    \item If we can show in which precise way the assumption is true, we will have a clearer understanding of the canonical structure of a decentralized payment system: any such implementation would need to use tools from consensus literature, and known impossibility results would apply. 
    
    \item If proven false, we will have a better understanding of exactly where the equivalence between consensus and payment systems breaks down. Exploiting the point at which the two problems diverge may lead to new algorithms which break lower bound results inherent in any payment system implementation based on consensus of payments. 
\end{enumerate}

The purpose of this thesis is to closely scrutinize this implicit assumption. By the end of this paper, we hope to convince the reader that the latter scenario is closer to the truth: while consensus certainly implies a payment system, the reverse implication is more nuanced. The two main results of this paper are as follows: In Theorem \ref{thm:no-black-box}, we show that even though there exists a single step black box reduction from a certain model of payment systems to Byzantine Broadcast, there does not exist any black box reduction in the other direction which is significantly better than a trivial reduction.  In Theorem \ref{thrm:short-cyc}, we show that in a certain trust model we can construct payment systems where in the best case (when all participants behave honestly), only  $\mathcal{O}(\log_{d}(N))$ participants exchange messages per transaction. In particular, under certain conditions, transactions do not require global consensus to occur in the model we give. 

\section{Organization and Contributions}

We begin in part \ref{part:prelim} by giving a brief survey of the consensus literature and implementations of modern cryptocurrencies most relevant to the question at hand. The goal of this section is to (a) give context to the current understanding of consensus and cryptocurrencies, and (b) familiarize the reader with known results which will be used in subsequent parts. Results whose solutions provide useful intuition will be proved. We will also discuss how this thesis relates to prior work on trying to remove the need for frequent consensus in Bitcoin, by contrasting the Constructions in Part \ref{part:trusted-parties} with the Lightning Network \cite{lightning-network}. 

\ 

The comparison between payment systems and consensus begins in part \ref{part:payment-systems}. We start by formally defining the minimal functionality any distributed payment system should satisfy in the fault tolerant model. This will lead to the definition of the "Marker Problem", a toy problem designed to encapsulate the key ideas of a distributed payment system. After formally defining a model for this problem, the first set of results we show towards resolving the central question of this thesis is that (section \ref{sec:consensus})

\begin{enumerate}
    \item There exists a single step, black box reduction from the Marker Problem to Byzantine Broadcast. (Proposition \ref{prop:PS-to-BG})
    
    \item There exists no black box reduction from Byzantine Broadcast to the Marker Problem which is significantly better than a trivial reduction.  (Theorem \ref{thm:no-black-box})
\end{enumerate}

Already, this suggests that in the particular model we have chosen, achieving consensus is not inherently required in a payment system. 

\ 

We continue by giving some concrete constructions of solutions to payment systems in the proposed model. In section \ref{sec:message-complexity} we play closer attention to the best case message complexity\footnote{the number of messages sent when when all participants happen to behave honestly} of payment systems, in an attempt to break the inherent lower bound of $\Omega(N)$ messages per transaction in any consensus-based payment system which achieves consensus on transactions\footnote{Such consensus based solutions form the backbone of all cryptocurrencies based on a blockchain construction.}, where $N$ is the number of participants:

\begin{enumerate}
    \item We show a lower bound of $\Omega(Nf)$ on the "total" best case message complexity of payment systems in a weak adversarial model, where $f$ is the number of faults (Proposition \ref{message-complexity:prop:mclb}). We give solutions to the Marker Problem showing this lower bound is tight.  
    
    \item We use the proof of Proposition \ref{message-complexity:prop:mclb} to construct a solution to the Marker Problem "cycle coin", which has the curious locality property that certain payments require only $\mathcal{O}(1)$ messages in the best case (Construction \ref{construction:cycle-coin}). 
\end{enumerate}

In part \ref{part:trusted-parties}, we argue for reasonable trust extensions to the fault tolerant model which are natural for distributed payment systems. By building on the ideas in Construction \ref{construction:cycle-coin}, we then show that under this model, and under certain regularity assumptions about the distribution of transactions:

\begin{enumerate}
    \item There exist payment systems with best case message complexity $\mathcal{O}(\log_{d}(N))$ per transaction, where $d \geq 4$ can be chosen to any function of $N$ if one is willing to assume all participants have initial income $\Omega(d)$ (Theorem $\ref{thrm:short-cyc}$) \footnote{In a semi-egalitarian society where every process has initial income $\Omega(N^{c})$ for $c>0$, then the best case message complexity is $\mathcal{O}(1)$}. This Construction breaks the $\Omega(N)$ best case message complexity lower bound inherent to any consensus based solution which achieves consensus on transactions. 
    
    \item Using similar ideas, there exist simple and realistically plausible ways for participants to coordinate to further reduce the best case message complexity per transaction. 
\end{enumerate}

We comment that the most valuable component of part \ref{part:trusted-parties} is likely the conceptual idea of how to bootstrap multiple solutions of Construction \ref{construction:cycle-coin} in a certain trust model to create highly local transactions with small message complexity. The income regularity conditions are unnecessarily strong and not naturally motivated: there is likely significant room for improvement for developing more sophisticated randomized constructions with naturally motivated transaction distribution assumptions, and this is perhaps an interesting problem for future work. 

\ 

We conclude in part \ref{part:conclusion} by summarizing the key ideas of this paper, commenting on the practical consequence of the results obtained in an idealized model, and proposing problems for future exploration. By the end, we hope that the reader considers the equivalence between distributed payment systems and consensus to be less trivial than initially believed. 

\newpage
\part{Preliminaries}\label{part:prelim}

\section{Fault Tolerance}

\subsection{Introduction}

There has been a substantial amount of research in designing fault tolerant systems for networked processes. The purpose of this subsection is to give a brief introduction and context to this area of research. We give a survey of different model settings in which this problem is considered, state some known results of this area, and give proofs of results we rely on in later chapters. 

\

One of the first formal definitions for consensus over a distributed network was given by Lamport et al. in the 1980s \cite{LSP82}. The motivation given was similar to the following story: Imagine there are a collection of $N$ Byzantine Generals who are currently camping outside of enemy territory, and would like to decide on a plan to attack the enemy the following day. For simplicity, assume they can either decide to attack (denoted $1$) or retreat (denoted $0$). Most importantly, they need to make sure that they all agree on the same plan, or risk dividing the army in half and being defeated by the enemy. All this would be rather simple if the generals could sit together at a table and vote on a decision; however, our generals are rather shy and refuse to leave their tents. Each general $i$ will only communicate with another general via sending letters. To complicate matters further, it is known that up to $f<N$ of the generals are working in secret with the enemy, and will try to ruin any plans of the Byzantines. Each Byzantine has an opinion (either $0$ or $1$) of what the decision should be for the next day. We would like to come up with a strategy the generals can follow such that if all non-traitorous, honest Byzantines think the decision should be to attack (resp. retreat), then all honest Byzantines reach a consensus on this decision, even when the dishonest generals behave maliciously. Even if not all honest Byzantines have the same opinion, we still want to ensure that every honest Byzantine agrees on the same decision at the end (whichever that might be), to avoid half the army attacking and the other half retreating. 

\

To begin solving this problem, we need to formalize how the generals communicate and behave. We imagine a collection of \textit{processes/nodes} $P_{1},\dots,P_{N}$ (i.e. generals) which function within a \textit{network}. At each time step $t \in \mathbb{N}$, each process may send and receive messages to other processes in the network. The reliability of the network to deliver these messages is either \textit{asynchronous}, \textit{partially synchronous}, or \textit{synchronous} \cite{SJS+18}. While an asynchronous network may arbitrarily delay (but eventually deliver) a sent message, a synchronous network is guaranteed to reliably deliver every sent message by the next time period. Partially synchronous networks \cite{DLS88} model a region between these two extremes, where there are some (unknown to each process) guarantees on the delay of messages being delivered. This paper will focus on the synchronous network case. Formally, we have the following definition:

\begin{dfn}\label{prelim:dfn:sync-network:}
A synchronous network consists of a collection of \textit{processes/nodes} $P_1, \dots, P_N$. If you like, you can imagine each $P_n$ as being an algorithm running on node $n$. At each time step $t \in \mathbb{N}$, the following occurs for process $P_n$, $n \in [N]$:

\begin{enumerate}
    \item $P_n$ receives all the messages that were sent to it from another processes at time $t-1$. In particular, $P_n$ sees a list containing elements of the form $(m,n')$, where $m$ is the sent message and $n'$ is the sender. 
    
    \item Based on the contents of all received messages up until time $t$, $P_n$ can send messages to any other process. Formally, $P_n$ is a deterministic function of all past received messages. \label{prelim:dfn:sync-network:decision-step} 
\end{enumerate}
\end{dfn}

If we were to stop here, we would be considering a model for the Byzantine Generals problem in the \textit{unauthenticated} case. In this paper, we will primarily be concerned\footnote{all communication will be assumed to be authenticated unless explicitly stated.} with the \textit{authenticated} case, where we give our generals some extra help: we imagine that each process $P_{n}$ can \textit{sign} a message $m$ to produce the string $SIGN_{P_n}(m):=(m)_{P_n}$. Such a signature is assumed to be unforgeable and tamperproof: no other process can produce a substring of the form $(m)_{P_n}$ unless they copied it from a message originally signed by $P_n$. Such a formalism is designed to model real cryptographic signatures which have similar properties. Without loss of generality, we assume that in the \textit{authenticated} model, all processes sign their messages before sending them. 

\ 

We imagine that at time $t=0$, each process $P_n$ is given an initial value $v_n$ in some finite set $\mathcal{V} \supset \{0,1\}$, which is their opinion of how to attack the next day. We would like our processes to agree on some value at the end. Now, how do the enemy generals behave? What do the agreement strategies we construct have to protect against? Conceptually, it will be useful to imagine a single "adversary" which can corrupt honest generals to make them dishonest, and can coordinate the enemy generals against the honest Byzantine generals. Many different kinds of adversaries have been studied, depending on the kinds of applications considered. For example, \textit{fail-stop} models include adversaries which can cause honest processes to terminate during network execution, and \textit{failure-omission} models allow adversaries to cause some messages to be omitted from delivery. This paper will primarily be concerned with \textit{byzantine} adversaries. Formally, an $f$-Adversary is one which, at $t=0$, can look at all the initial values $\{v_n\}_{n \in [N]}$, and \text{knows} the deterministic strategy all the processes will follow. It can then pick up to $f$ processes to corrupt, making them dishonest. From this point onward, the dishonest processes can behave arbitrarily, while the honest processes behave according to some specified strategy. While dishonest processes cannot forge the signature of any honest process in the authenticated model, each dishonest process is allowed to forge the signature of another dishonest process.

\

Having formally described how both honest and dishonest generals behave, we now need to say what it means for them to reach agreement on a decision. There are two closely related problems which model this. Recall that each process $P_{n}$ is given an initial value $v_n$ at time $t=0$. At some point in the future, $P_{n}$ \textit{decides} on a value $d_n$, the decision it will ultimately follow about the battle the next day. Our solution should consist of a collection of \textit{protocols}, i.e. rules or deterministic functions, which each honest process $P_n$ follows protocol $p_n$. Even though we have not mentioned any randomness, we will give a definition which allows some probability of failure in anticipation of a future model: 

\ 

\begin{dfn}
(Byzantine Agreement):

We say that a collection of protocols $\{p_i\}_{i \in [N]}$ is a solution to the byzantine agreement problem in the presence of an $f-$Adversary with error probability $\epsilon$ if the following conditions hold with probability at least $1-\epsilon$: 

\begin{enumerate}
    \item Consistency: For any two honest processes $P_i,P_j$, we have $d_i=d_j$. 
    
    \item Validity: If $v_i=v$ for all honest processes, then $d_i=v$ for all honest processes. \label{def:byz-consistency}
    
    \item Termination: Each honest processes decides in finite time.
\end{enumerate} 

If the set of initial values is $\mathcal{V}=\{0,1\}$, we call this problem binary Byzantine Agreement.
\end{dfn}

Thus, \textit{deterministic} solutions to Byzantine Agreement, which are protocols which function within the deterministic model we have built, have error probability $\epsilon=0$.

\

A closely related problem is Byzantine Broadcast: instead of each process $P_{n}$ receiving an initial value $v_n$, only the specially selected process $P_{1}$ (the "general leader") receives an initial value $v_1$. If the general leader is honest, then all honest generals should agree with the leader's decision. If the leader is dishonest, all generals should still agree on the same value.

\newpage

\begin{dfn}
(Byzantine Broadcast):

We say that a collection of protocols $\{p_i\}_{i \in [N]}$ is a solution to the byzantine broadcast problem in the presence of an $f-$Adversary with error probability $\epsilon$ if the following conditions hold with probability at least $1-\epsilon$:

\begin{enumerate}
    \item Consistency: For any two honest processes $P_i,P_j$, we have $d_i=d_j$. 
    
    \item Validity: If $P_1$ is honest, then $d_i=v_1$ for all honest processes.
    
    \item Termination: Each honest processes decides in finite time. 
\end{enumerate} 

If the initial set of values $\mathcal{V}=\{0,1\}$, we call this problem Binary Byzantine broadcast.

\end{dfn}

If we have a solution to these problems, it will be useful to quantify exactly how good the solution is. Towards this aim, we give the following definitions: 

\begin{dfn}
Given any solution to a Byzantine problem, we define the following metrics:

\begin{enumerate}
    \item \textbf{Message Complexity}: The total number of messages sent across the network by all honest processors until consensus is reached (i.e. all processes decide). Note that we explicitly do not take message length into account\footnote{This is because we will later solely focus on the number of messages sent in an attempt to differentiate consensus solutions from payment system solutions. The literature often also considers the number of bits per message.}. 
    
    \item \textbf{Signature Complexity}: The total number of signatures sent across the network by all honest processors during network operation. Note that multiple signatures can occur in a single message, and furthermore we count a signature multiple times if it is sent in multiple messages. If a process signs and sends a signed message, i.e. $(("attack")_{P_1})_{P_2}$, this counts as multiple signatures. 
    
    \item \textbf{Round Complexity}: The maximum number of time steps taken until consensus is reached.
\end{enumerate}
\end{dfn}

Before diving into the details, we briefly give an overview of some known results for these problems. 

\ 

After introducing and defining the Byzantine problems, \cite{LSP82} showed that for deterministic processes $\{P_n\}_{n \in [N]}$ operating over a \textit{synchronous} network in the \textit{unauthenticated} case, the the Agreement and Broadcast problems are solvable if and only if $3f<N$. In contrast, in the \textit{authenticated} case, Byzantine Broadcast solvable for all $f \leq N$ \cite{DS83}, but Byzantine Agreement still only remains solvable iff $3f<N$. In an \textit{asynchronous} network, an important result from \cite{FLP85} showed that consensus is impossible even in the weak fail stop adversarial model allowing only a single process to arbitrarily terminate. It was shown in \cite{DLS88} that we can recover from this impossibility result and still reach consensus in a partially synchronous network. 

\ 

The first solutions to the Byzantine agreement problem required sending messages with a combined bit length which was exponential in $f$, namely $\mathcal{O}(N^{f+2})$, and a round complexity of $f+1$. It was later shown in \cite{FL82} that this round length is optimal, and later solutions for Byzantine agreement were given which gave polynomial message complexity \cite{DS83}. In \cite{DR85}, a lower bound of $\Omega(N(f+1))$ for the signature complexity in the authenticated model was given and shown to be tight. What about the difference in difficulty of Byzantine consensus when the set of possible initial values $\mathcal{V}$ is large, verses the binary case $\mathcal{V}=\{0,1\}$? By giving a black box reduction from the multivalued case to the binary case, \cite{TC84} showed that binary and multivalued Byzantine Agreement are essentially equivalent. We will therefore often think of the binary and multi-valued problems as being "the same". 

\ 

Following a categorization of the complexity of the Byzantine problems for deterministic solutions, researchers turned to randomness in an attempt to break these lower bounds. Many of these solutions were based on the idea of using randomness to create "public coins" which could be used to facilitate consensus \cite{R83}. Recently, by making use of a shared random string, a random oracle and cryptographic signatures, \cite{M18} built on these ideas to give a solution \textit{BBA*} to the agreement problem which runs in $\mathcal{O}(1)$ rounds in expectation, has $\mathcal{O}(N^2)$ message complexity in expectation, and tolerates $3f<N$ failures by a computationally bound adaptive adversary. Other randomized solutions have been given which tolerate up to $2f<N$ faults and also run in $\mathcal{O}(1)$ rounds in expectation \cite{KK09}. From the lower bounds side, it has been shown that any randomized solution to the agreement problem has a probability of failing which decreases at best exponentially in the number of rounds in the non-adaptive fail-stop model \cite{CMS89}, \cite{AH10}. 

\ 

The precise model in which a randomized solution to Byzantine Agreement/Broadcast is formulated relies on technical definitions of random oracles, digital signatures, and computationally bounded adversaries which are not central to the ideas of this paper. Despite this, randomized consensus algorithms are used in practical designs of modern cryptocurrencies, and the reductions we give in subsequent sections are fairly agnostic to whether we are in the deterministic or randomized model. To strike a balance, while still being able to give a comparison between payment systems and randomized solutions to consensus, we will simply state results which also apply to randomized consensus, and refer the interested reader to \cite{M18} for more details on what the model for randomized consensus looks like. We informally describe the differences between the randomized and deterministic models here, though these details will not be relevant in this paper:

\begin{enumerate}
    \item Protocols can now be randomized, i.e. they may flip random coins to decide what to do next.
    
    \item Processes have access to certain cryptographic tools, such as digital signatures through a public key interface, shared random oracles, and a public shared random string.
    
    \item The adversary can only run in polynomial time. At the beginning of each time step, it can view the entire network, and can choose some processes to behave dishonestly, up to a total of $f$ over all time. If a process is corrupted by an adversary at some time, we refer to it as dishonest (even before the time it has been corrupted). The adversary then directly controls each process.
\end{enumerate}

If a solution for a problem occurs in this model, we will call it a \textit{randomized} solution. These solutions may have error probability $\epsilon>0$. Because the randomized setting only gives processes access to more primitives and requires the adversary to be computationally bounded, a solution in the deterministic model is automatically a solution in the randomized model. 

\newpage

\

\subsection{Constructive Results}

We now present some specific solutions and relations between Byzantine Broadcast and Byzantine Agreement. 

\ 

\begin{prop}\label{prelim:prop:BB-det}
For any $f < N-1$, there exists a deterministic solution to the Byzantine Broadcast problem which can tolerate up to $f$ corruptions. 
\end{prop}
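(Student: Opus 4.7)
The plan is to exhibit an explicit protocol — essentially the Dolev--Strong signed-broadcast protocol — and then verify the three conditions of Byzantine Broadcast. Informally, $P_1$ initiates by broadcasting its signed initial value, and then the protocol proceeds for $f+1$ rounds of signature accumulation; at the end, each process decides based on whether it has a unique accumulated value.

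Concretely, I would define the protocol as follows. Each honest process $P_n$ maintains a set $V_n \subseteq \mathcal{V}$ of ``accepted'' values, initially empty. In round $1$, $P_1$ sends $(v_1)_{P_1}$ to every other process; each $P_n$ receiving such a well-formed singly-signed message adds $v_1$ to $V_n$ and in round $2$ broadcasts $((v_1)_{P_1})_{P_n}$ to everyone. In general, in round $r$ a process $P_n$ accepts a value $v$ into $V_n$ if it receives a message of the form $(\cdots(v)_{P_1}\cdots)_{P_{i_{r-1}}}$ bearing signatures of $P_1$ together with $r-1$ other distinct processes none equal to $P_n$; whenever $P_n$ accepts a new value $v$, it appends its own signature and forwards the extended chain to every process in round $r+1$. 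After $f+1$ rounds, $P_n$ decides on the single element of $V_n$ if $|V_n|=1$ and on a fixed default value (say $0 \in \mathcal{V}$) otherwise.

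Then I would verify validity, consistency, and termination. Validity is immediate: if $P_1$ is honest, then by unforgeability no chain for any value $v \ne v_1$ can ever be assembled, while every honest process sees the round-$1$ message and adds $v_1$; hence $V_n = \{v_1\}$ at the end. Termination is by construction since the protocol halts after $f+1$ rounds. The substantive step is consistency: suppose honest $P_i$ has $v \in V_i$ at the end; I want to show every honest $P_j$ also has $v \in V_j$. The argument splits by the round $r$ in which $P_i$ first accepted $v$. If $r \le f$, then $P_i$ itself extends the chain with its signature and broadcasts it in round $r+1 \le f+1$, so every honest $P_j$ receives a length-$(r+1)$ valid chain for $v$ by round $r+1$ and accepts $v$ (if it has not already). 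If $r = f+1$, then the chain witnessed by $P_i$ contains $f+1$ distinct signatures beyond $P_1$, or alternatively a chain of $f+1$ distinct signers including $P_1$; by pigeonhole at least one of these signers is honest, and that honest signer $P_k$ must have accepted $v$ in some earlier round $r' \le f$ (otherwise it would not have signed), hence by the previous case every honest process accepts $v$ by round $r'+1 \le f+1$. Thus $V_i = V_j$ for all honest $i,j$, and so all honest processes make the same decision.

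The main obstacle is the consistency case analysis, and in particular isolating why $f+1$ rounds and not fewer suffice: the pigeonhole step demands a chain long enough that at least one signer is honest, which in turn forces the honest signer to have relayed $v$ in time. This is precisely where the authenticated model is indispensable — unforgeability is what allows the length of the signature chain to serve as a lower bound on the number of distinct processes that have witnessed the value. The condition $f < N-1$ enters in ensuring that in round $f+1$ a valid chain of $f+1$ distinct signatures (plus the recipient) can even exist within $N$ processes, so no honest process is forced to reject a legitimate chain for purely combinatorial reasons.
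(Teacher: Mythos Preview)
Your proof is correct and takes essentially the same approach as the paper: both give the Dolev--Strong signed-chain protocol, and your consistency argument via the pigeonhole on the $f+1$ signers is the standard one. The only difference is that the paper presents the optimized variant with $f+1$ designated relay processes and a cap $|L|\le 2$ on extracted values, which yields the $\mathcal{O}(Nf)$ message complexity quoted and used later in the paper; your vanilla all-to-all version establishes the proposition as stated but would not directly give that bound.
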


\begin{proof}
(From \cite{DS83})

\ 

Recall that we let $SIGN_{P_i}(v)=(v)_{P_i}$ denote the string signed by $P_i$. Likewise, we let $((v)_{P_{i_{1}}}\dots)_{P_{i_k}}$ denote the string obtained by $P_{i_1}$ signing $v$, and then $P_{i_2}$ signing the resulting message, and so forth. If $i_i=1$ (i.e. the first signature is from the broadcasting process $P_1$), and $P_{i_j} \not = P_{i_w}$ for $j \not = w$ (all the signature identities are distinct), we call a message of this form \textit{proper} of length $k$, and we can refer to $v$ as the \textit{value} of the proper message. We also arbitrarily designate $f+1$ "relay" processes $P_{j_1},\dots,P_{j_{f+1}}$, where none of these are the broadcaster $P_1$. 

\

If $P_1$ is honest, it signs its initial value by computing $m=(v_1)_{P_1}$, and sends this message to all processes. 

\

Now consider the following protocol for an honest process $P_n$. Each processor keeps a list $L$ of values it has seen before. At the beginning of time step $i+1$, $P_n$ lexicographical orders all its received messages during the previous time step. $P_n$ then iterates through each message $m$ in order and does the following:

\begin{enumerate}
    \item if $m$ is not proper, or is proper but is not of length $i$, discard $m$. 
    
    \item if $value(m) \in L$, discard $m$. 
    
    \item if $|L| \geq 2$, discard $m$. 
    
    \item Otherwise, add $value(m)$ to $L$. We say that $P_n$ \textit{extracts} $value(v)$ at round $i+1$. Sign $m$ to produce a proper string $m'=(m)_{P_n}$ of length $i+1$. If $P_n$ is a relay processor, send a copy of $m'$ to every other processor. If $P_n$ is not a relay processor, and send a copy of $m'$ to every relay processor.
\end{enumerate}

At time $f+2$, if $|L|\not = 1$, $P_n$ decides the value $0$ ("sender fault"). Otherwise $P_n$ decides the unique value in $L$. 

\ 

We claim that such a construction solves the Byzantine Broadcast problem. First, suppose the broadcaster $P_1$ is honest. Then at step $1$, all honest nodes $P_n$ extract value $v_1$. Moreover, since $P_1$ never produces a signature of the form $(v')_{P_1}$ for $v' \not = v_1$, no proper message with a different value is ever sent across the network. Consequently, after $f+2$ steps, each $P_n$ has $|L|=1$, and all honest processors decide $v_1$. 

\ 

Next, we claim that all honest processors decide on the same value. In particular, we claim: suppose an honest process $P_n$ extracts value $v$ within $f+3$ steps. Then any other honest process $P_{n'}$ has either extracted $v$ or has extracted two distinct values. It then follows that either (a) all processes extract exactly $0$ or $2$ values, or (b) all processes extract the exact same value. In both cases, all honest processes agree on the same value after $f+3$ steps. 

\ 

To prove the claim, suppose that $P_{n}$ has extracted $v$ after $f+3$ steps, but some $P_{n'}$ has not. let $m$ be the proper message of length $i$ with value $v$ from which $P_{n}$ extracted $v$. Choose $n$ such that $i$ is the smallest such integer (i.e. $i+1$ was the earliest round number in which $v$ was extracted by any honest process $P_{n}$). Then we must have $i<f+1$. Otherwise, $m$ has been signed by at least one honest process at an earlier time (and hence $v$ was extracted at an earlier time), contradicting the choice of $i$. Thus $P_{n}$ extracted $v$ by time $f+1$. If $P_{n}$ is a relay process, it then transmits a proper message of length $i+1$ to every other honest process with value $v$: hence, every honest process will either extract $v$ at round $i+1 \leq f+2$, or it will not because it has already extracted two values. If $P_{n}$ is not a relay process, then since at least one relay process is honest, $P_{n}$ sends a proper message of length $i+1$ to some honest relay process $P_{n''}$ with value $v$. There are now two cases: either $P_{n''}$ extracts $v$ in round $i+1$, and we reduce to a previous case, concluding that all processes have either extracted $v$ or two values by time $i+2 \leq f+3$. If $P_{n''}$ does not extract $v$ in round $i+1$, then it has already extracted two values by round $i+1$, and we reduce to the previous case again, where all processes will have extracted two values by round $i+2 \leq f+3$. This completes the claim.

\end{proof}

Notice that in the construction above, the total number of messages sent by all honest processes is $\mathcal{O}(Nf)$. Each honest process, besides $P_1$ and the relay processors, sends at most two groups of messages to relay processors. Each relay processor sends at most 2 groups of messages to non-relay processors. This makes the total number of messages equal to 

$$(\textit{messages from process 1})+(\textit{messages from relay processes})+(\textit{messages from non-relay processes})$$

$$\leq N+2\times (f+1)\times N+2N\times (f+1)=\mathcal{O}(Nf)$$

Since each message contains $\mathcal{O}(f)$ signatures, the total number of signatures sent is $\mathcal{O}(Nf^2)$. Likewise, the number of time steps until a decision is made is $\mathcal{O}(f)$. These will be useful metrics to remember for two reasons: firstly it gives us a sense of how efficient our solution is. Secondly, there are known lower bounds of e.g. how many time steps are needed to achieve consensus in certain models, and we will make use of these in the future.

\

While we will not use it, we also mention that there is a similar result for the Byzantine Agreement problem: 

\begin{prop}
For any $f \in \mathbb{N}$ such that $3f<N$, there exists a deterministic solution to the Byzantine Agreement problem which can tolerate up to $f$ corruptions. The message and signature complexity is polynomial in $N,f$, and the round complexity is $\mathcal{O}(f)$. 
\end{prop}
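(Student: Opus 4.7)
The plan is to reduce Byzantine Agreement to $N$ parallel invocations of the Byzantine Broadcast protocol given by Proposition \ref{prelim:prop:BB-det}, and then have each process take a majority vote on the vector of broadcast outputs. Specifically, for each $i \in [N]$, process $P_i$ acts as the designated broadcaster in an instance $\mathrm{BB}_i$ of Byzantine Broadcast with its initial value $v_i$. After all instances terminate, each honest process $P_n$ holds a vector $(d_n^{(1)}, \dots, d_n^{(N)})$, where $d_n^{(i)}$ is $P_n$'s decision in $\mathrm{BB}_i$. It then outputs the majority value of this vector (breaking ties in favor of a fixed default, say $0$). Since the BB protocol of Proposition \ref{prelim:prop:BB-det} tolerates up to $f < N-1$ faults, and $3f < N$ trivially implies $f < N-1$, every instance $\mathrm{BB}_i$ satisfies its specification even when all $f$ corruptions are active across them simultaneously.

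For correctness, termination is immediate from the fact that each of the $N$ instances terminates in $O(f)$ rounds and they can be run in parallel. Consistency follows because, by the consistency property of each $\mathrm{BB}_i$, every pair of honest processes obtains the same value $d^{(i)}$ for every $i$, hence the same vector, hence the same majority. The interesting step is validity: suppose every honest process begins with the same value $v$. By the validity of Byzantine Broadcast, for every honest broadcaster $P_i$ (of which there are at least $N-f$), every honest $P_n$ decides $d_n^{(i)} = v$ in $\mathrm{BB}_i$. Thus at least $N-f$ entries of each honest process's vector equal $v$. The hypothesis $3f < N$ gives $N - f > 2f \geq N/2$ (more precisely $N - f > N/2$), so $v$ has strict majority and every honest process outputs $v$. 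This is the only place where the $3f < N$ threshold is actually used.

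For complexity, we run $N$ instances of the BB protocol. Proposition \ref{prelim:prop:BB-det} gives $O(Nf)$ messages, $O(Nf^2)$ signatures, and $O(f)$ rounds per instance. Running them in parallel yields a total message complexity of $O(N^2 f)$, signature complexity of $O(N^2 f^2)$, and round complexity still $O(f)$, since no sequential composition is needed. All of these are polynomial in $N$ and $f$, matching the statement.

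The main subtlety I would expect is justifying that running $N$ instances of $\mathrm{BB}_i$ \emph{in parallel} preserves the guarantees of each instance in the presence of a single $f$-adversary that sees all $N$ executions at once. This is handled by tagging every message with the index of the instance it belongs to (so messages from different $\mathrm{BB}_i$ cannot be confused), and observing that the adversary's power in each individual instance is still bounded by the same set of $\leq f$ corrupted processes; the per-instance analysis in Proposition \ref{prelim:prop:BB-det} then goes through verbatim. Aside from this bookkeeping, the argument is a standard reduction and I would not expect any further obstacles.
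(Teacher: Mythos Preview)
The paper does not actually supply a proof of this proposition: it is introduced with the sentence ``While we will not use it, we also mention that there is a similar result for the Byzantine Agreement problem'' and then simply stated. So there is nothing to compare your argument against on the paper's side.

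Your argument is correct and is the classical reduction from Agreement to Broadcast via interactive consistency: run $N$ broadcasts in parallel and vote. Two small remarks. First, the parallel-composition issue you flag is exactly what the paper's Proposition~\ref{prelim:prop:concat-det} is for (simulating several authenticated protocols with distinct nonces against a single $f$-adversary), so you can cite that instead of re-arguing it informally. Second, your validity step only uses $N-f>N/2$, i.e.\ $2f<N$; the stated hypothesis $3f<N$ is stronger than what your construction needs, so your sentence ``this is the only place where the $3f<N$ threshold is actually used'' slightly overstates the role of that bound. This does not affect correctness of the proposition as written, since $3f<N$ implies $2f<N$.
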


We comment again that, in contrast to the Broadcast problem, there is no solution for the Byzantine Agreement problem for any $f \in\mathbb{N}$ when $3f \geq N$\cite{LSP82}. \footnote{For example, consider the case $3f=N,f=1$}. 

\ 

For randomized Byzantine Agreement, we mention the following result:

\ 

\begin{prop}\label{prelim:prop:bin-BA-rand}
(From \cite{M18}) For any $f \in \mathbb{N}$ such that $3f<N$, and any $\gamma \geq 1$, there exists a randomized solution to the Byzantine Agreement problem for $\mathcal{V}=\{0,1\}$ which can tolerate up to $f$ corruptions. The total number of messages sent and signatures made is $\mathcal{O}(\gamma N^2)$, the number of time steps taken is $\mathcal{O}(\gamma)$ in expectation, and the probability of error is $2^{-\Omega(\gamma)}$. 
\end{prop}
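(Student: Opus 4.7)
The plan is to build a protocol in the style of Micali's BBA*, structured as $\Theta(\gamma)$ sequential ``epochs,'' each of which either ends the protocol with all honest processes in agreement, or passes an updated binary state to the next epoch. A single epoch has the following shape: every honest process $P_i$ broadcasts its current value $v_i$ (signed) to every other process; after collecting the votes received in the next round, $P_i$ either (a) commits to a bit $b$ if it sees strictly more than $2N/3$ signed votes for $b$, or (b) falls back on a shared random bit $c$ produced by a common-coin primitive indexed by the current epoch. In either case $P_i$ carries the resulting value into the next epoch. After $\Theta(\gamma)$ epochs it decides on its current value.

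I would then verify the three BA conditions. Validity is immediate from $3f<N$: if all honest processes enter an epoch with common value $v$, there are $N-f>2N/3$ honest votes for $v$, so every honest process sees the supermajority and remains at $v$; hence validity is preserved across every epoch, and if it ever holds at the start of an epoch it persists until decision. The core consistency argument rests on two observations. First, because $3f<N$, two distinct bits cannot simultaneously gather more than $2N/3$ votes in the same epoch, so within any epoch at most one bit $b^{\star}$ is ``committable.'' Second, conditioned on an epoch having a unique committable bit $b^{\star}$, every honest process either commits to $b^{\star}$ or falls back on $c$, and therefore all honest processes hold the same value at the end of the epoch whenever $c=b^{\star}$, which occurs with probability at least $1/2$ over the fresh coin. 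Once an epoch achieves agreement, validity locks it in. Iterating over $\Theta(\gamma)$ independent epochs makes the probability of never reaching agreement at most $2^{-\Omega(\gamma)}$.

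The resource bounds follow from the structure above: each epoch consists of an all-to-all broadcast ($\mathcal{O}(N^2)$ messages, each carrying $\mathcal{O}(1)$ signatures) plus one common-coin invocation, and there are $\mathcal{O}(\gamma)$ epochs, for $\mathcal{O}(\gamma N^2)$ messages and signatures and $\mathcal{O}(\gamma)$ rounds total (the expected bound can in fact be improved to $\mathcal{O}(1)$ by early termination once consensus is detected, but the worst-case $\mathcal{O}(\gamma)$ suffices for the statement).

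The step I expect to be the genuine obstacle is the construction and verification of the common coin $c$. It must satisfy two properties simultaneously: (i) no computationally bounded adaptive adversary controlling up to $f$ processes can predict or bias $c$ before honest processes have committed their votes for that epoch, and (ii) all honest processes eventually agree on the same value of $c$ despite the adversary being able to withhold coin shares. Producing such a coin is where the cryptographic primitives informally described before the statement become essential --- a shared random string, a random oracle, and unforgeable digital signatures --- and is the piece of the argument I would defer in full to \cite{M18} rather than reconstruct from scratch, since the reductions later in this paper are agnostic to the precise coin construction.
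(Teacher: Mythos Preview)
The paper does not actually prove this proposition: it is stated with attribution to \cite{M18} and thereafter used as a black box. Your sketch therefore goes well beyond what the paper provides, and it faithfully captures the structure of Micali's BBA*: epoch-based all-to-all voting with a supermajority commit rule and a common-coin fallback, the quorum-intersection argument showing at most one bit can be committable per epoch, the observation that agreement persists once reached, and the geometric tail bound over $\Theta(\gamma)$ epochs. Your decision to isolate the common coin as the genuinely hard primitive and defer its construction to \cite{M18} is exactly in the spirit of how the paper treats this result.

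One technical caveat worth flagging: the single ``vote, then commit-or-coin'' step you describe is a slight oversimplification of the actual BBA* round, which uses three graded sub-steps (one biased toward $0$, one toward $1$, then the coin). The extra sub-steps are what guarantee that if \emph{some} honest process commits to $b^\star$ in an epoch while others fall back on a coin equal to $1-b^\star$, the resulting split cannot in a later epoch allow $1-b^\star$ to become committable. In your one-step version this persistence property does not obviously hold, since the honest processes that merely \emph{voted} $b^\star$ (contributing to $P_i$'s supermajority) need not themselves have committed and may adopt the coin. This does not affect the asymptotic bounds in the statement, but it is the place where a fully rigorous writeup would need the finer three-step structure from \cite{M18}.
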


We comment that there exist solutions tolerating $f$ faults with $2f<N$ and similar complexity given in \cite{KK09}. However, the construction given in \ref{prelim:prop:bin-BA-rand} uses ideas which are perhaps more directly relevant to the implementation of modern cryptocurrencies \cite{CGMV18}. 

\

We can strengthen proposition \ref{prelim:prop:bin-BA-rand} to handle the full Byzantine Agreement problem with arbitrary initial values using the following lemma:

\newpage

\begin{lem}\label{prelim:lem:multi-to-binary}
There exists a black box reduction, using only two extra rounds and $\mathcal{O}(N^2)$ extra messages and signatures, from multi-valued Byzantine Agreement tolerating $f<\frac{N}{3}$ faults to Binary Byzantine Agreement tolerating $f$ faults, i.e. these two problems are essentially equivalent.
\end{lem}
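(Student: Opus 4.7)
\begin{proofsketch}
The plan is to implement a Turpin--Coan style reduction: prepend two rounds of all-to-all value exchange before invoking the binary Byzantine Agreement protocol, using these rounds to collapse the multi-valued input space down to a single candidate value that each honest process can then ``vote up or down'' in the binary subroutine. Each of the two extra rounds has every process send one short message to every other, accounting for the $\mathcal{O}(N^2)$ message and signature overhead.

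More concretely, in round~1 each process $P_i$ broadcasts its initial value $v_i$ to all others. After receiving these messages, $P_i$ sets a candidate $w_i := v$ if some value $v$ was received from at least $N - f$ distinct senders, and $w_i := \bot$ otherwise. In round~2 each $P_i$ broadcasts $w_i$. Based on the received round~2 messages, $P_i$ computes a local tentative value $u_i := v$ together with $b_i := 1$ if some value $v \ne \bot$ was received from at least $N - f$ senders, and otherwise sets $b_i := 0$. The processes then run the assumed binary BA protocol on the vector $(b_i)$; if its output is $1$ each honest process outputs $u_i$, and if its output is $0$ each outputs a fixed default value.

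Validity and consistency then reduce to the standard quorum-intersection arguments available in the $3f < N$ regime. For validity, if every honest process starts with the same $v$, the at least $N - f$ honest senders of $v$ in round~1 force $w_i = v$ for every honest $i$, and then similarly $u_i = v$ and $b_i = 1$; validity of the binary subroutine then forces its output to $1$, and every honest process outputs $v$. Consistency of the $u_i$'s across honest processes hinges on the observation that any two supporting quorums of size $\geq N - f$ must intersect in at least $N - 2f \geq f + 1$ processes, at least one of which is honest and sends the same round~2 message to both, forcing equality of the two tentative values.

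The main obstacle I anticipate is the case in which the binary subroutine outputs $1$ while some honest process has $b_i = 0$, since such a process has no locally defined $u_i$ and must still output the same value as the processes that do. My plan to close this gap is to augment the output rule so that an honest process with $b_i = 0$ falls back on the plurality value among its received round~2 candidates, broken by a fixed public tie-breaking rule, and then to argue that the $\geq N - 2f$ honest round~2 copies of the ``correct'' value dominate any count the dishonest processes can manufacture under $3f < N$. Making this threshold arithmetic close cleanly under $3f < N$, rather than the easier $4f < N$, is where I expect to have to work hardest, and this may force me to refine the thresholds used in defining $w_i$, $u_i$, and $b_i$, or to piggyback a small amount of auxiliary information on the binary BA messages themselves so that every honest process can recover the intended output even when $b_i = 0$.
\end{proofsketch}
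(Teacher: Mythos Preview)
Your plan is correct and follows the same Turpin--Coan template the paper uses, though the paper reproduces the original \cite{TC84} variant rather than the candidate-threshold version you sketch. In the paper's version, round~1 is the same all-to-all value exchange, but a process then declares itself \emph{perplexed} if at least $\frac{1}{2}(N-f)$ of the received values differ from its own, and round~2 consists only of perplexed processes announcing that fact; each process sets an $\mathit{Alert}$ bit if it heard at least $N-2f$ perplexed announcements, and the binary BA is run on these bits. When the agreed bit is true everyone outputs a default; when it is false, each process outputs the plurality of the round-1 values it received from processes it believes to be \emph{content}, and the analysis shows there are then at least $f+1$ honest content processes all holding the same value $v^*$, which therefore dominates the at most $f$ dishonest entries in every honest list. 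The practical difference is that the paper's version sidesteps your ``$b_i=0$ but output is $1$'' obstacle entirely, since its output rule is uniformly defined regardless of any local bit; your version is arguably more direct but needs the fallback.

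On that fallback: it does close under $3f<N$ without refining thresholds or piggybacking anything on the binary black box, so you need not worry about the harder option you mention. If the binary output is $1$, validity forces some honest $b_j=1$, so some non-$\bot$ value $v$ was received by $P_j$ from at least $N-f$ round-2 senders, hence from at least $N-2f\geq f+1$ honest ones; since your round-1 quorum argument already pins all honest $w_i$ to $\{v^*,\bot\}$ for a single $v^*$, this gives $v=v^*$ and at least $f+1$ honest round-2 copies of $v^*$. Every honest process therefore sees $v^*$ at least $f+1$ times and any other non-$\bot$ value at most $f$ times, so plurality among non-$\bot$ round-2 values recovers $v^*$. The one detail to make explicit is that $\bot$ must be excluded from the plurality count: in the range $3f<N\leq 4f$ the honest $\bot$ votes together with adversarial ones can otherwise outnumber $v^*$.
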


\begin{proof}
(From \cite{TC84})

\ 

Consider the following construction: At the first round, all honest processes send their initial value $v_i$ to every other process. We call an honest process \textit{perplexed} if during this round, at least $\frac{1}{2}(N-f)$ of the values it receives are different from its own (if it receives no value from a process, it assumes the default value $0$ was sent); otherwise we say the honest process is \textit{content}. At the second round, every honest perplexed process sends a message to every other process saying "I am perplexed". 

\ 

Now for each honest process $P_n$, define two arrays $Value_{n}[i], Perplexed_{n}[i]$. Set $Value_{n}[n]=v_n$, and $Value_{n}[j]=$the value process $P_j$ claimed to have during round $1$. Likewise, set $Perplexed_{n}[n]=True$ if $P_n$ is perplexed, and $Perplexed_{n}[j]=True$ iff process $P_j$ claimed to be perplexed in round 2. Lastly, define $Alert_n=True$ if at least $N-2f$ of the elements of $Perplexed_{n}$ are $True$, and $False$ otherwise.

\ 

Lastly, have each honest process $P_{n}$ now run the Binary Byzantine Agreement protocol with initial value $Alert_{n}$. Eventually all honest processes decide on a common value $Alert$. If $Alert=True$, then all honest processes decide on the default initial value $0$. If $Alert=False$, then $P_{n}$ decides as follows: $P_{n}$ initializes a list $L_{n}$: for each $j$ such that $Perplexed_{n}[j]=False$, $P_{n}$ adds $Value_{n}[j]$ to $L_{n}$. $P_{n}$ then decides on the most frequently occurring value in $L_{n}$. 

\ 

We claim this construction gives the required behavior:

\begin{enumerate}
    \item Termination: If the Binary Byzantine Agreement protocol terminates in $t$ steps, then this construction terminates in $t+2$ steps. 
    
    \item Validity: Suppose all honest processes have the same initial value $v$. Since $3f<N$, each process receives at most $f+1<\frac{1}{2}(N-f)$ distinct values at round $1$, and so no honest process is perplexed. Thus at most $f<N-2f$ elements of $Perplexed_{n}$ are true, and $Alert_{n}=False$ for every honest process. By validity of the Binary Byzantine Agreement protocol, all honest processes agree on the value $Alert=False$. Moreover, the value $v$ occurs at least $N-f>\frac{N}{2}$ times in $L_{n}$, so $P_{n}$ decides correctly. 
    
    \item Consistency: If $Alert=True$, all honest processes decide the same value $0$. It remains to consider the case $Alert=False$. Let $P_{n}$ be a content process with initial value $v_n$, and let $v^{*}$ be a most frequently occurring value of the initial values of correct processes. Suppose $v_n \not = v^{*}$. Then $P_{n}$ receives at least $\frac{1}{2}(N-f)$ values different from its own $v_i$ in step $1$, contradicting that $P_{n}$ is content. It follows that $v^{*}$ is unique, and that the initial value of every content process equals $v^{*}$. Since $Alert=False$, at least $f+1$ honest processes are content; otherwise, at least $N-2f$ honest processes would be perplexed, and all correct processes would have $Alert_{n}=True$ at the end of the second round, contradicting the validity of the Binary Byzantine Agreement protocol. Thus each correct process $P_{n}$ has their list $L_{n}$ consisting of at least $f+1$ copies of $v^{*}$ from honest content processes, and at most $f$ other values from dishonest processes claiming to be content. It follows that $v^{*}$ is the unique majority value and all $P_{n}$ decide on the same value. 
\end{enumerate}
\end{proof}

Another way to simplify the number of definitions we have is to notice that a solution for Byzantine Agreement implies a solution of similar complexity for Byzantine Broadcast:

\newpage

\begin{lem}\label{prelim:lem:BB-to-BA}
Suppose there exists a solution to the Byzantine Agreement problem which tolerates $f$ faults. Then there exists a solution to the Byzantine Broadcast problem which tolerates $f$ faults, takes $1$ extra round, and sends $\mathcal{O}(N)$ more messages and signatures than the original solution. In particular, we construct this solution via a black box reduction. 
\end{lem}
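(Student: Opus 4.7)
The plan is to prepend a single ``broadcast round'' to the assumed Byzantine Agreement protocol, using $P_1$'s round-$1$ message as the input to BA. Concretely, in round $1$ (the extra round) the broadcaster $P_1$ sends its value $v_1$ to every other process, producing $N-1$ messages and, in the authenticated model, one signature. Each honest process $P_i$ then defines a value $u_i$ as follows: if $P_i$ received a well-formed value $v$ from $P_1$ in round $1$, set $u_i = v$; otherwise, set $u_i$ to a fixed default (say $0$). Process $P_1$ sets $u_1 = v_1$. The processes now invoke the assumed BA protocol with initial values $\{u_i\}_{i \in [N]}$, and each honest $P_i$ outputs its BA decision as its BB decision.

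The verification of the three BB properties is straightforward and is where I would spend most of the write-up. For termination, each honest process decides exactly one round after the BA subroutine terminates, so termination of BA gives termination of BB. For validity, suppose $P_1$ is honest; then in round $1$ every honest process receives (the same) value $v_1$ from $P_1$, so $u_i = v_1$ for every honest $P_i$, including $P_1$ itself. Validity of BA then guarantees that every honest process decides $v_1$. For consistency, we simply invoke consistency of the BA subroutine on the inputs $\{u_i\}$, which is unconditional on the behaviour of $P_1$: even if $P_1$ is dishonest and sent inconsistent values in round $1$ (giving different honest processes different $u_i$'s), the BA subroutine still forces agreement.

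The overhead is exactly as claimed: one additional round, at most $N-1$ additional messages (and signatures), all contributed by $P_1$ in the prepended broadcast step. Since the reduction only uses the BA protocol as a black box on chosen inputs and prepends a single uniform broadcast, it qualifies as a black-box reduction.

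Honestly, there is no real obstacle here: the construction is the textbook ``lift BA to BB by having the sender announce its value first'' reduction, and the bounds on extra rounds, messages, and signatures are immediate from inspection. The only point worth stating carefully is that, because we substitute a default value when $P_1$'s round-$1$ message is missing or malformed, every honest process has a well-defined input to the BA subroutine regardless of how $P_1$ misbehaves, so the BA guarantees apply unchanged.
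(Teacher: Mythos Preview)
Your proposal is correct and follows essentially the same approach as the paper: prepend a single round in which $P_1$ sends its signed value to all processes, have each honest process adopt that value (or a default if none arrives) as its BA input, and then run the BA protocol as a black box. The verification of validity, consistency, and termination, as well as the $\mathcal{O}(N)$ overhead accounting, match the paper's proof.
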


\begin{proof}
Have processor $1$ send a signed copy of its initial value $v_1$ to all processors. At time step $1$, if an honest processor sees a single value $v_1$ signed and sent from $P_1$, it takes this to be its value in the Byzantine Agreement game. Otherwise it chooses a default initial value $0 \in \mathcal{V}$. Now all honest processors run the assumed solution for Byzantine Agreement, and eventually decide on a value. This construction uses one extra round and an additional $\mathcal{O}(N)$ messages and signatures. 

\ 

If $P_1$ is honest, all honest processes start with the same initial value for Byzantine Agreement, and by assumption will all decide on value $v_1$. If $P_1$ is dishonest, regardless of the initial values chosen by honest processors, because there are at most $f$ faults, all honest processors will come to a consensus on the same value by assumption of the correctness of the Byzantine Agreement solution. 
\end{proof}

\begin{cor}\label{prelim:cor:BB-rand}
For any $f \in \mathbb{N}$ such that $3f<N$, and any $\gamma \geq 1$, there exists a randomized solution to the Byzantine Broadcast problem which can tolerate up to $f$ corruptions. The total number of messages sent and signatures made is $\mathcal{O}(\gamma N^2)$, the number of time steps taken is $\mathcal{O}(\gamma)$, and the probability of error is $2^{-\Omega(\gamma)}$. 
\end{cor}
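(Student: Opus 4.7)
The plan is to chain together the three preceding results: the randomized Binary Byzantine Agreement algorithm of Proposition \ref{prelim:prop:bin-BA-rand}, the multi-valued-to-binary Byzantine Agreement reduction of Lemma \ref{prelim:lem:multi-to-binary}, and the Byzantine Agreement-to-Byzantine Broadcast reduction of Lemma \ref{prelim:lem:BB-to-BA}. Each piece slots directly into the next, so the corollary should follow by simply bookkeeping the complexities.

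First, I would invoke Proposition \ref{prelim:prop:bin-BA-rand} with parameter $\gamma$ to obtain a randomized Binary Byzantine Agreement protocol tolerating $f$ faults, using $\mathcal{O}(\gamma N^2)$ messages and signatures in $\mathcal{O}(\gamma)$ rounds with error probability $2^{-\Omega(\gamma)}$. Next, I would feed this protocol into the black-box reduction of Lemma \ref{prelim:lem:multi-to-binary} — whose hypothesis $3f<N$ is part of the present assumption — to lift it to a solution for multi-valued Byzantine Agreement, at the cost of only $2$ extra rounds and $\mathcal{O}(N^2)$ extra messages and signatures. Finally, I would apply Lemma \ref{prelim:lem:BB-to-BA} to convert this into a Byzantine Broadcast protocol, incurring $1$ further round and $\mathcal{O}(N)$ additional messages and signatures.

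Summing: the round count is $\mathcal{O}(\gamma) + 2 + 1 = \mathcal{O}(\gamma)$, while the message and signature complexity is $\mathcal{O}(\gamma N^2) + \mathcal{O}(N^2) + \mathcal{O}(N) = \mathcal{O}(\gamma N^2)$, matching the claimed bounds. The only point requiring a moment's thought is the error probability. Both reductions are deterministic wrappers around the underlying Binary BA subroutine — they introduce no new randomness and rely only on the consistency, validity, and termination of the subroutine. Hence whenever the Binary BA call succeeds (which happens except with probability $2^{-\Omega(\gamma)}$), the composed protocol satisfies the Byzantine Broadcast conditions, and the error probability is inherited verbatim.

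I do not anticipate a real obstacle: all three pieces are already established above, and the reductions were explicitly designed to compose. The only minor care needed is to confirm that Lemma \ref{prelim:lem:multi-to-binary} applies in the randomized model — which it does, since its correctness argument only invokes the three defining properties of (Binary) Byzantine Agreement, properties that Proposition \ref{prelim:prop:bin-BA-rand} guarantees with the stated probability.
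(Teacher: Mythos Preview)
Your proposal is correct and follows exactly the same approach as the paper, which simply cites Propositions \ref{prelim:prop:bin-BA-rand}, \ref{prelim:lem:multi-to-binary} and Lemma \ref{prelim:lem:BB-to-BA}. If anything, you carry out the complexity bookkeeping more carefully than the paper's one-line proof does.
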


\begin{proof}
This follows from Propositions \ref{prelim:prop:bin-BA-rand}, \ref{prelim:lem:multi-to-binary} and Lemma \ref{prelim:lem:BB-to-BA}. 
\end{proof}

\

\newpage

\subsection{Lowerbound Results} 

\ 

Having given some positive constructions, we now survey some impossibility results for these problems. The first is the following:

\begin{prop}\label{prelim:prop:round-complexity-lowerbound}
Any deterministic solution to the Byzantine Generals problem tolerating $f$ faults requires at least $f+1$ time steps in the worst case. In particular, there exists a strategy the adversary can follow which forces the number of steps taken to be $f+1$. 
\end{prop}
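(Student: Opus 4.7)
The plan is to prove the lower bound by induction on $f$, following the classical Fischer--Lynch style argument, since the synchronous authenticated model here matches the setting of that lower bound. The base case $f = 0$ is handled directly: if no rounds are permitted, no process other than $P_1$ has observed any message, so with the sender's value $v_1$ chosen adversarially, validity of Broadcast (or validity of Agreement among honest processes) cannot be enforced. Hence at least one time step is required.

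For the inductive step I would assume that any $(f-1)$-resilient deterministic solution requires at least $f$ rounds, and derive a contradiction from the supposition that there is an $f$-resilient deterministic protocol $\pi$ on $N$ processes that always terminates within $f$ rounds. The central step is to construct from $\pi$ an $(f-1)$-resilient protocol $\pi'$ on $N-1$ processes running in $f-1$ rounds, which would violate the inductive hypothesis. In $\pi'$ the remaining processes $P_1, \dots, P_{N-1}$ collectively simulate the execution of $\pi$ while treating the removed process $P_N$ as silent from round $f$ onward; since the adversary in the $f$-resilient setting was entitled to corrupt $P_N$, an execution of $\pi$ in which $P_N$ is ``crashed'' at the start of its round-$f$ transmission is a valid execution of $\pi$. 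All messages that any honest $P_i$ would have received by the end of round $f-1$ are present in $\pi'$ as well, so each $P_i$ can locally compute the same decision it would have computed in $\pi$.

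The main obstacle is verifying that $\pi'$ still satisfies consistency, validity, and termination against an arbitrary $(f-1)$-Adversary in $\pi'$. The key observation is that the $(f-1)$-Adversary against $\pi'$ can be lifted to an $f$-Adversary against $\pi$ by additionally corrupting $P_N$ and instructing it to go silent in round $f$; the behavior of honest $P_1, \dots, P_{N-1}$ in the two executions coincides through round $f-1$, and consensus/validity of $\pi$ then transfers to $\pi'$. The delicate point is ensuring that the decision rule applied at the end of round $f-1$ in $\pi'$ agrees with the decision rule of $\pi$ applied at the end of round $f$ on this particular silenced trace; this must be shown to give a well-defined deterministic protocol rather than merely a simulation, and it is where the induction actually bites.

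I expect the hardest part to be articulating the reduction cleanly enough that the decision produced by $\pi'$ is genuinely a deterministic function of its own transcript (not of round-$f$ events that only exist in $\pi$). If this simulation route proves awkward, the fallback is an indistinguishability chain argument: construct a sequence of executions, each differing from the next by a single faulty-node strategy change, starting from an execution where honest processes decide $0$ and ending at one where they decide $1$; if only $f$ rounds are available then $f+1$ such neighboring scenarios cannot be separated, forcing two honest processes in some scenario to disagree and contradicting consistency. Either route yields the claimed $f+1$ lower bound and exhibits an explicit adversary strategy forcing that many rounds.
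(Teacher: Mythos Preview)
The paper does not actually prove this proposition: immediately after stating it, the text reads ``We refer to \cite{DS83} for a proof of this fact, which is based on a generalization of a similar theorem given in \cite{FL82}.'' So there is no in-paper argument to compare against, and your proposal has to stand on its own.

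Your primary route, the inductive reduction that removes $P_N$ and shaves a round, has a real gap, and it is precisely the ``delicate point'' you flag. From an $f$-resilient protocol $\pi$ halting in $f$ rounds you want the remaining $N-1$ processes to decide after only $f-1$ rounds by simulating $\pi$ with $P_N$ silenced in round $f$. But the decision rule of $\pi$ depends on the full $f$-round transcript, which includes the round-$f$ messages exchanged among the other $N-1$ processes; silencing $P_N$ alone does not make those messages computable from round-$(f{-}1)$ information. If you try to patch this by having each honest process locally simulate round $f$ under the assumption that every other process follows the honest protocol, then the simulated round-$f$ behaviour of the $f-1$ corrupted processes will generically differ from what the lifted $f$-adversary actually has them do, and the correspondence between executions of $\pi'$ and $\pi$ collapses. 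I do not know a way to make this process-removal reduction save a round without essentially redoing the chain argument inside it.

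Your fallback is the right plan and is exactly what \cite{FL82} and its authenticated extension in \cite{DS83} do: build an indistinguishability chain of executions from an all-$0$ validity scenario to an all-$1$ validity scenario, with adjacent executions differing only in the behaviour of a single (faulty) process in a single round, so that some honest process cannot distinguish neighbours. Promote this from fallback to primary argument. The point you should make explicit is why $f$ rounds are not enough to separate the endpoints: with a budget of $f$ corruptions you can ``spend'' one per round to hide each perturbation along the chain, so if the protocol halts in $f$ rounds the decision is constant along the chain, contradicting validity at the two ends. In the authenticated setting you also need to note that the perturbations never require forging an honest signature, which is why the \cite{DS83} version goes through.
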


We refer to \cite{DS83} for a proof of this fact, which is based on a generalization of a similar theorem given in \cite{FL82}. Note that this is a lower bound result: not every possible network computation requires $f+1$ steps to reach consensus, but there is always some set of choices the adversary can make if it really wants to force $f+1$ steps to be made. For example, if the broadcaster is honest and sends the signed value $(0)_{P_1}$ to all processes, and each process $P_n$ happens to send $((0)_{P_1})_{P_n}$ to every other process in time step 2, then all processes can infer that agreement has been reached and terminate in $2$ steps. By combining the round complexity lower bound (Proposition \ref{prelim:prop:round-complexity-lowerbound}) together with the black box reduction from Byzantine Broadcast to Byzantine Agreement (Lemma \ref{prelim:lem:BB-to-BA}), we get the immediate corollary:

\begin{cor}
Any deterministic solution to the Byzantine Agreement problem tolerating $f$ faults requires at least $f$ time steps in the worst cast. In particular, there exists a strategy the adversary can follow which forces the number of steps taken to be $f$. 
\end{cor}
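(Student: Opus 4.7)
The plan is to prove this by contradiction, leveraging the black-box reduction from Byzantine Broadcast to Byzantine Agreement established in Lemma \ref{prelim:lem:BB-to-BA} together with the round-complexity lower bound for Byzantine Broadcast in Proposition \ref{prelim:prop:round-complexity-lowerbound}. Concretely, suppose for contradiction there were a deterministic Byzantine Agreement protocol tolerating $f$ faults whose worst-case round complexity is at most $f-1$. I would then feed this hypothetical protocol into the reduction of Lemma \ref{prelim:lem:BB-to-BA}, which produces a Byzantine Broadcast protocol tolerating $f$ faults at the cost of only one additional round. The resulting broadcast protocol would terminate in at most $(f-1)+1 = f$ rounds in the worst case, directly contradicting the lower bound of $f+1$ rounds for deterministic Byzantine Broadcast guaranteed by Proposition \ref{prelim:prop:round-complexity-lowerbound}. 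This forces the worst-case round complexity of any deterministic Byzantine Agreement solution to be at least $f$.

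For the second half of the statement — exhibiting an explicit adversarial strategy that forces $f$ rounds — I would lift the adversary strategy whose existence is guaranteed by Proposition \ref{prelim:prop:round-complexity-lowerbound}. Fix any deterministic Byzantine Agreement protocol $\Pi_{BA}$ tolerating $f$ faults, and let $\Pi_{BB}$ be the broadcast protocol obtained from $\Pi_{BA}$ via the reduction in Lemma \ref{prelim:lem:BB-to-BA}. By Proposition \ref{prelim:prop:round-complexity-lowerbound} there is an adversary $\mathcal{A}_{BB}$ corrupting at most $f$ processors that forces $\Pi_{BB}$ to take $f+1$ rounds. Because the reduction uses exactly one extra round at the start (in which $P_1$ signs and distributes its value, and each honest processor selects its input for the agreement phase), the execution of $\Pi_{BA}$ inside this reduction must occupy the remaining $f$ rounds. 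I would then define $\mathcal{A}_{BA}$ to be the adversary obtained by simulating $\mathcal{A}_{BB}$'s choices on the initial values produced by the reduction's first round, which yields an execution of $\Pi_{BA}$ that takes at least $f$ rounds.

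The step I expect to require the most care is the translation of the adversary strategy in the second paragraph, since I need to make sure that $\mathcal{A}_{BA}$ is a legitimate adversary for the Byzantine Agreement setting: it must commit to its at most $f$ corruptions at time $t=0$ after seeing the processors' initial inputs, whereas $\mathcal{A}_{BB}$ originally commits after seeing only $P_1$'s broadcast value. This is not a real difficulty, however, because the reduction's first round is a deterministic function of $v_1$ and the (static) identity of the corrupted set, so $\mathcal{A}_{BA}$ can simulate the reduction's first round in its head to predict the inputs and then invoke $\mathcal{A}_{BB}$. Once that bookkeeping is in place, both the lower bound and the matching explicit adversary follow immediately from the cited results.
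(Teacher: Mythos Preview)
Your proposal is correct and follows exactly the approach the paper intends: the paper presents this corollary as immediate from combining Proposition~\ref{prelim:prop:round-complexity-lowerbound} with the black-box reduction of Lemma~\ref{prelim:lem:BB-to-BA}, and your contradiction argument in the first paragraph is precisely that combination spelled out. Your second and third paragraphs go further than the paper does in unpacking how to lift the adversary strategy explicitly; the paper leaves this implicit, but your sketch of simulating the reduction's first round to determine the agreement-phase inputs and then replaying $\mathcal{A}_{BB}$'s subsequent moves is the natural way to make it precise.
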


A later paper \cite{DR85} gives lower bounds for the message complexity and signature complexity of deterministic Byzantine Broadcast: 

\begin{prop}\label{prelim:prop:sig-complexity-lowerbound}
(From \cite{DR85}) Any deterministic solution to the Byzantine Broadcast problem tolerating $f<N-1$ faults has signature complexity $N(f+1)/4$, \textit{even when all processes behave honestly}.
\end{prop}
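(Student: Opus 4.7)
The plan is to argue by contradiction, via a two-worlds indistinguishability construction of the sort that is standard for lower bounds in this area. Suppose, for a protocol $\Pi$ solving Byzantine Broadcast against $f$ faults, that the signature complexity in the fault-free execution $E$ with broadcaster $P_1$ honestly sending $v$ is strictly less than $N(f+1)/4$. In the synchronous authenticated model, every signature appearing in a sent message is received, so the total count of received signatures (summed over all processes, with multiplicity across messages) equals the signature complexity $S$. By averaging, at least $\lceil N/2 \rceil$ processes receive strictly fewer than $(f+1)/2$ signatures in $E$. Pick any such process $P_j \neq P_1$, and let $B$ be the set of identities whose signatures ever appear in any message $P_j$ receives in $E$; then $|B| < (f+1)/2$, so in particular $|B| + 1 \leq f$ for $f \geq 2$ (the edge cases $f \in \{0,1\}$ being handled directly).

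Next, I would build an alternate execution $E'$ in which the adversary corrupts the set $\{P_1\} \cup B$, using its budget of at most $f$ corruptions. In $E'$, the dishonest $P_1$ signs a different value $v' \neq v$ and broadcasts it to every honest process except $P_j$, while the dishonest processes in $B$ behave toward $P_j$ exactly as they did in $E$ (replaying the same signed messages they produced there, which they can do since their own signatures are not protected against themselves, and $P_1$'s signatures on $v$ originate from the now-corrupted $P_1$). To the remaining honest processes, $P_1$ together with $B$ play out a coherent ``$v'$-world.'' If $P_j$'s incoming view in $E'$ is bit-for-bit identical to its view in $E$, then by determinism of the protocol $P_j$ decides $v$, while the other honest processes, seeing a clean honest-looking broadcast of $v'$, decide $v'$. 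This violates consistency and completes the contradiction.

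The main obstacle, and the technical heart of the proof (following \cite{DR85}), is verifying that $P_j$'s view in $E'$ really matches its view in $E$. The delicate point is cross-contamination: an honest process outside $B$, upon receiving $v'$ from $P_1$ in $E'$, may try to send $P_j$ a signed message that was never sent in $E$ (since in $E$ that process only ever saw $v$). To rule this out, I would strengthen the choice of $P_j$ by applying the averaging argument jointly to the incoming and outgoing signature counts of each process, selecting a $P_j$ that is simultaneously ``quiet in'' and ``quiet out,'' and then arguing inductively over rounds that no honest process outside $B$ ever has cause to send $P_j$ a new message in $E'$. The $1/4$ slack in the bound is exactly what this joint averaging buys, and once indistinguishability of $P_j$'s view is established the contradiction is immediate.
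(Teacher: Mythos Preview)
Your overall strategy---find a process with a small ``signature neighborhood'', corrupt that neighborhood, and build a hybrid execution in which that process is fooled into deciding one value while the rest decide another---is the same as the paper's. But there is a genuine gap in how you set it up, and your proposed fix (joint in/out averaging over $E$) does not close it.

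The paper's proof considers \emph{two} fault-free executions from the outset: $H$ with broadcaster value $0$ and $G$ with broadcaster value $1$. For each $n$ it defines $A(n)$ to be the set of processes that either received $n$'s signature or whose signature $n$ received, \emph{in either $H$ or $G$}. If some $|A(n)|\le f$, the adversary corrupts $A(n)$ and plays $H$ toward $P_n$ and $G$ toward every other honest process. Because $A(n)$ was defined over both histories, two things hold simultaneously: $P_n$'s incoming view is exactly that of $H$ (so $P_n$ decides $0$), and every other honest process's incoming view is exactly that of $G$ (so it decides $1$). The $1/4$ in the bound comes from the two factors of $2$: one for bidirectionality of $A(n)$, one for the union over two executions.

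Your construction only ever looks at a single execution $E$. Even after your ``quiet in and quiet out'' strengthening, you can at best guarantee that $P_j$'s view in $E'$ matches $E$; you cannot conclude that the remaining honest processes decide $v'$. For that you would need their views in $E'$ to coincide with the fault-free $v'$-execution $E^*$, but $P_j$ is honest in $E'$ and behaves as in $E$, not as in $E^*$. So $P_j$'s outgoing messages (and their downstream effects) may contaminate those processes' views in a way you have no control over. Correctness of $\Pi$ then only forces the honest processes to agree with each other, which is perfectly consistent with all of them deciding $v$---no contradiction. The missing idea is precisely to fold the second execution into the definition of the neighborhood you corrupt, exactly as the paper does; once you do that, the indistinguishability holds on both sides at once and the argument goes through.
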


Proposition \ref{prelim:prop:sig-complexity-lowerbound} is a strong result: it says that even when all processes behave correctly, $N(f+1)/4$ signatures are still exchanged by any protocol tolerating $f$ faults. We note again that we are assuming all sent messages are authenticated with a signature during the sending process. 

\begin{proof}
Consider two executions of the network: in the first execution $H$, all processes are honest and the broadcaster $P_1$ has initial value $v_1=0$. In the second execution $G$, all processes are honest and the broadcaster $P_1$ has the initial value $v_1=1$. Let $M_{H,a,b},M_{G,a,b}$ be the sets of messages with their associated send times, sent by process $a$ to process $b$ during histories $H,G$ respectively. Let $A(n)$ be the set of processes which either (a) received a signature from $n$ in histories $G$ or $H$, or (b) sent their own signature to $n$ in histories $G$ or $H$. Now if $\forall n \in [N]$, we have $|A(n)| \geq f+1$, we are done, since one of history $G$ or $H$ involves sending at least $\frac{1}{2}\sum_{n \in [N]} |A(n)|/2 \geq N(f+1)/4$ signatures. Suppose for the sake of contradiction that $\exists n \in [N]$ such that $|A(n)|\leq f$. We now define a new history $H'$ which proceeds as follows: we make all the processes in $A(n)$ faulty. During the execution of $H'$, we make each process $P_{n'} \in A(n)$ send the messages $M_{H,n',n}$ to $P_n$ at the appropriate times. Towards all other processes $n'' \not = n, n'' \not \in A(n)$, we make $P_{n'} \in A(n)$ send the messages $M_{G,n',n''}$ to $P_{n''}$ at the appropriate times. 

\ 

We need to verify two properties of this construction. The first is that our construction is valid: the messages we require the dishonest processes to send do not violate the integrity of the signatures honest processes during the execution of history $H'$. Note that $n' \in A(n)$ only sends messages to $P_{n}$ which contain signatures from dishonest nodes $A(n)$, and so the adversary is able to "forge" any signatures required for messages that need to be sent to $n$. For $n'' \not = n, n'' \not \in A(n)$, note that $n''$ never receives a signature from $P_n$. Thus for $n' \in A(n)$, all the signatures in a message $m \in M_{G,n',n''}$ are either from signatures already received by $n'$, or from signatures from processes in $A(n)$ (which can be forged). 

\ 

Lastly, note that the received messages of honest process $P_n$ look identical to those in history $H$, so $P_n$ will decide $0$. However, there is at least one honest process $n'' \not = n, n'' \not \in A(n)$ whose received messages look identical to those in history $G$, and so will decide $1$, violating the consistency condition of Byzantine Broadcast. 
\end{proof}

\ 

Again by Lemma \ref{prelim:lem:BB-to-BA}, the analogous result holds for Byzantine Agreement. We also have a similar lower bound on the message complexity: 

\begin{prop}\label{prelim:prop:message-complexity-lowerbound}
(From \cite{DR85}) Any deterministic solution to the Byzantine Generals problem tolerating $f$ faults has message complexity at least $\max((N-1)/2,(1+\frac{f}{2})^2)$ in the worst case. In particular, there exists a strategy the adversary can follow which forces the number of messages sent to be $\max((N-1)/2,(1+\frac{f}{2})^2)$. 
\end{prop}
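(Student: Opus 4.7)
My plan is to prove the two bounds separately, with both following the same indistinguishability template used for the signature complexity lower bound (Proposition \ref{prelim:prop:sig-complexity-lowerbound}): exhibit two ``all-honest'' executions $H_0, H_1$ with broadcaster input $v_1=0,1$, then construct a third history $H'$ in which a carefully chosen small set $S$ of processes (with $|S| \leq f$) is made dishonest and ``splices together'' the behavior from $H_0$ and $H_1$ so that two honest processes end up with views consistent with different decisions.

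For the $(N-1)/2$ bound, I would argue by an information-flow/connectivity argument. In any all-honest execution, form the time-respecting ``message graph'' $G$, with a directed edge from $a$ to $b$ at time $t$ whenever $a$ sends a message to $b$ at time $t$. Since $P_n$ is a deterministic function of its received messages (and its initial value, which for $n\neq 1$ carries no information about $v_1$ in the Broadcast setting), if there is no time-respecting directed path from $P_1$ to some $P_n$, then $P_n$'s received messages are identical in $H_0$ and $H_1$, so $P_n$ decides the same value in both, violating validity in one of them. Hence $G$ must contain a time-respecting arborescence rooted at $P_1$ covering all nodes, which requires at least $N-1$ messages; this dominates the claimed $(N-1)/2$.

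For the $(1+f/2)^2$ bound, the goal is a pigeonhole argument on the combined message graph of $H_0$ and $H_1$. Assume for contradiction that both $H_0$ and $H_1$ use fewer than $(1+f/2)^2$ messages, and for each process $n$ let $A(n)$ be the set of other processes that either sent a message to $n$ or received one from $n$ in $H_0 \cup H_1$. I claim there must exist some honest $P_n$ with $|A(n)| \le f$: if not, then summing gives $\sum_n |A(n)| \ge N(f+1)$, but $\sum_n |A(n)|$ is bounded above by twice the total messages across both executions, i.e.\ strictly less than $4(1+f/2)^2 = (f+2)^2$, forcing $N < (f+2)^2/(f+1)$, which can be ruled out after a bit of bookkeeping (or in the complementary regime the $(N-1)/2$ bound already applies). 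Having found such $P_n$, corrupt all of $A(n)$ (at most $f$ processes). In the hybrid $H'$, the corrupted set $S = A(n)$ feeds $P_n$ exactly the messages it received in $H_0$ (so $P_n$ decides $0$), while to every other honest process it delivers exactly the messages it would have in $H_1$ (so they decide $1$), violating consistency. The signature-integrity check works as in the Dolev--Reischuk proof: messages delivered to $P_n$ use only signatures from $S$, which the adversary can forge, and messages delivered to non-$n$ honest processes never need to carry a signature of $P_n$ since those processes never sent to or received from $P_n$ in either history.

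The main obstacle will be the pigeonhole: $(1+f/2)^2$ is tight enough that the counting has to be executed carefully (e.g.\ distinguishing whether the bound or its ``$\max$''-partner is active, and handling the fact that $|A(n)|$ double-counts each ordered pair only in one direction). I would also need to verify that the splicing in $H'$ is temporally consistent --- the adversary in $S$ must choose, for each round and each recipient, whether it is simulating $H_0$ or $H_1$, and this choice is well-defined because the recipient $P_n$ is a single distinguished process and every other process uniformly sees $H_1$. Once these bookkeeping issues are pinned down, the contradiction with consistency of Byzantine Broadcast is immediate, and Lemma \ref{prelim:lem:BB-to-BA} then ports the bound to Byzantine Agreement up to constants.
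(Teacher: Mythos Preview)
The paper does not prove this proposition; it merely cites \cite{DR85} and remarks that the required adversary ``simply needs to ignore some of its received messages, and behave honestly otherwise.'' That hint already signals that the intended argument is an \emph{omission}-style one, not the two-history splicing you propose.

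Your $(N{-}1)/2$ argument overshoots. You claim a \emph{single} all-honest execution must contain an arborescence from $P_1$ (hence $N{-}1$ messages), but this is false: with $N=3$, $f=0$, let $P_1$ send $v_1$ to $P_2$ when $v_1=0$ and to $P_3$ when $v_1=1$, with the non-recipient inferring the other bit from silence. This is a valid broadcast protocol for $f=0$, uses one message per execution, and $P_3$ is unreachable from $P_1$ in $H_0$'s graph. The sound version of the argument works on the \emph{union} graph $H_0\cup H_1$ (divergence from $P_1$ propagates along union edges), giving $|H_0|+|H_1|\ge N{-}1$ and hence only $(N{-}1)/2$ for the worse of the two---exactly the stated bound, not something it dominates.

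The $(1+f/2)^2$ splice has a genuine hole. Your $A(n)$ is the direct-\emph{message} neighbourhood of $n$, but an honest $b\notin A(n)$ may still receive $P_n$'s signature \emph{forwarded through} some $a\in A(n)$ in $H_1$. In your hybrid $H'$, $P_n$ is honest and is being fed its $H_0$ view, so it emits its $H_0$ signatures; the corrupted set $A(n)$ must then deliver to $b$ the $H_1$-messages it sent in $H_1$, and those may carry $P_n$'s (different) $H_1$ signatures, which the adversary cannot forge. This is precisely why the proof of Proposition~\ref{prelim:prop:sig-complexity-lowerbound} defines $A(n)$ via \emph{signatures received/sent} rather than direct messages---that definition guarantees no honest $b\notin A(n)$ ever holds a signature of $P_n$. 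If you switch to the signature-based $A(n)$ the forgery issue disappears, but so does your counting: you recover the $\Omega(Nf)$ signature bound, not the $(1+f/2)^2$ message bound. Separately, your pigeonhole does not close when $N\le f+3$; in that regime $(1+f/2)^2$ is the larger term of the $\max$, so falling back to the $(N{-}1)/2$ bound does not suffice.
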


We comment that the adversarial strategy for proposition \ref{prelim:prop:message-complexity-lowerbound} is rather weak: the adversary simply needs to ignore some of its received messages, and behave honestly otherwise.

\newpage

\subsection{Concatenating Protocols}

A common technique for building solutions to larger problems is to use protocols for smaller problems (Byzantine Generals, secret sharing) as building blocks, and we will frequently use this technique. For example, one may wish to have processes run \textit{two} copies of Byzantine Broadcast in parallel or sequence, and then use the decisions from each consensus protocol and combine them in a particular way. However, naively combining protocols can lead to serious flaws in the concatenated protocol, and so is worth mentioning here briefly. For example, imagine simulating two "copies" of Byzantine Consensus in the authenticated setting, one after the other. In the first simulation, processes exchange authenticated messages and achieve consensus. In the second simulation, processes receive new input values and re-run the consensus protocol again. However, if the consensus protocol is blindly re-run, there is no longer a guarantee of consensus for the second round\footnote{For example, see \cite{LLR06} for impossibility results in this direction; note that these results apply to stateless composition of protocols. The same paper shows that if we include the round number of a protocol in a message, we can arbitrarily compose solutions to Byzantine Agreement}. This is because in the second simulation, dishonest processes can reuse the signatures of honest processes from the first simulation (which they would not have been able to acquire otherwise). However, this issue is easily overcome by including a \textit{nonce} in all messages which uniquely identify which simulation the message belongs to. 

\ 

The next few statements are difficult to state formally in a way which captures their full generality without introducing substantial notation, even though the ideas are very simple. Instead we choose to make these claims as high level statements, where the proof will make clear exactly when it is valid to apply them.

\

\begin{dfn}
We say a protocol solution $\{p'_n\}_{n \in [N]}$ \textbf{simulates} a collection of protocols solutions $\{(p_1)_{n}\}_{n \in [N]},\dots,\{(p_K)_{n}\}_{n \in [N]}$ with unique nonces\footnote{we assume these nonces have never been used before in the network execution} $nonce_1,\dots,nonce_k$ if  for all $n$, $p'_n$ stipulates running copies of protocols $(p_1)_n,\dots,(p_K)_n$ with their associated nonces. Recall that in the authenticated setting, we assume all sent messages are signed as $m'=(m)_{P_{j}}$. Formally, $p'$ behaves as follows:

\begin{enumerate}
    
    \item Whenever $(p')_n$ receives a message $m'$, it checks to see that all signatures $(s)_{P_j}$ contained in $m'$ are of the form $s=m \cdot nonce_i$ \footnote{here $\cdot$ denotes string concatenation with a unique symbol between the two concatenated strings} for some fixed $i$. We say such messages \textit{belong} to simulation $i$. If so, it passes the message $m'$ to the simulation $(p_i)_n$. Otherwise it ignores the message. 
    
    \item When the simulation $(p_i)_n$ receives message $m'$, it pretends that it can't see any of the $\cdot nonce_i$ components and behaves as usual. If $(p_i)_n$ wants to copy or sign the signatures of other processes and combine them in a new message, we implicitly assume that it includes the nonce identifier for the $i$th simulation. 
    
    \item Suppose protocol $(p_i)_n$ wants to send a message $m'$ on process $P_n$. $m'$ consists of collections of signed messages where each signature $(s)_{P_j}$ is of the form $s=m \cdot nonce_i$ for some fixed $i$. $p'_n$ simply sends this message over the network.
\end{enumerate}

Note that the signature and message complexity of $\{p'_n\}_{n \in [N]}$ is equal to the sums of the complexities of each simulation. The round complexity remains the same.
\end{dfn}

\
 
\begin{prop}\label{prelim:prop:concat-det} (Informal)
Suppose $p_1,\dots,p_k$ are protocol solutions which individually \textit{succeed} against an $f$-Adversary in the deterministic setting. Then the simulation of these protocols combined, $p'$, also succeeds (all of $p_1,\dots,p_K$ succeed together) against the $f$-Adversary. In the randomized setting, if we combine the Byzantine Agreement protocols $p_1,\dots,p_K$, where each $p_i$ has error $\epsilon$ and is the algorithm in \cite{M18} used to prove proposition \ref{prelim:prop:bin-BA-rand}, then $p'$ also succeeds with error $K\epsilon$. 
\end{prop}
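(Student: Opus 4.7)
\begin{proofsketch}
The plan is to argue by reduction. For each simulation index $i \in [K]$ and any $f$-adversary $\mathcal{A}'$ attacking $p'$, I will construct an adversary $\mathcal{A}_i$ attacking a standalone execution of $p_i$ such that the joint view of the honest processes in simulation $i$ inside $p'$ is distributed identically to their view against $\mathcal{A}_i$ when running $p_i$ alone. Since $p_i$ succeeds against every $f$-adversary by assumption, each simulation $i$ inside $p'$ then succeeds, and so does $p'$.

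The construction of $\mathcal{A}_i$ proceeds as follows. $\mathcal{A}_i$ corrupts the same set of at most $f$ processes as $\mathcal{A}'$ and internally simulates $\mathcal{A}'$ together with the honest executions of the other $K-1$ protocols $\{p_j\}_{j \neq i}$; because each $(p_j)_n$ is a deterministic function of its received messages, this internal simulation is well-defined. Whenever $\mathcal{A}'$ dictates that a corrupt process send a message whose signatures all carry $nonce_i$, $\mathcal{A}_i$ forwards that message to the corresponding honest process in the standalone run of $p_i$; messages tagged with other nonces are routed to the appropriate internal simulation. In the opposite direction, messages that the honest processes of simulation $i$ emit in the standalone run are fed back to $\mathcal{A}'$ as simulation-$i$ traffic. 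By construction, the bits observed by honest processes in simulation $i$ are identical under both scenarios.

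The key step, and the main obstacle, is checking that $\mathcal{A}_i$ is a valid $f$-adversary for $p_i$; specifically, it must not be able to forge signatures of honest processes in simulation $i$ using information acquired from the sibling simulations. This is where the nonce discipline is essential: the specification of $p'$ forces every honest process to accept only signatures $(s)_{P_j}$ where $s$ ends in the correct simulation nonce, and the $nonce_i$ strings are globally unique and unused before this execution. Consequently, any honest signature appearing in simulation $j \neq i$ is syntactically distinct from any valid simulation-$i$ signature, so the adversary's view of sibling simulations gives it no signature it could not have produced on its own. Combined with the fact that the corruption bound $f$ holds uniformly over simulations, this makes $\mathcal{A}_i$ a bona fide $f$-adversary and the reduction goes through.

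For the randomized Byzantine Agreement case with the BBA* protocol of \cite{M18}, I would run the same reduction, noting two points: $\mathcal{A}_i$ remains polynomial time since it only simulates a polynomial number of additional polynomial-time protocol instances, and the shared cryptographic primitives (random oracle, shared random string, signatures) behave identically in both views because nonce namespacing keeps queries and signed messages in disjoint name spaces. Each simulation then fails with probability at most $\epsilon$, and a union bound over $i \in [K]$ yields total error $K\epsilon$.
\end{proofsketch}
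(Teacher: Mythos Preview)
Your proposal is correct and follows essentially the same approach as the paper. Both arguments extract, from the combined-protocol adversary, a valid $f$-adversary for a single standalone simulation by having it corrupt the same set of processes and replay the simulation-$i$ traffic, with the nonce discipline ensuring that honest signatures from sibling simulations are syntactically useless in simulation $i$; the paper phrases this as a proof by contradiction with an explicit induction on time steps showing the sent/received message sets coincide, whereas you phrase it as a direct simulation-based reduction, but the content is the same.
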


\

\begin{proof}
(Informal:) We give the proof for the deterministic model. Suppose that the combined simulation did not succeed, so that without loss of generality, the particular simulation $(p_1)_n$ did not succeed running on honest process $P_n$ when the other simulations were run in conjunction. In particular, $(p_1)_n$ behaved (decided incorrectly, sent an incorrect message etc) in an unintended way at some time $t$. For $j \in [N]$, let $R_{j,t},S_{j,t}$ be the set of all messages received/sent resp. by $P_j$ during time $t$ which belong to simulation $1$, and let $D$ be the set of dishonest processes. By fixing any initial values to the problem in question\footnote{e.g. the initial values $\{v_i\}_{n \in [N]}$ in Byzantine Broadcast}, for an honest process $P_j$, $(p_1)_j$'s actions only depend on $R_j$. But now consider a new network execution $E$ where all honest processes only run protocol solution $\{(p_1)_{j}\}_{j \in [N]}$ corresponding to simulation $1$. Have the adversary choose the same set $D$ of dishonest processes, and mimic the execution of the $1$st simulation in $E$: Inductively, we claim that at step $t$, the sent and received messages $S_{E,j,t},R_{E,j,t}$ of any process $P_j$ in $E$ at time $t$ is equal to those of $R_{j,t},S_{j,t}$ at time $t$, modulo the identifier $nonce_i$. The case $t=0$ is immediate. At the beginning of time $t$, all processes $P_j$ in $E$ receive the same messages as in $R_{j,t}$ by induction, since these are just the sent messages of the prior round. We need to show all processes in $E$ also send the same messages at time $t$. For honest processes, this follows immediately, because they are deterministic functions of their received messages. For a dishonest process $P_j \in D$ during the execution of $E$, we \textit{stipulate} that it sends the same messages belonging to simulation $1$ that were sent by $P_j$ at this time in the original simulation. We can do this only if $P_j$ is not forging any signatures of honest processes by sending a message $m' \in S_{j,t}$. But by construction, any signatures from honest processes appearing in $m'$ belong to simulation $1$, and so must appear in $R_{j,t}$ (otherwise the adversary would not have been able to send $m'$ in the original simulation). By induction these signatures appear in $R_{E,j,t}$ (modulo $nonce_i$), and so the adversary can comply with this stipulation. It follows that all honest processes in $E$ receive the same messages as in the original simulation. Thus $P_n$ behaves incorrectly in $E$, contradicting that $\{(p_1)_i\}_{i \in [N]}$ succeeds individually.

\end{proof}

\

\newpage

\section{Cryptocurrencies}\label{sec:crypto}

\subsection{Digital Money and Decentralization}

The idea of digital money has been previously studied by cryptographers, mainly with the concerns of privacy and security in mind. For example, \cite{CFN88} gives a construction of a protocol which would allow individuals to interact with a bank in an anonymous way: Alice will be able to spend money from her account without the bank being able to tell where she is spending it. However, if Alice ever tries to spend the same digital coin twice, then she ends up revealing her identity to the bank, and the bank can prove that Alice double spent a coin (and consequently follow with legal action). Similar constructions of this kind include Alice being able to prove, for example, whether the bank is being honest or stealing her money. 

\ 

These solutions make sense when two conditions are met: (a) when there is a specially designated individual, such as a bank, who can be relied on to behave in a certain way because of a regulatory environment, and (b) when there is a realistic threat of legal action if such behavior is not observed. But what if the "bank" Alice is using is an anonymous individual on the internet? Or even if the bank is a known start-up, what if it operates in a foreign country? Even if Alice can prove that the bank is cheating, whether Alice can reasonably follow up with punitive action is a non-trivial concern. One of the key problems decentralized payment systems solve is being able to co-ordinate a large number of individuals to form a payment system, even in the absence of a strong regulatory environment. This makes such a payment system highly robust and accessible to anyone with the minimal ability to send messages across the internet.\footnote{There are a number of other security advantages which are often argued: for example, there is now no longer a central bank the government can use to change the money supply.} Because we cannot rely on any fixed subset of individuals to behave in a particular way in this context, the notion of fault tolerance against any $f$ failures (i.e. bad behavior by any $f$ participants) is therefore certainly a \textit{necessary} requirement for any decentralized payment system to have. Decentralized payment systems therefore traditionally focus on providing a protocol individuals can follow, so that even if any $f$ individuals behave badly, the payment system will still function correctly.

\

In contrast to distributed consensus in formal models, our theoretical understanding of cryptocurrencies is still relatively underdeveloped and an active area of research. For example, a significant portion of current research is focused on just understanding and formalizing Bitcoin's \cite{Bit} particular implementation of a distributed payment system \cite{SJS+18}, which operates via \textit{blockchain} consensus. Other researchers are working on adapting known solutions to Byzantine Agreement to work over the internet as distributed payment systems \cite{CGMV18}, while some members of the cryptocurrency community attempt to more informally generalize the ideas behind the blockchain protocol to achieve greater transaction speeds via "tangles" \cite{tangle}. This research area is very new and constantly evolving. We will give a brief summary of the key model differences between a practical payment system which works on the internet, and a protocol which might operate in a fault tolerant model of the previous section. We will then briefly outline the high level idea of how Bitcoin facilitates a payment system. Since the vast majority of prominent cryptocurrencies operate on similar principles, this will be a faithful representation to keep in mind when thinking about current implementations of practical payment systems. The details of this representation are listed purely to give context, but are not needed for the rest of this paper.

\ 

The key model differences between the internet and the "fault tolerant" model given in the previous section are as follows: 
\

\begin{enumerate}
    \item All participants do not necessarily know each other (the "communication graph" is not fully connected); instead, processes only know of and can message a few neighbouring processes. They therefore communicate to others by "gossiping" to neighbouring processes. For example, if $P_{a}$ wants to send the message "I pay $P_{b}$ $\$1$", $P_{a}$ will send this message to its neighbours, and request that the message be inductively forwarded to their neighbours. Fault tolerance in different network communication typologies has been studied \cite{LSP82}, but it is perhaps unclear how to model the connectivity of arbitrary participants on the internet in a robust way.  
    
    \item Processes are usually allowed to be offline: for example, they can "opt out" of participating in a protocol at arbitrary times, and then rejoin later. Defining a notion of fault tolerance in networks where certain nodes can be "online" or "offline" has been worked on in \cite{BBRP07} and \cite{PS17}.
    
    \item There are substantial financial incentives for processes to behave in non-trivial ways.  Modeling the incentives of processes is actively being studied from a game theory perspective, particularly in the context of Bitcoin \cite{TE18}, \cite{LTKS15}. For example, a number of results have showed that Bitcoin is not \textit{incentive compatible}, in the sense that even two thirds of the participants are honest, it can be more financially profitable for processes to behave dishonestly. \cite{EE18}, \cite{CKWN16}, \cite{SBBR16}, \cite{SSZ16}, \cite{NKMS16} 
\end{enumerate}

\ 

Despite these differences, Bitcoin is still an empirically successful algorithm at achieving distributed consensus between collections of anonymous individuals. At a high level, Bitcoin works as follows: at each round $i$, all honest participants will reach agreement on a \textit{block} of new transactions. This block is then appended to the list of transactions which have been agreed on previously; thus all honest participants have a consensus about who has paid whom and by how much. Thus when someone wants to make a new transaction, all honest participants can check the list of transactions they have already agreed on to verify that there is sufficient balance for the transaction to go through. 

\ 

Define a \textit{ledger} $L_0$, which consists of all transactions currently processed by the network at time $t=0$. Practically, this might consist of a single entry $(pay,NULL,Alice,100)_{Null}$, indicating that $Alice$ starts out with $100$ bitcoins at $t=0$. Now, we imagine a sequence of \textit{rounds} $i=0,1,\dots$. During each round, any number of unknown participants may try to send and receive payments. At $i=0$, Alice is the only one with a positive balance, so only she can make a payment. 

\ 

At round $i$, a participant $Bob$ with non-zero balance might want to pay Alice $1$ Bitcoin. He does this by gossiping the signed message $(pay,Bob,Alice,1,id)_{Bob}$\footnote{Here id is a unique identifier} to his neighbours, hoping that everyone will eventually receive this message. At the end of round $i$, a random leader is elected from the set of all online participants. The ability to elect a random leader is one of the central ideas in being able to extend classical solutions for Byzantine Agreement to those which work when the participants are unknown. For now, let's take it on faith that at the end of round $i$, all participants agree on a leader Charlie for round $i$. Charlie, if he is honest, will look at all the transactions he has received through gossiping. He will then try to put them all together in an extension \textit{block} $E_{i}$, where $E_{i}$ contains the signed messages of all the payments in round $i$. If he cannot include a particular transaction (because maybe Bob tried to pay Alice a bitcoin when he didn't have any balance), Charlie simply ignores this transaction. Finally, Charlie \textit{links} $E_{i}$ with the ledger for the payments in all previous rounds $L_{i-1}$, forming a \textit{chain} $L_i=E_{i} \rightarrow E_{i-1} \rightarrow \dots \rightarrow L_0$. Charlie then signs and publishes $L_i$, and everyone agrees that everyone's balance at the beginning of round $i+1$ is as reflected by the payments listed in $L_i$, provided the extension block $E_i$ is valid\footnote{all the signatures are correct, no-one has negative balance etc}. 

\ 

We now briefly try to motivate why such a construction works, without getting tied down by details. Firstly, notice that regardless of how Charlie behaves, Charlie can never cause Bob to pay Alice an amount Bob did not intend to pay: this is because Bob needs to sign any payment before it can be included in an extension block. Thus, even if Charlie is dishonest, the most damage he can do is block all transactions by not including anyone's transaction in the next block. If we assume that $2f<N$, then if we elect a random Charlie at each round, at least half of the time we will have an honest Charlie which will allow transactions to be appended to the chain. Thus we will always make some progress in processing transactions over time.  

\ 

The non-trivial part of Blockchain is electing a random leader: these details make the description just given slightly less clean. Bitcoin does this by allowing any process which would like to be the leader try to solve a random puzzle. As a concrete example\footnote{there are less computationally intensive ways to achieve random leader election}, imagine all processes have access to a random function $H$; if you can find a value $v$ such that the last $k$ digits of $H(v||E_{i} \rightarrow L_{i-1})$ are all $0$, then you can publish $v$, together with an extension $E_i$, to all processes to prove you are a leader for extending $L_{i-1}$ at round $i$. \footnote{processes who opt into this role, \textit{miners}, are given incentives to do so by receiving a monetary reward in Bitcoin if they become the leader} The assumption is that the only way to find such a $v$ is to try different random values, eventually finding such a $v$ after $2^{k}$ steps in expectation . The time at which the next leader finds such a $v$ is random. Moreover, there may be two distinct processes $P_{a},P_{b}$, each with two distinct proposed extensions $E_{a},E_{b}$, which both find valid values $v_{a},v_{b}$ for round $i$. In this case, both extensions are accepted, and the network is currently uncertain whether $L_{a}=E_{a} \rightarrow L_{i-1}$ or $L_{b}=E_{b} \rightarrow L_{i-1}$ reflects the true balance. However, at round $i+1$ a new leader $P_{c}$ is elected. $P_{c}$ needs to choose which chain and extension they would like to propose; the value $v$ which $P_{c}$ finds is, with high probability, only valid for one of $L_{a},L_{b}$, and so $P_{c}$ only extends one of these. We stipulate that honest leaders should only try to extend the longest chain, and moreover that the longest chain is the one which specifies the "true" balance of all participants. Under certain assumptions, one can show that after a few extensions, it is always clear whether a certain extension will continue to stay in the longest chain or will be forever rejected \cite{GKL15}. If we can be sure $E_i$ is always in the longest chain, then we can consider all the transactions in $E_i$ has having been \textit{confirmed}.

\ 

In the particular example we have given, we are uniformally electing a leader proportional to how many times they choose to evaluate the function $H$ in an attempt to find a valid value $v$, and we assume that no subset of participants utilizing a total of one half of the computational power of the network is all dishonest. This is like the condition $2f<N$, where $N$ now represents the total computational power of the network. Presumably an adversary cannot maliciously coordinate so much computational power\footnote{This has been shown to be a questionable assumption in practice, because large companies specialize in monopolizing computational power for mining bitcoin due to economies of scale. There are other choices for what $N$ can represent, for example the total money in the system. Then the condition $2f<N$ says that an adversary cannot coordinate more than $\frac{1}{2}$ of the total wealth of the payment system to behave in an adversarial way.}, and when they cannot, Bitcoin is in some sense secure. But suppose that $f>\frac{1}{2}N$. Then the adversary can launch the notorious $51\%$ attack: imagine Alice pays Bob $\$1$ in block $E_1$. After a few more blocks, where everyone behaves honestly, Bob sees the chain $E_5 \rightarrow \dots \rightarrow E_1 \rightarrow E_0$. Since $E_1$ is so far down in the chain, ordinarily, if more than say two thirds of the computational power is honest, $E_1$ would stay in the longest chain forever with high probability. Bob sees that Alice's transaction is therefore confirmed, so he sends the physical goods Alice purchased. However, after receiving the goods from Bob, Alice uses her $51\%$ computational power to produce a new block $E_1'$ with no transactions, and creates the chain $E_1'\rightarrow E_0$. Nothing has gone wrong yet; $E_5 \rightarrow \dots \rightarrow E_1 \rightarrow E_0$ is still the longest chain, so everyone agrees that $E_1$ (and hence Alice's payment to Bob) is confirmed, so Bob's balance is still $\$1$. But now Alice chooses to \textit{only} extend the chain $E_1' \rightarrow E_0$. Because Alice has the majority of the computational power, after some amount of time she will be able to produce a long chain $E_k \rightarrow \dots \rightarrow E_1' \rightarrow E_0$ which is, with high probability, longer than any other chain that would have been produced by everyone else \textit{even if everyone else was only extending the chain} $E_5 \rightarrow \dots \rightarrow E_1 \rightarrow E_0$. Since the new longest chain no longer contains Alice's payment to Bob, Bob loses his $\$1$. Thus Bitcoin is not secure if $f>\frac{N}{2}$. 

\newpage

\subsection{Prior Work on Reducing Consensus} 

A major drawback Bitcoin suffers from is the time it takes to create a new block extension\footnote{This is a trade-off between security and efficiency: longer block times mean more stability, but slower transactions.}, and the large space required to store all transactions on a chain. As a result, Bitcoin can only handle on the order of ten transactions per second, compared to the tens of thousands per second achieved by modern credit card services. 

\ 

It is within this context that members of the cryptocurrency community have been trying to reduce the amount of information which needs to be agreed on through consensus in order for a transaction to occur: given the use of a consensus mechanism for transactions, how can we make it more efficient? Note that there's a nuanced difference between this question, and the one which starts by asking whether global consensus on transactions is needed at all. 

\ 

It is challenging to give a complete and accurate account of the efforts which have been pursued in this area. For example, there are over 2000 cryptocurrencies which are registered on CoinMarketCap alone\footnote{As of March 2019.}. Many of these are slight variations of the blockchain protocol, tailored for a particular use case. Moreover, even when a purportedly novel solution to consensus is presented, it is usually done so in the following manner. A short whitepaper will be produced, sketching the software engineering details of how the protocol works. Sometimes there might be some discussion about various attacks against the protocol, and how they will not succeed. In rare cases, authors might heuristically argue that some statistical method will guarantee security by doing some calculations. But in almost all instances of practical releases of cryptocurrencies by the "non-academic" community, there are no proofs of security or correctness. Indeed, even the security of Bitcoin is a relatively open problem. It therefore makes it very difficult to assess which claims should be taken seriously: a software developer might publish a claim that they have a protocol with minimal use of consensus, but whether their protocol is provably secure in an adversarial model is another matter. 

\

Perhaps the most serious attempt to reduce the amount of consensus needed for transactions to occur is the lightning network \cite{lightning-network}. This is a solution designed to reduce the number of transactions which need to be globally published on the blockchain ledger. It is still a controversial solution within the blockchain community\footnote{For example, Roger Ver and other prominent cryptocurrency figures are vocal skeptics \cite{youtubevid-ver}}, but it is currently being experimented with in a semi-live setting\footnote{However, its launch has also repeatedly been delayed ever since its initial proposal in 2016.}. By pursuing the question of consensus, we will be led to constructing a payment system in part \ref{part:trusted-parties} which will turn out to have similar characteristics to the lightning network. We therefore include in the appendix a description of how the lightning network works, and how the constructions in Part \ref{part:trusted-parties} relate to this prior work. This can be examined after reading Part \ref{part:trusted-parties}, and we will draw the reader's attention to the appendix when this point comes.

\newpage
\part{Payment Systems}\label{part:payment-systems}

\section{Payment Systems}

Inspired by the task of trying to compare the consensus problem to the problem of sending payments over a distributed system, we now give a model and definition for a payment system. In particular, we would like our model to capture the minimal functionality any reasonable payment system should possess. Similar to the motivation of Byzantine Generals, we imagine $N$ participants who would like to decide on a set of rules such that, by only passing signed messages between them, they can create a system which allows people to send payments between each other. Such a system needs to be robust to the many $f$ individuals who would like to break the system due to financial incentives. 

\ 

Real currency has worked in the past as follows: individuals have unforgeable, or difficult to replicate, discrete physical objects (paper notes, shells, precious metals etc) which act as a marker that an individual has some value. These markers themselves need not have inherent value; what is important is the inductive belief that if Alice has such a marker, a second party Bob will accept the marker as payment. The only reason Bob accepts the marker as payment is because he too believes that he may find a third party Charlie who will accept the marker as payment from Bob, and so on. More formally, our definition of payment system should encapsulate the following notions

\begin{enumerate}
    \item Participants can hold some notion of "marker".
    
    \item If a participant holds a marker, this means that in the future, they can transfer the marker to someone else, and consequently lose the marker. 
    
    \item If a second party receives a marker, they know who they received it from.
\end{enumerate}

\ 

Decentralized payment systems are concerned with providing these marking functionalities in the context where some (unspecified) participants may actively try to cheat. Physical currencies have the advantage that they cannot easily be replicated. In contrast, digital decentralized currencies are less obvious to implement because if one has a digital coin, it can easily be "copied". Concretely, if Alice has some protocol she can follow which involves interacting with some network participants $P$ and constitutes paying a coin to Bob, nothing stops her from repeating the same protocol with a disjoint set of participants $P'$ and paying the same coin to Charlie afterwards. Since $P,P'$ are distinct, they have no way of knowing that Alice spent her coin twice. Note however in our analogy, that if Alice gives Bob a shell as payment, there is no a priori need for Charlie to also agree that Alice gave Bob a shell (which would be the case if consensus about transactions is reached), unless perhaps Bob later chooses to pay Charlie. 

\ 

We now formalize these notions in the same fault tolerant model as Byzantine Consensus. Formally, we consider a set of processes $P_1,\dots,P_N$ which interact via point to point messages in the \textit{synchronous}, \textit{authenticated} setting of Definition \ref{prelim:dfn:sync-network:}. We break the time steps $t=0,\dots$ into $K$ \textit{rounds}, each round consisting of $T$ \textit{steps}. Thus after $i$ rounds, $iT$ steps have passed. We imagine each process $P_n$ as starting with an initial balance of $v_{n,0} \in \mathbb{N}$ markers at the beginning of round $0$, and we will notate $v_{n,i}$ to be the number of markers $P_n$ has at the beginning of round $i$. In general, we will only be able to talk about $v_{n,i}$ for honest processes $P_{n} \in \mathcal{H}$. At the beginning of round $i$, each honest process $P_n$ receives an input $I_{n,i} \in [N]$. If $P_n$ is honest and $v_{n,i}=0$, then we stipulate $I_{n,i}=n$. Semantically, this means "$P_{n}$ would like to send a coin to $I_{n,i}$ in round $i$". For the remaining $T$ steps of round $i$, honest processors $P_{n}$ follow some protocol $p_{n}$. We will let $\mathcal{H}$ denote the set of all honest processes, and we assume that the adversary knows all future inputs of all processes. At the end of round $i$, each honest process \textit{decides} on 

\begin{enumerate}
    \item A value $v_{n,i+1} \in \mathbb{N}, v_{n,i+1} \geq 0$, its balance at the end of round $i$.
    
    \item A list of senders $S_{n,i} \subset [N]\cup\{\perp\}$ which $P_{n}$ believes sent it coins in round $i$, where we allow duplicate elements and $\perp$ is a dummy sentinel value. 
\end{enumerate}

\begin{dfn}\label{def:full-PS}
We say that a collection of protocols $\{p_i\}_{i \in [N]}$ is a solution to the payment system problem in the presence of an $f-$Adversary with error probability $\epsilon$ if, with probability at least $1-\epsilon$, for all rounds $i \in [K]$ we have

\begin{enumerate}
    \item (\textbf{safety S1, non-duplication}) $$\sum_{P_n \in \mathcal{H}} v_{n,i}\leq \sum_{P_n \in \mathcal{H}} v_{n,0}$$  and $$\forall n \in [N], v_{n,i} \geq 0$$
    
    \item (\textbf{safety S2, non impersonation}) Suppose $P_{n_1},P_{n_2} \in \mathcal{H}$, and $P_{n_1} \in S_{n_2,i}$. Then $I_{n_1,i}=n_2$. 
    
    \item (\textbf{safety S3, self-consistency}) Suppose $P_n \in \mathcal{H}$. Let $\delta=1$ if $v_{n,i}>0$, $0$ otherwise. Then $v_{n,i+1} = v_{n,i}+|S_{n,i}|-\delta$.
    
    \item (\textbf{liveness L1}) Suppose $P_{n_1} \in \mathcal{H}$. Then $R_{n_1,i}\subset \{P_{n_2} \in \mathcal{H}|I_{n_2,i}=n_1 \land v_{n_2,i}>0\}$
    
\end{enumerate}

These properties should hold over all input sequences over $K$ rounds, and over all initial value distributions $\{v_{n,0}\}_{n \in [N]}$. If at most one honest process $P_{n}$ receives an input $I_{n,i} \not = n$ per round, we call this the \textit{single transaction per round} model. Otherwise we refer to the \textit{multi-transaction per round} model.
\end{dfn}

The best way to get an intuition for these conditions is to consider the special case where $v_{1,0}=1$, and $v_{n,0}=0$ for $n \not = 1$.  In this case, after each round, call the (at most one) honest process which decides it has non-zero value the \textit{marked} process. S1 simply says at most one honest process $P_{n}$ decides it is marked. S2 says that if $P_{n}$ thinks $P_{n'}$ was the marked process in the previous round, and $P_{n'}$ is honest, then $P_{n'}$ was indeed so. S3 and L1 together say that if $P_{n'}$ is marked and receives input $n$, then $P_{n}$ will decide it is marked in the next round, and that the previous marked process was $P_{n'}$. In this special case, we such a problem the \textbf{marker game/marker problem}. Note that this is necessarily a single transaction per round model. More formally, we have:

\begin{dfn}\label{payment-systems:df:marker-game}
(\textbf{The Marker Problem}) In the same setting as before, processes communicate for $K$ rounds and have to tolerate byzantine adversaries: 
\begin{enumerate}
    \item If process $P_1$ is honest, it starts with a marker at the beginning of the first round. We say it is "Marked". Otherwise, the marked process is $\perp$ (a dummy value indicating no marked process for this round). 
    
    \item At the beginning of each round, the marked process $M \in \{P_{n}\}_{n \in [N]} \cup \{\perp\}$ is given an input $I_{i}=n \in [N]$. At the end of this round:
    
    \begin{enumerate}
        \item (consistency) At most one honest process $P_n \in \mathcal{H}$ should decide that it has the marker and become the "marked node". 
        
        \item (liveness) If $M,P_{n} \in \mathcal{H}$, $I_{i}=n$, then $P_{n}$ should decide that it now has the marker, and that the marked node in the previous round was $M$.
        
        \item  (non-impersonation) If $P_{i} \in \mathcal{H}$ becomes marked at the end of this round and $M=\perp$, then $P_{i}$ should decide that the previous marked node was $d \in (\{P_{n}\}_{n \in [N]}\cup \{\perp\})\setminus \mathcal{H}$. 
    \end{enumerate}
    
\end{enumerate}

\end{dfn}

\newpage

Similarly to the byzantine generals problem, we can define analogous metrics of solution complexity:

\begin{enumerate}
    \item \textbf{Message Complexity Per Round}: The total number of messages sent across the network by all honest processors per round. We let the \textit{amortized} round complexity be this value divided by the number of honest processors which made a payment in the round in question. 
    
    \item \textbf{Signature Complexity Per Round}: The total number of signatures sent across the network by all honest processors per round. 
    
    \item \textbf{Round Complexity}: The number of time steps taken per round, i.e. $T$. 
\end{enumerate}

The first proposition we prove is that the marker game is as general as a full payment system: if we have a solution to the marker game, then we can construct one for a payment system as well with minimal overhead. It will therefore suffice to focus on understanding this simpler problem first.

\begin{prop}\label{prop:maker-is-general}
Suppose a collection of protocols $\{p_n\}_{n \in [N]}$ deterministically solve the marker game with round length $T$ for $K$ rounds. Then there exists a solution $\{p'_n\}_{n \in [N]}$ which solves the payment problem with round length $T$ for $K$ rounds for any initial marker distribution $\{v_{n,0}\}_{n \in [N]}$. 
\end{prop}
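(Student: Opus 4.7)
The plan is to treat each unit of balance as an independent marker and run $M = \sum_{n \in [N]} v_{n,0}$ parallel copies of the marker game, then invoke Proposition \ref{prelim:prop:concat-det} to combine them safely. First fix any labeling $\sigma : [M] \to [N]$ with $|\sigma^{-1}(n)| = v_{n,0}$, and associate to each $j \in [M]$ a copy of the marker game in which $P_{\sigma(j)}$ plays the role of the initial marked process (a harmless relabeling of process indices inside that simulation). Let $p'_n$ be the protocol that simulates all $M$ copies of the given marker game solution $\{p_n\}$ with distinct nonces $\mathrm{nonce}_1,\dots,\mathrm{nonce}_M$ in the sense of Proposition \ref{prelim:prop:concat-det}, so that the signature/message budget simply sums across simulations and the round length remains $T$.

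Next specify how $P_n$ uses its round-$i$ input $I_{n,i}$. Let $H_n^{(i)} \subseteq [M]$ be the set of simulation indices $j$ in which $P_n$ currently holds the marker at the start of round $i$. If $I_{n,i} = n$, $P_n$ feeds the trivial input ``current marked node $\to$ itself'' into every simulation. Otherwise $v_{n,i} > 0$ by the spec, so $H_n^{(i)} \neq \emptyset$; $P_n$ picks the smallest $j^\ast \in H_n^{(i)}$, feeds input $I_{n,i}$ into simulation $j^\ast$, and feeds the trivial input into every other simulation. At the end of round $i$, $P_n$ decides $v_{n,i+1} = |H_n^{(i+1)}|$ and takes $S_{n,i}$ to be the multiset of ``previous marked node'' values reported by the simulations $j \in H_n^{(i+1)} \setminus H_n^{(i)}$ in which $P_n$ newly became marked this round, inserting $\perp$ whenever the marker game returns a non-honest previous holder.

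To verify the four conditions, first apply Proposition \ref{prelim:prop:concat-det} to conclude that each of the $M$ nonce-separated copies still succeeds against the same $f$-adversary. Condition S1 then follows because the marker-game consistency clause forces at most one honest holder per marker per round, giving $\sum_{P_n \in \mathcal{H}} v_{n,i} \le M$, while $v_{n,i} \ge 0$ is immediate from $v_{n,i} = |H_n^{(i)}|$. Condition S2 follows because any honest $P_{n_1} \in S_{n_2,i}$ arose from some simulation whose liveness and non-impersonation clauses jointly force $I_{n_1,i} = n_2$. Condition S3 is an accounting identity: in each round, $P_n$ gains exactly one marker per simulation in which it newly became marked (contributing $|S_{n,i}|$) and loses exactly one marker precisely when $\delta = 1$, namely when $v_{n,i} > 0$ and $I_{n,i} \neq n$ (here the selected $j^\ast$ transfers). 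Condition L1 is immediate from the marker-game liveness applied to simulation $j^\ast$.

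The main obstacle to watch for is whether the adversary can break an individual simulation by coordinating signatures across simulations; this is exactly what the nonce mechanism in Proposition \ref{prelim:prop:concat-det} is designed to rule out, and its proof shows that each simulation's view in the combined run can be realized in a standalone run against the same corrupted set, so correctness of each marker-game copy transfers. Everything else is bookkeeping, and the per-round time cost remains $T$ because all $M$ simulations proceed in parallel.
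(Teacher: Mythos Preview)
Your approach is essentially identical to the paper's: set $V=\sum_n v_{n,0}$, run $V$ nonce-separated parallel copies of the marker-game solution (one per unit of initial balance, with the appropriate process relabeled as the initial holder), invoke Proposition~\ref{prelim:prop:concat-det} so each copy retains its guarantees, define $v_{n,i}$ as the number of simulations in which $P_n$ is currently marked, and route the input $I_{n,i}$ into one chosen marked simulation while feeding the trivial self-input to the rest. The paper does exactly this, with only cosmetic differences in notation.

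One small bookkeeping slip in your S3 check: you write that $P_n$ ``loses exactly one marker precisely when $\delta=1$, namely when $v_{n,i}>0$ and $I_{n,i}\neq n$.'' But the definition sets $\delta=1$ whenever $v_{n,i}>0$, regardless of whether $I_{n,i}=n$. With your $S_{n,i}$ taken only over $H_n^{(i+1)}\setminus H_n^{(i)}$, the identity $v_{n,i+1}=v_{n,i}+|S_{n,i}|-\delta$ fails when an honest process with positive balance receives the self-input $I_{n,i}=n$: no marker leaves, yet $\delta=1$. The fix is immediate (still select a $j^\ast$ in that case and record the resulting self-transfer in $S_{n,i}$), and the paper's own treatment of this edge case is equally terse, so this is a wrinkle in the accounting rather than a gap in the idea.
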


\begin{proof}

Let $V=\sum_{n \in [N]} v_{n,0}$. The key idea is to have $p'_{n}$ simulate $V$ copies of the marker game solution. By permuting process labels, we can assume without loss of generality that for any $n \in [N]$, we have a solution to the marker game where $P_{n}$ starts as the marked process. For each $n \in [N]$, we obtain a group $g_n$ of $v_{n,0}$ copies of a solution to the marker game $\{p_{n'}\}_{n' \in [N]}$, where $P_n$ starts as the marked process. We define $p'_{n}$ to be the process which simulates the $V$ copies of the marker game for each group $g_{n'}, n' \in [N]$. By proposition \ref{prelim:prop:concat-det}, we know that we can assume each individual simulation has the same guarantees the marker game provides. 

\

$p'_{n}$ interacts with its simulations as follows: At each round $i$, $p'_{n}$ \textit{decides} that it has value equal to the number of times it is marked in each of its simulations, i.e. the number of \textit{marked simulations} it has. If a marked simulation $p_{n'}$ of $p'_{n}$ decides that $s$ was the marked process in round $i$, then $p'_{n}$ includes $s$ in $S_{n,i}$. Now suppose $p'_{n}$ receives input $I_{n,i}$ in round $i$. $p'_{n}$ does the following: First it tries to choose a marked simulation, otherwise it chooses no simulation (If $v_{n,i}>0$, such a marked simulation can be chosen by definition). It then sets the input register of simulation $p_{n}$ to be $I_{n,i}$, and continues to simulate all $V$ simulations for $T$ steps. This completes the description of $p'_{n}$. 

\ 

For correctness, it suffices to check each condition:

\begin{enumerate}
    \item (\textbf{safety S1, non-duplication}) $$\sum_{P_n \in \mathcal{H}} v_{n,i}= \textit{ the total number of marked simulations across all processes }  \leq V = \sum_{P_n \in \mathcal{H}} v_{n,0}$$ where the inequality follows by consistency of each simulation. $\forall n \in [N], v_{n,i} \geq 0$ by construction.  
    
    \item (\textbf{safety S2, non impersonation}) Suppose $P_{n_1},P_{n_2} \in \mathcal{H}$, $P_{n_1} \in S_{n_2,i}$. This means $P_{n_1},P_{n_2}$ run each of their simulations honestly. Then some marked simulation $k \in [V]$ of $P_{n_2}$ decided that $P_{n_1}$ sent its marker to $P_{n_2}$ in round $i$. Suppose for the sake of contradiction that $I_{n_1,i} \not = n_2$. Then $P_{n_1}$ never set $n_2$ as input to simulation $k$ in round $i$. If $P_{n_1}$ was marked in simulation $k$ at the beginning of round $i$, then by liveness $I_{n_1,i} \not = n_2$ is marked in the subsequent round, but this contradicts consistency of simulation $k$ at round $i+1$. If $P_{n_1}$ was not marked in simulation $k$ at the beginning of round $i$, then either non-impersonation (if there is no marked process in simulation $k$ at the beginning of round $i$) or liveness and consistency (if there is some marked process in simulation $k$ at round $i$ not equal to $P_{n_1}$) of the $k$th simulation will give as the required contradiction.
    
    \item (\textbf{safety S3, self-consistency}) Suppose $P_n \in \mathcal{H}$. $v_{n,i}$ is equal to the number of $P_{n}$'s marked simulations at the beginning of round $i$. By liveness and by construction, this is equal to $v_{n,i-1}$ plus the number of newly marked simulations in round $i-1$, minus one if $P_{n}$ sends a marker in some simulation in round $i-1$ (which by construction occurs if $P_{n}$ has non-zero balance).  
    
    \item (\textbf{liveness L1}) Suppose $P_{n_1},P_{n_2} \in \mathcal{H}$. Then $I_{n_2,i}=n_1$ implies (by liveness) that by the end of the round, a marker game simulation belonging to $P_{n_1}$ decides $P_{n_2}$ was the previous marked process in this simulation, so $P_{n_2} \in R_{n_1,i}$ by construction. 
    
\end{enumerate}

\end{proof}

One can apply a similar idea to show that a particular solution to the marker game implies one for a full payment system in the randomized model, with error probability at most $V\epsilon$ (union bound). Note that in this construction, the round complexity is preserved. If in round $i$ this construction makes payments $\{I_{n,i}=r_n\}_{n \in [N]}$, then the associated message and signature complexities of this construction are equal to the sums of the corresponding complexities in each of the marker simulations which initiate these payments.

\ 

Before moving to analyzing the marker problem in more detail, we preemptively comment on some concerns the reader may have about the definitions given

\begin{enumerate}
    \item Why do we only support sending discrete markers instead of general numerical values as payment? Would we not need to send $10^6$ markers just to make a single payment in some instances? Notice that traditional hard currency systems solve this problem by giving different discrete objects (i.e. notes) different values. By issuing a wide variety of different denominations\footnote{For example, imagine a monetary system consisting of the notes $1,2,4,\dots,2^n$. Then we can pay any integer value in the range $[0,2^{n+1}-1]$ using at nost $n$ notes}, people are able to pay a wide range of values in cash with only a few notes/coins. In the construction of proposition \ref{prop:maker-is-general}, it is relatively simple to see how we might mark different simulations as corresponding to different values.
    
    \item Why do we require \textit{both} the sender $P_{n_1}$ and the recipient $P_{n_2}$ to be honest for a payment to go through? We have two justifications for this: firstly, in a payment between two parties, it is generally implicit that both parties desire the payment to go through ($P_{n_1}$ wants to exchange the payment for goods, and $P_{n_2}$ would like to receive the payment). Thus each party can only harm their interests by behaving dishonestly. The second reason is semantics: one might be concerned by the notion that $P_{n_1}$ is \textit{unable} to pay $P_{n_2}$ unless $P_{n_2}$ adheres to a strict set of rules, but this seems to undermine the ability of $P_{n_1}$ to be able to spend its money any way it chooses. Such a concern may be justified if we were dealing with hard currency backed by a central bank, where all merchants must e.g. accept paper currency as payment as decreed and enforced by rule of law. However, in the context of a decentralized payment system, where there is no central authority, there is no means by which to enforce that participants are \textit{forced} to accept e.g. Bitcoin as a means of payment. We require $P_{n_2}$ to be honest in order for $P_{n_1}$ to be able to make a payment to $P_{n_2}$, just as much as we require $P_{n_2}$ to believe that a cryptographic string of $0$'s and $1$'s is something it should provide goods to $P_{n_1}$ in exchange for. Nothing stops $P_{n_2}$ from simply turning off their laptop and deciding never to accept Bitcoin as a means of payment again. 
    
    \item Why do we require non-impersonation? Why would there a priori be a problem with someone making a payment while claiming to be someone else? We give two reasons: firstly, this definition is easy to enforce: if $P_{n_1}$ wants to send a payment to $P_{n_2}$, simply require that they first indicate this fact to $P_{n_2}$ by sending a signed statement of this intention. The second reason is that this condition gives us the convenience of being able to easily reduce unauthenticated communication to payment systems\footnote{we will say more on this in the next section}.

    \item Why is the fault tolerant model the correct model of security for payment systems? This is an excellent question. Because crytocurrencies have been so strongly influenced by consensus work, the fault tolerant model is the trust model which has been traditionally assumed by the cryptocurrency community when analyzing the security of distributed payment systems. The primary reason we define payment systems in this model is therefore so that we can analyze the consensus assumption in the same context it is traditionally thought to hold in. Recall in section \ref{sec:crypto} that we justified why such trust robustness is \textit{necessary} in the context of the internet: when any particular subset of participants might be anonymous, or cannot reliably be legally challenged by law enforcement, we cannot rely on any particular subset of individuals to behave honestly. However, we never argued why such trust robustness is \textit{sufficient}. For example, consider the following pathological example:  For $P_1$ to send a marker to $P_N$, each of $P_{2},\dots,P_{N-1}$ is given two choices, either (a) or (b). If at least one of these processes choose (a), then $P_1$'s marker is sent through successfully. If all of $P_2,\dots,P_{N-1}$ choose (b), then the marker is randomly distributed to one of the participants $P_2,\dots,P_{N-1}$. If we prescribe all honest processes must choose option (a), then the payment system just described is fault tolerant for any $N-3$ faults! Yet it is very unclear why we would expect any self-interested party to choose option (a) instead of option (b); said differently, it is unclear why honesty is the "default" behavior\footnote{In contrast, "honesty" is a far more natural default state in the context of more traditional problems thought to be solved by consensus: For example, one common traditional motivation for Byzantine Agreement is that if you have a collection of processors in an airplane which communicate together, you would like to make sure that they still collectively reach a valid decision even when there are some hardware faults. In this context, it is clear why honesty is the default behavior -- the processors were designed to be honest in the first place.}. Thankfully, the payment systems we primarily consider will tolerate any $N$ faults, and will not suffer such pathological cases. Regardless, we encourage the reader to think about precisely what the fault tolerant model of trust means in any particular payment system. 
\end{enumerate}


\newpage

\section{Payment Systems in the Fault Tolerant Model}\label{sec:consensus}

In this section, we will study the relation between the marker problem and the byzantine consensus problem in the synchronous fault tolerant model. While this setting is less natural for real world payment systems, it has been well studied for Byzantine consensus, and provides a benchmark on which to compare the problems of consensus and payment. 

\subsection{Reductions}\label{subsec:reductions}

We begin by analyzing reductions between the two problems. The first observation is that a solution to the Byzantine Broadcast problem immediately implies a solution for the marker problem through a black box reduction: in particular, Byzantine Broadcast is at least as hard as the marker problem. 

\begin{prop}\label{prop:PS-to-BG}
There exists a black box reduction from the marker game to Byzantine Broadcast: Suppose $\{p_n\}_{n \in [N]}$ is a solution which solves the Byzantine Broadcast problem with error $\epsilon$, tolerating $f$ faults, with round complexity $T$, and message and signature complexities $mc,sc$. Then for any $K \in \mathbb{N}$, there exists a $K$ round solution to the marker problem with message and signature complexity $mc,sc$ per round, error $K\epsilon$, and where each round uses $T$ steps of communication. 
\end{prop}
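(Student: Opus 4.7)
The plan is to spend one instance of the assumed Byzantine Broadcast protocol per round, with the currently believed marker holder playing the role of the designated broadcaster. Tagging each BB simulation with the round index as a nonce allows Proposition~\ref{prelim:prop:concat-det} to be invoked, so that the $K$ composed simulations each retain the individual BB guarantees (with failure probability $\epsilon$ per copy).

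Concretely, every honest process maintains a shared variable $M$, initialized to $P_{1}$. At the start of round $i\in[K]$ all honest processes simulate a fresh copy of BB with value set $\mathcal{V}=[N]$, designated broadcaster $M$, and nonce $i$; if $M$ happens to be honest it enters BB with its input $I_{i}$, otherwise the broadcaster may do anything. At the end of the round each honest process reads the agreed decision $r_{i}$, updates $M\leftarrow P_{r_{i}}$, and the process $P_{r_{i}}$ (if honest) declares itself to be the marker holder for round $i+1$, with previous holder equal to the broadcaster that was used in round $i$.

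Verifying the three marker-game conditions comes down to a single invariant: whenever the actual marker $M$ at round $i$ is honest, the believed broadcaster used in round $i$ coincides with it. The base case $i=0$ is immediate from the initialization rule for $M$ given in Definition~\ref{payment-systems:df:marker-game}. The inductive step uses BB validity: if the previous believed broadcaster was honest, then $r_{i-1}$ equals its input $I_{i-1}$, and $P_{r_{i-1}}$ is the unique honest process to declare itself marked at the end of round $i-1$; if the previous believed broadcaster was dishonest, then the new believed broadcaster $P_{r_{i-1}}$ can only match the actual $M$ when $P_{r_{i-1}}$ honestly declares itself marked. Consistency of the marker problem then follows from BB consistency (all honest processes agree on $r_{i}$), liveness follows from BB validity applied to an honest broadcaster delivering input $I_{i}=n$ to an honest $P_{n}$, and non-impersonation follows from the contrapositive of the invariant: when the actual $M$ at round $i$ is $\perp$, the believed broadcaster must be outside $\mathcal{H}$, so the newly marked $P_{r_{i}}$ names a dishonest previous holder.

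The main subtlety I foresee is exactly this invariant, together with the care required for composing authenticated protocols so that signatures produced by honest processes in earlier BB instances cannot be replayed in later ones; Proposition~\ref{prelim:prop:concat-det} handles the composition cleanly once round-index nonces are inserted. Per-round counts then inherit directly from a single BB call: $mc$ messages, $sc$ signatures, and $T$ communication steps per round, while a union bound over the $K$ BB copies gives the overall error bound $K\epsilon$.
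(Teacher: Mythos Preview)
Your proposal is correct and follows essentially the same approach as the paper: run one Byzantine Broadcast instance per round with the currently marked process as leader, let all honest processes update their view of the marker to the decided value, and invoke Proposition~\ref{prelim:prop:concat-det} plus a union bound for the $K\epsilon$ error. Your write-up is in fact slightly more explicit than the paper's, in that you isolate the invariant (the believed broadcaster equals the actual marker whenever the latter is honest) and spell out the nonce-based composition concern, whereas the paper leaves both implicit.
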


\begin{proof}
We concatenate together $K$ rounds of Byzantine Broadcast, where at each round the new marked process is the leader of the next round. The marked process $M$ of round $i$ broadcasts their input $I_{M,i}$ to the network at the beginning of the round, and all honest participants run Byzantine Broadcast with $M$ as the leader. At the end of the round, all honest processes decide that $I_{M,i}$ is the new marked process, and continue inductively. We initialize process $1$ as being the marked in the first round. We need to check that such a construction satisfies the conditions of definition \ref{payment-systems:df:marker-game}. Note by proposition \ref{prelim:prop:concat-det}, we can assume that with probability at least $1-K\epsilon$, all simulations of Byzantine Broadcast runs successfully. Conditioning on this event, we can assume that after the $i$th round, all honest processes reach consensus on a new marked process for round $i+1$. Let $M$ be the marked process in round $i$, and let $I_{M,i}=n$. 

\ 

\begin{enumerate}
    \item consistency: All honest processes reach consensus on a single value at round $i$, so at most one honest processes decides it is marked at round $i$. 
    
    \item liveness: if $M$ is honest, then all honest processes decide on the proposed value $n$. If $n$ is honest, then $P_{n}$ decides $n$ at the end of round $i$, so liveness is satisfied. Since $P_{n}$ is honest, it knows that $M$ was the previous marked process because it decided this in the previous round.
    
    \item non-impersonation: If $P_{n}$ is honest, then inductively $P_{n}$ decided on the leader/marked process of round $i$ at the end of round $i-1$. If $M \in \mathcal{H}$, then $n$ decides $M$. If $M \not \in \mathcal{H}$, then $n$ decides $M \not \in \mathcal{H}$. 
\end{enumerate}

\end{proof}

This black box reduction gives two immediate corollaries:

\ 

\begin{cor}
For any $f<N, K \in \mathbb{N}$, there exists a deterministic solution to the $K$ round marker game which tolerates up to $f$ faults, has message complexity $\mathcal{O}(Nf)$ per round, signature complexity $\mathcal{O}(Nf^2)$ per round, and $T=\mathcal{O}(f)$ steps per round. 
\end{cor}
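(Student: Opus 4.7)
The plan is to chain together two results already in hand: Proposition \ref{prop:PS-to-BG} (the black-box reduction from the marker game to Byzantine Broadcast) and Proposition \ref{prelim:prop:BB-det} (a deterministic Byzantine Broadcast protocol tolerating $f<N-1$ faults). The complexities needed for this corollary were actually already computed in the paragraph immediately following the proof of Proposition \ref{prelim:prop:BB-det}: the Dolev--Strong style construction uses $\mathcal{O}(Nf)$ messages, $\mathcal{O}(Nf^2)$ signatures, and terminates within $\mathcal{O}(f)$ rounds. So most of the work is already done; this corollary is a direct composition.

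Concretely, I would instantiate Proposition \ref{prop:PS-to-BG} with the Byzantine Broadcast solution of Proposition \ref{prelim:prop:BB-det}. Since the subroutine is deterministic we have $\epsilon = 0$, hence the total error $K\epsilon$ of the $K$-round marker game construction is also $0$, for any $K$. The message complexity $\mathcal{O}(Nf)$, signature complexity $\mathcal{O}(Nf^2)$, and round complexity $\mathcal{O}(f)$ carry over verbatim per round, because Proposition \ref{prop:PS-to-BG} simply concatenates $K$ independent executions of Byzantine Broadcast (one per marker game round, with the current marked process acting as the broadcaster for its input $I_{M,i}$). This immediately yields the claimed bounds.

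The one boundary issue is that Proposition \ref{prelim:prop:BB-det} is stated for $f<N-1$, whereas the corollary permits any $f<N$. The only case to handle separately is therefore $f=N-1$, where at most one process is honest. I would dispose of this case with a direct, trivial construction: the unique honest process maintains its own marker locally and simply updates its state according to its input, ignoring all other participants. All four conditions of Definition \ref{payment-systems:df:marker-game} become either vacuous or immediate in this regime, and the message and signature complexities are trivially within the advertised $\mathcal{O}(Nf)$ and $\mathcal{O}(Nf^2)$ budgets.

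Since the heavy lifting is already contained in the two cited results, I do not expect any real obstacle; the only thing to be careful about is making the $f=N-1$ edge case explicit so that the statement of the corollary is covered over its full parameter range.
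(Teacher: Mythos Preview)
Your proposal is correct and takes essentially the same approach as the paper: the paper's proof simply cites Proposition \ref{prelim:prop:BB-det} and Proposition \ref{prop:PS-to-BG} and declares the result to follow. Your treatment of the boundary case $f=N-1$ is in fact more careful than the paper's own one-line proof, which silently ignores the discrepancy between the hypothesis $f<N$ here and the hypothesis $f<N-1$ in Proposition \ref{prelim:prop:BB-det}.
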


\begin{proof}
This follows from known solutions to Byzantine Broadcast (proposition \ref{prelim:prop:BB-det}) and the black box reduction (proposition \ref{prop:PS-to-BG}).
\end{proof}

\begin{cor}\label{relations:cor:reduction-rand-sol}
For any $f$ such that $3f<N$ and any $K,\gamma \in \mathbb{N}$, there exists a randomized solution to the $K$ round marker problem which tolerates up to $f$ faults, has message complexity $\mathcal{O}(\gamma \log(K)N^2)$ per round, and $T=\mathcal{O}(\gamma \log(K))$ steps per round, with error $2^{-\Omega(\gamma)}$. 
\end{cor}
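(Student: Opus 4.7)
The plan is straightforward: combine the randomized Byzantine Broadcast solution of Corollary \ref{prelim:cor:BB-rand} with the black box reduction of Proposition \ref{prop:PS-to-BG}. The latter converts any Byzantine Broadcast solution with per-invocation error $\epsilon$, round complexity $T$, and message complexity $mc$ into a $K$-round marker problem solution with per-round round complexity $T$, per-round message complexity $mc$, and total error $K\epsilon$ (via a union bound over the $K$ concatenated invocations, justified by the randomized analogue of Proposition \ref{prelim:prop:concat-det} since the underlying Byzantine Broadcast is the BBA*-based protocol of \cite{M18}). The only free parameter is the Byzantine Broadcast error budget $\gamma'$, which I will tune so that the $K$-fold union bound leaves us with the desired $2^{-\Omega(\gamma)}$ error.

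Concretely, I would invoke Corollary \ref{prelim:cor:BB-rand} with parameter $\gamma' := c(\gamma + \log K)$ for a sufficiently large constant $c > 0$. The resulting Byzantine Broadcast protocol tolerates any $f$ with $3f < N$, runs in $T' = \mathcal{O}(\gamma + \log K)$ rounds, sends $\mathcal{O}((\gamma + \log K) N^2)$ messages per invocation, and fails with probability at most $2^{-\Omega(\gamma + \log K)}$. Feeding this into Proposition \ref{prop:PS-to-BG} yields a $K$-round marker problem solution whose per-round complexities match those of the Byzantine Broadcast subroutine, with total error at most $K \cdot 2^{-\Omega(\gamma + \log K)} = 2^{\log K - \Omega(\gamma + \log K)} \leq 2^{-\Omega(\gamma)}$, where the last inequality holds once $c$ is chosen large enough that the hidden constant absorbs the $\log K$ term contributed by the union bound.

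It then remains only to loosen the bounds to the form stated in the corollary: since $\gamma, \log K \geq 1$ we have $\gamma + \log K \leq 2 \gamma \log K$, so the per-round round complexity $\mathcal{O}(\gamma + \log K)$ fits inside $\mathcal{O}(\gamma \log K)$ and the per-round message complexity $\mathcal{O}((\gamma + \log K) N^2)$ fits inside $\mathcal{O}(\gamma \log(K) N^2)$, as required. There is no serious obstacle here; the one step worth thinking about carefully is selecting $\gamma'$ so that it is simultaneously large enough to pay off the $\log K$ loss from the union bound and small enough to preserve the stated asymptotic bounds, but this is a routine calibration.
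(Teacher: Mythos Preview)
Your proposal is correct and follows essentially the same approach as the paper: invoke the randomized Byzantine Broadcast of Corollary~\ref{prelim:cor:BB-rand} with an inflated error parameter, then apply the reduction of Proposition~\ref{prop:PS-to-BG} (which itself rests on Proposition~\ref{prelim:prop:concat-det}) and absorb the factor-$K$ union bound. The only cosmetic difference is that the paper takes $\gamma' \approx \gamma \log K$ directly, whereas you take $\gamma' = c(\gamma + \log K)$ and then loosen $\gamma + \log K \le 2\gamma\log K$ to match the stated bounds; your calibration is in fact slightly sharper but lands in the same place.
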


\begin{proof}
By corollary \ref{prelim:cor:BB-rand}, there exists a randomized solution to Byzantine Broadcast tolerating $f$ faults with signature and message complexity $\mathcal{O}(\gamma \log(K)N^2)$, running in $\mathcal{O}(\gamma \log(K))$ steps, and with error probability $\frac{1}{K}2^{\Omega(\gamma)}$. We now apply proposition \ref{prelim:prop:concat-det} to concatenate the Byzantine Agreement protocols together. 
\end{proof}

We comment that this black box reduction from payment systems to Byzantine Broadcast leads to a kind of "global consensus" about the state of the network at the end of each round, and is very reminiscent of the idea of a public ledger incorporated in cryptocurrencies which use blockchains \cite{ZXD+17}.  On the other hand, while a Byzantine Broadcast solution can easily solve the marker problem, it is less clear how a solution to the marker problem relates to Byzantine Broadcast. In particular, the safety and liveness assumptions of the marker problem are far more local. For example, even in  $N-1$ payments, it does not follow that by the end of each round, honest processes will be in agreement with which process has been marked. Indeed, to get a sense of this locality, consider the following proposition:

\begin{prop}\label{prop:payment-system-does-not-imply-messages}
There exists a deterministic solution to the marker problem which tolerates $f$ faults for $f<N$, such that even when all nodes behave honestly, after $N-1$ rounds where nodes $1,\dots,N-1$ have each received a payment, node $N$ does not receive a single message during all rounds. 
\end{prop}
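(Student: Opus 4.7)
The plan is to combine Proposition \ref{prop:PS-to-BG} with the DS83 Byzantine Broadcast protocol from Proposition \ref{prelim:prop:BB-det}, but to run the broadcast each round only on a \emph{restricted} participant set instead of all $N$ processes. Concretely, the participants for the broadcast in round $i$ will be the union of all past marker holders (the ``chain of custody'' so far) together with the intended recipient for round $i$. In the designated honest execution the marker travels $P_1 \to P_2 \to \cdots \to P_{N-1}$, so this participant set at round $i$ is exactly $\{P_1,\dots,P_{i+1}\}$, which never contains $P_N$; hence $P_N$ neither sends nor receives a message during the $N-1$ rounds.

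More concretely, I would define the construction as follows. Each process stores, as part of its state, a signed chain of transfers starting from $P_1$'s initial claim on the marker. At round $i$ the current holder $P_c$ runs the DS83 Byzantine Broadcast of Proposition \ref{prelim:prop:BB-det} as leader, broadcasting the value $I_{c,i}=j$, but where the broadcast is executed only among the participants $G_i \cup \{P_j\}$ with $G_i$ the set of all previous marker holders. The leader piggybacks the current chain of custody on the first broadcast message so $P_j$ can bootstrap its state; at the end of the simulated broadcast, every honest participant appends the transfer ``$P_c \to P_j$'' to its local chain, and $P_j$ decides that it now holds the marker. Between rounds, a fresh nonce (in the sense of the ``simulation'' definition preceding Proposition \ref{prelim:prop:concat-det}) is attached to all messages so that signatures from one round cannot be replayed in another.

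After defining the construction I would verify the three conditions of the marker game (Definition \ref{payment-systems:df:marker-game}) by essentially the same argument used in Proposition \ref{prop:PS-to-BG}: consistency follows from the BB consistency guarantee applied inside $G_i \cup \{P_j\}$, liveness follows from BB validity when both $P_c$ and $P_j$ are honest, and non-impersonation follows from the fact that the new holder's identity is determined by the signed chain together with the BB decision. The locality observation is then immediate: the DS83 protocol only exchanges messages among its declared participants, so in the all-honest execution in question the only messages in round $i$ travel inside $\{P_1,\dots,P_{i+1}\}$, and $P_N$ is never contacted in rounds $1,\dots,N-1$.

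The main obstacle I anticipate is not the honest-case message count but the fault-tolerance bookkeeping. The DS83 protocol on $k$ participants tolerates only $f<k-1$ faults \emph{internal} to those participants, whereas the proposition asserts tolerance for $f<N$ over the whole network. The fix I would argue is two-fold: first, a faulty node that is never added to $G_i$ can be ignored, since the honest participants of the round-$i$ broadcast ignore any message whose signature does not correspond to a declared participant; second, to bound the number of internal faults one can have $P_j$ refuse to accept the marker unless its received chain of custody exhibits enough honest ancestry (e.g.\ enough distinct signatures), so that the broadcast's effective fault budget scales with $|G_i|$. Formalizing exactly this condition while still proving liveness in all-honest rounds is the delicate step; once this is in place, the desired observation about $P_N$ follows directly from the construction, since no rule of the protocol ever causes a message to be addressed to a node outside the current participant set.
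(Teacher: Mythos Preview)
Your approach differs substantially from the paper's, and the gap you flag as ``the delicate step'' is in fact fatal to the construction. Take $f=1$ with $P_1$ dishonest. In round $1$ your participant set is $\{P_1,P_j\}$; nothing prevents $P_1$ from running one restricted broadcast on $\{P_1,P_2\}$ with outcome $P_2$ and, in parallel, another on $\{P_1,P_3\}$ with outcome $P_3$. The two honest recipients never interact, each sees a valid two-party broadcast and a valid length-one chain of custody, both decide they are marked, and consistency is violated. Your proposed fix---refuse unless the chain exhibits ``enough distinct signatures''---would have to demand at least $f+1$ signatures to guarantee an honest witness in $G_i$, but then in the all-honest execution $P_2$ cannot accept in round $1$ (one signature $<f+1$), destroying liveness in precisely the run the proposition is about. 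No threshold resolves this tension, because $G_i$ is determined by the possibly adversarial history and carries no structural guarantee of intersecting the honest set; restricting Byzantine Broadcast to an adversarially chosen subset simply does not yield the global consistency the marker game requires.

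The paper proves the proposition instead via the \emph{cycle coin} construction (Construction~\ref{construction:cycle-coin}). Processes sit on a fixed directed cycle, and a payment from $M$ to $P_n$ is a signed chain that must be countersigned by every process on the cycle arc from $M$ to $P_n$. Consistency is topological rather than quorum-based: an honest $P_n$ accepts a chain $c$ only if it has never itself signed a partial chain of weight $\geq weight(c)$, and \emph{any} chain reaching a process beyond $P_n$ on the cycle must carry $P_n$'s own signature at that weight. Hence if two honest processes both accepted, one would lie on the arc to the other and would have had to countersign---but then it would have refused to accept. This holds for arbitrary $f<N-1$ with no assumption that any intermediary is honest. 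For the execution in the proposition the marker moves along adjacent cycle edges, each arc has length one, only the current sender and recipient exchange messages, and the last cycle vertex is never contacted.
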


We will prove proposition \ref{prop:payment-system-does-not-imply-messages} later in section \ref{sec:message-complexity}. This proposition alone doesn't necessarily say anything: maybe there is some way we can combine solutions to the marker problem which efficiently solves Byzantine Broadcast, and shows us that payment systems are "just as hard" to construct as solutions to the Byzantine Broadcast problem. We have just given a black box reduction from the Marker Problem to Byzantine Broadcast; a natural question is therefore whether there exists a reduction the other way:

\begin{question}\label{reduction:question-1}
Does there exist a reduction from Byzantine Broadcast to the Marker Problem?
\end{question}

To answer this question, we first need to formalize what we mean by a reduction. In particular, notice that we will allow ourselves even more power than is allowed by a black box reduction: our reduction will be able to look at the "state" of each process and its sent and received messages and possibly take advantage of the inner workings of a solution to the marker problem. If it is indeed true that every solution of a payment system essentially makes use of consensus, then by controlling the payment system and looking at its operation, we should be able to make all processes reach agreement on some value. We formalize this idea with the following definition:

\begin{dfn}
A reduction from Byzantine Broadcast to a solution $\{p_n\}_{n \in [N]}$ of the Marker Game which tolerates $f$ faults is a protocol solution $\{p'_n\}_{n \in [N]}$ to the Byzantine Broadcast problem tolerating $f$ faults, with the following exceptions 

\begin{enumerate}
    
    \item Processes can no longer directly send messages across the network, but can only interact through access to $W$ uniquely identifiable simulations of the Marker Game solution $\{p_n\}_{n \in [N]}$ (by permuting labels, we allow different simulations to correspond to different processes starting with the marker). Each reduction step $i$ corresponds to a single round execution of the marker problem solutions. \label{condition:no-msg}
    
    \item At the beginning of reduction step $i$ (round $i$ in the marker problem), process $P_n$ can write to the input register $I_{n}$ of the $j\in [W]$th simulation of the marker game, and can view what the $j$th simulation decides. 
    
    \item At step $i$, process $P_n$ can also look at the inner workings of simulation $w \in [W]$: for example, the memory state of the simulation, and all sent and received messages for $P_n$ indirectly made by running the $w$th simulation. 
\end{enumerate}

We call such a solution a \textit{strong} reduction. If we remove condition \ref{condition:no-msg} and allow processes to send messages across the network as well, we call this a \text{weak} reduction.

\ 

We say that Byzantine Broadcast is strongly/weakly reducible to the Marker Problem with fault tolerance $f$ if Byzantine Broadcast with fault tolerance $f$ is reducible to every solution $\{p_n\}_{n \in [N]}$ of the Marker Problem tolerating $f$ faults. Again, we comment that if a black box reduction from Byzantine Broadcast to the Marker Problem exists, then such a reduction would show Byzantine Broadcast is strongly reducible to the Marker Problem. Thus this definition of reduction is weaker than that of a black box reduction.
\end{dfn}

For strong reductions, we restrict processes to only be able to use the interface of the payment system to avoid the trivial case where processes simply ignore the payment system and implement Byzantine Broadcast by sending messages to each other. What should we be aiming for in such a reduction? Since the reduction from the Marker Problem to Byzantine Broadcast only used a single black box call to Byzantine Broadcast per round, and terminated immediately, at best we might hope that conversely there is a clever way to send around $\mathcal{O}(1)$ coins in the network in $\mathcal{O}(1)$ steps such that we force every honest process to essentially agree on something. Certainly, we should be able to do this with any payment system which works by reaching consensus at the end of every round. On the other extreme, focus on the easier problem of reducing just \textit{binary} Byzantine agreement to the Marker Problem. We know there exist solutions to Byzantine Agreement \textit{in the unauthenticated setting} (and without any help of a marker solution black box) which run in $\mathcal{O}(f)$ steps and send $\mathcal{O}(f^2)$ messages when $f=\mathcal{O}(N)$ \cite{DF+82}. These solutions imply the existence of naive reductions of similar complexity which make no use of the essential properties of a payment system whatsoever. Indeed, notice if we were to have access to a payment system which can send arbitrary values\footnote{By a previous footnote, certainly we can send an encoding of value $v \in \mathbb{N}$ with $\mathcal{O}(\log(v))$ markers.}, we can naively, via a payment system, simulate having access to a network which can send unauthenticated messages: if all processes start out with an arbitrarily large amount of money, an honest process $P_{n_1}$ can send an unauthenticated message $m$\footnote{the only values we really need to send over the network for the solution of binary Byzantine Broadcast are the names of different processes and the values $0,1$} to $P_{n_2}$ by sending value $ENCODE(m)$ to $P_{n_2}$, where $ENCODE(m)$ converts $m$ into an integer. In the next step, $P_{n_2}$ receives this value and determines that the sender was $P_{n_1}$. 

\

It turns out that the answer to question \ref{reduction:question-1} is no; in particular, given simulation access to a payment system, in general we can't do much better at achieving consensus than we would have, had we not been given a payment system at all. Payment systems don't significantly help us solve consensus: 

\begin{thm}\label{thm:no-black-box}

\

For any $f \in \mathbb{N}$ such that $3f<N$, Byzantine Broadcast is not \textit{strongly} reducible to the Marker Problem in fewer than $\frac{f+1}{3}$ simulation steps. If $W$ copies of the Marker Problem are used in the reduction, then the number of steps of any reduction is at least $\max(\frac{f+1}{3},\Omega(\frac{N}{fW}))$. Consequently, there also does not exist any black box reduction from Byzantine Broadcast to the Marker Problem in fewer than this many black box steps.

\ 

For any $f \in \mathbb{N}$ such that $3f<N$, Byzantine Broadcast is not \textit{weakly} reducible to the Marker Problem in fewer than $\frac{f+1}{3}$ simulation steps. If $W$ copies of the Marker Problem are used and $M$ messages are sent in the reduction, then the number of steps of any reduction is at least $\max(\frac{f+1}{3},\Omega(\frac{N-M}{fW}))$. Consequently, there also does not exist any black box reduction, even when allowed to send additional authenticated messages across the network, from Byzantine Broadcast to the Marker Problem in fewer than this many black box steps. 
\end{thm}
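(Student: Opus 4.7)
The strategy is to construct adversarial marker protocol solutions and apply indistinguishability-style arguments, leveraging known lower bounds for Byzantine Broadcast.

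For the $(f+1)/3$ bound on simulation steps (both strong and weak reductions), the plan is to exhibit a specific marker protocol with small constant round length (say 3) tolerating $f$ faults when $3f<N$. Since each simulation step of the reduction is defined to be exactly one round of the marker protocol, any reduction using $T$ simulation steps of such a marker protocol is then precisely a Byzantine Broadcast protocol using $3T$ actual communication rounds (plus, in the weak-reduction case, $M$ additional rounds of direct authenticated messages that contribute only to volume, not round count). Applying Proposition \ref{prelim:prop:round-complexity-lowerbound} forces $3T \geq f+1$. The main obstacle here is constructing the constant-round marker protocol in the authenticated model: one would use a transfer-plus-confirmation-broadcast structure, exploiting the fact that the marker game only requires maintaining a single transferable piece of state and detecting duplicate transfers, which is a considerably weaker property than full Byzantine consensus on arbitrary values.

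For the $\Omega(N/(fW))$ bound on strong reductions, I would construct a highly local marker protocol extending the construction underlying Proposition \ref{prop:payment-system-does-not-imply-messages}, in which each honest round involves communication with only a constant number of processes. In a strong reduction, honest processes can only gain information through the marker interface; per simulation per round, the honest portion of the protocol informs $O(1)$ honest processes directly, while each of the at most $f$ dishonest processes can inject marker-interface messages into a bounded set of other participants. Consequently, over $T$ simulation steps with $W$ parallel simulations, at most $O(TfW)$ honest processes can have received any information about the Byzantine Broadcast input. Then I would compare two executions differing only in the broadcaster's initial value: any honest process outside the ``informed'' set sees input-independent transcripts, and hence must decide the same value in both, violating validity in at least one execution. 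This forces $TfW = \Omega(N-f)$, hence $T = \Omega(N/(fW))$. For weak reductions, the $M$ additional authenticated messages can inform at most $M$ further processes directly, modifying the bound to $\Omega((N-M)/(fW))$.

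The main obstacle is making the information-locality claim precise. The subtlety is that dishonest processes running inside a marker simulation may deviate arbitrarily and attempt to flood the network; however, in a strong reduction honest recipients only treat messages as meaningful when they parse as legitimate marker-protocol events, so their views remain governed by the marker protocol's message-passing structure. A careful hybrid argument that tracks, simulation by simulation and round by round, exactly which honest processes' views can possibly depend on the broadcaster's input should close the argument. A secondary, milder obstacle is ensuring the constructed adversarial marker protocols simultaneously satisfy all three conditions of Definition \ref{payment-systems:df:marker-game} while retaining either the short-round property (for the first bound) or the extreme-locality property (for the second bound).
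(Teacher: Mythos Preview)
Your approach to the $(f+1)/3$ bound is exactly the paper's: exhibit a marker solution with round length $3$ tolerating $f$ faults when $3f<N$ (this is Construction \ref{construction:simple-det}), and invoke the $f+1$ round lower bound for Byzantine Broadcast (Proposition \ref{prelim:prop:round-complexity-lowerbound}).

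For the $\Omega(N/(fW))$ and $\Omega((N-M)/(fW))$ bounds, the paper does something much simpler than your direct information-flow argument. It reuses the \emph{same} Construction \ref{construction:simple-det}, which has per-round message complexity $\mathcal{O}(f)$ and signature complexity $\mathcal{O}(f^2)$. A strong reduction running $T$ simulation steps on $W$ copies therefore yields a genuine Byzantine Broadcast protocol sending $\mathcal{O}(TWf^2)$ signatures; invoking Proposition \ref{prelim:prop:sig-complexity-lowerbound} (the $\Omega(Nf)$ signature lower bound, which holds even when all processes are honest) immediately gives $T=\Omega(N/(fW))$. The weak case is handled identically but counting messages and invoking Proposition \ref{prelim:prop:message-complexity-lowerbound}.

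Your proposed route has a genuine gap. You want a marker protocol in which ``each honest round involves communication with only a constant number of processes,'' but the reduction controls the marker inputs: at every round it may choose the recipient that maximizes message spread. Proposition \ref{message-complexity:prop:mclb} in fact shows the worst-case per-round message count of any marker solution is $\Omega(f)$, so an $\mathcal{O}(1)$-local protocol in the sense you need does not exist. (The cycle-coin construction behind Proposition \ref{prop:payment-system-does-not-imply-messages} has $\mathcal{O}(1)$ messages only for \emph{some} inputs; its worst case is $\mathcal{O}(N)$.) If you replace $\mathcal{O}(1)$ by the achievable $\mathcal{O}(f)$, your information-flow hybrid becomes a re-derivation of the message-complexity lower bound in this specific setting, which is precisely what the paper avoids by citing Propositions \ref{prelim:prop:sig-complexity-lowerbound} and \ref{prelim:prop:message-complexity-lowerbound} as black boxes. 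This also explains the otherwise mysterious factor of $f$ in your bound: it enters not through dishonest flooding (in the validity scenario there need be no corruptions at all) but through the unavoidable per-round message cost of the marker protocol itself.
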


\ 

We note that since Byzantine Broadcast is reducible to Byzantine Agreement (proposition \ref{prelim:lem:BB-to-BA}), the analogous statements hold for Byzantine Agreement as well. We prove this theorem in two parts. First, we give a solution to the marker problem with round complexity $3$, $\mathcal{O}(f^2)$ signature complexity per round, and $\mathcal{O}(f)$ message complexity per round. 
 
\ 

\newpage

\begin{construction}\label{construction:simple-det}

\begin{enumerate}

    \item Arbitrarily designate $3f+1$ "broadcast" processes.
    
    \item (Step 1): At round $i$, the marked process $M$ sends the signed message $m_{n}=$"I want to pay process $n$; here is a proof $p$ I am the marked process at round $i$" to each of the broadcast processes.
    
    \item (Step 2): If a broadcast process receives a valid message of the form $m_{n}$, it signs and sends a message of the form $r_{n}="M$ pays $n$ at round $i"$ to process $P_{n}$. If it receives multiple valid messages of this form, it sends at most one message to a potential recipient during this round.
    
    \item (Step 3) If a process $P_{n}$ receives at least $2f+1$ signed messages of the form $r_{n}$, then it decides it has the marker and $M$ was the sender. Otherwise it decides that it does not have the marker. The $2f+1$ signed messages constitute the "proof" that $P_{n}$ is the marked process at round $i+1$. (During the first round, the "proof" that $P_1$ is the leader is just an empty string). 
\end{enumerate}

\end{construction}

\begin{prop}\label{reduction:proof-simple-det}
For any $K \in \mathbb{N}$ and any $f$ such that $3f<N$, construction \ref{construction:simple-det} gives a solution to the marker problem which tolerates $f$ faults, has round complexity $T=3$, and message and signature complexity per round $\mathcal{O}(f),\mathcal{O}(f^2)$ resp.
\end{prop}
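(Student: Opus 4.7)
The plan is to verify each of the four marker-game conditions of Definition~\ref{payment-systems:df:marker-game} together with the stated complexity bounds. The bookkeeping on complexity is the easiest part and I would dispose of it first: in Step 1 the marked process sends one message to each of the $3f+1 = \mathcal{O}(f)$ broadcast processes, each message carrying $M$'s own signature together with the previous-round proof of $2f+1$ signatures, giving $\mathcal{O}(f)$ messages and $\mathcal{O}(f^2)$ signatures; in Step 2 each honest broadcast process sends at most one $r_n$ (which also attaches $m_n$) by the construction rule, contributing another $\mathcal{O}(f)$ messages and $\mathcal{O}(f^2)$ signatures; Step 3 sends nothing; and $T=3$ by inspection.

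For consistency I would use a pigeonhole argument on honest broadcast processes. Suppose for contradiction that two distinct honest processes $P_{n_1},P_{n_2}$ both decide that they hold the marker at the end of some round $i$. Each must have received at least $2f+1$ valid $r_n$ messages from broadcast processes. Since there are only $f$ dishonest broadcast processes, each recipient must be supported by at least $(2f+1)-f=f+1$ \emph{honest} broadcast processes. By the rule that an honest broadcast process sends at most one $r_n$ message per round, the honest supporters of $P_{n_1}$ and $P_{n_2}$ are disjoint, so there must be at least $2(f+1)=2f+2$ honest broadcast processes, contradicting that only $2f+1$ of the $3f+1$ broadcast processes are honest.

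For liveness I would argue inductively: assume that at the beginning of round $i$ the honest marker $M$ possesses a valid proof from round $i-1$, with the base case $i=1$ handled by the empty-string proof attached to $P_1$. If $I_{M,i}=n$ and $P_n$ is honest, then $M$ sends the same $m_n$ to all $3f+1$ broadcast processes; all $2f+1$ honest broadcast processes forward a signed $r_n$ to $P_n$; hence $P_n$ collects at least $2f+1$ valid signatures, decides that it now holds the marker, and reads off $M$ as the previous marker. The proof $P_n$ thus assembles is precisely what maintains the induction hypothesis for round $i+1$.

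The main obstacle, and the most delicate part of the argument, is non-impersonation, which I would handle by strengthening the inductive hypothesis to the claim: in every round, at most one process (honest \emph{or} dishonest) possesses a valid proof of being marked. This follows from the same honest-signature-budget accounting as consistency, since any valid proof consumes at least $f+1$ of the at most $2f+1$ honest broadcast signatures issued that round, and $2(f+1) > 2f+1$. The base case is immediate because only $P_1$'s identity validates against the empty-string proof at round $1$. With this invariant, if an honest $P_n$ becomes newly marked at round $i$ while no honest marker existed in round $i-1$, then the unique $M$ whose signature appears on the $m_n$ messages supporting $P_n$ must be dishonest, giving non-impersonation. The one technicality to address here is that dishonest processes cannot recycle signatures across rounds to forge proofs; this is handled by the round-indexed nonce convention of Proposition~\ref{prelim:prop:concat-det}, which I would invoke to treat each round's messages as belonging to a separate simulation.
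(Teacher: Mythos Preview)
Your overall structure matches the paper's, but the consistency argument as written does not close. You lower-bound each recipient's honest supporters by $(2f+1)-f=f+1$ using the fact that \emph{at most} $f$ broadcast processes are dishonest, and then seek a contradiction with ``only $2f+1$ of the $3f+1$ broadcast processes are honest.'' But $2f+1$ is a \emph{lower} bound on the number of honest broadcast processes, not an upper bound: if the adversary happens to corrupt fewer than $f$ broadcast processes (say none at all), there are up to $3f+1$ honest ones, and $2f+2\le 3f+1$ yields no contradiction. The repair is either to parametrize by the actual number $d\le f$ of dishonest broadcast processes (each recipient then has at least $2f+1-d$ honest supporters, forcing $2(2f+1-d)\le 3f+1-d$, i.e.\ $d\ge f+1$, which is the contradiction), or to argue as the paper does via intersection of the full signer sets: two subsets of size $2f+1$ inside a universe of $3f+1$ broadcast processes overlap in at least $f+1$ elements, at least one of which is honest, contradicting the one-message-per-round rule. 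Your non-impersonation paragraph inherits the same counting slip, since it reuses this honest-signature budget.

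Apart from that, your treatment is essentially the paper's, and in one respect more explicit: you state and prove the invariant ``at most one process holds a valid proof per round'' for non-impersonation, whereas the paper leaves this implicit in the remark that no honest process holds a proof when $M=\perp$. Your observation about round-indexed nonces is also correct; the construction handles it by including the round number $i$ directly in the text of $r_n$.
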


\begin{proof}
Construction \ref{construction:simple-det} uses $T=3$ per round, and honest processes send at most $2\times (3f+1)$ messages and $\mathcal{O}((2f+1)(3f+1)+(3f+1))=\mathcal{O}(f^2)$ signatures per round. We claim this gives a valid construction. To see this, first note that at any round $i$, at most one process $P_{n}$ receives $2f+1$ messages of the form $r_n$. Suppose not, and two distinct processes $P_{n},P_{n'}$ have this property. There are at least $f+1$ signers in common between the broadcasters who signed the messages for $P_{n}$, and the broadcasters who signed the messages for $P_{n'}$. At least one of these broadcasters is honest, contradicting that each honest broadcaster send at most one message during any particular round. 

\begin{enumerate}
    \item Consistency: At most one honest process receives the required number of signatures to decide it is the marked node.
    
    \item Liveness: If $M,P_{n} \in \mathcal{H}$, $M$ sends a message to all broadcast processes. At least $3f+1-f=2f+1$ broadcast processes forward a message to $P_{n}$, and $P_{n}$ then decides it is the new marked process and that the previous marked process is $M$. Because only $P_{n}$ has a proof it is the marked process, it can elect a new marked process in the next round if it behaves honestly. 
    
    \item Non-impersonation: If $M=\perp$, then no honest process has a proof that it is marked at round $i$. Consequentially, no honest $P_{n} \in \mathcal{H}$ which decides it is marked will decide that the previous marked process was in $\mathcal{H}$. 
\end{enumerate}

\end{proof}

We comment that solutions for the deterministic marker problem and randomized Byzantine Broadcast both have  expected round length $T=\mathcal{O}(1)$. Construction \ref{construction:simple-det} for the marker problem in the randomized setting is an improvement of the $\log(K)$ round complexity derived in corollary \ref{relations:cor:reduction-rand-sol}. 

\

We can now give a proof of Theorem \ref{thm:no-black-box}:

\begin{proof}
The key idea is to leverage the lower bound results from propositions \ref{prelim:prop:round-complexity-lowerbound}, \ref{prelim:prop:sig-complexity-lowerbound} and \ref{prelim:prop:message-complexity-lowerbound} to derive a contradiction. 

\ 

First consider the strong case, and suppose there is a reduction using fewer than $\frac{f+1}{3}$ simulation steps. By proposition \ref{reduction:proof-simple-det}, this implies a solution to Byzantine Broadcast which simulates multiple copies of a marker game, where each simulation step corresponds to $3$ network time steps. This solution tolerates $f$ faults and runs in fewer than $\frac{f+1}{3}\times 3=f+1$ steps, contradicting the lower bound in proposition \ref{prelim:prop:round-complexity-lowerbound}. Now suppose that we only use $o(\frac{N}{fW})$ rounds. By proposition \ref{reduction:proof-simple-det}, this implies a solution to Byzantine Broadcast with signature complexity $o(\frac{N}{fW} \times f^2 W)=o(Nf)$, contradicting proposition \ref{prelim:prop:sig-complexity-lowerbound} (we note we could have derived the same bound by considering the message complexity). 

\ 

For the weak case, we can repeat the argument which considers the round complexity of Byzantine Broadcast to get that the number of simulation steps is at least $\frac{f+1}{3}$. Suppose for the sake of contradiction that the number of simulation steps is $o(\frac{N-M}{fW})$. Then by proposition \ref{reduction:proof-simple-det}, this implies a solution to Byzantine Broadcast with message complexity $o(N)$, contradicting the lower bound of proposition \ref{prelim:prop:message-complexity-lowerbound}.  
\end{proof}

\

\subsection{Best Case Message Complexity}\label{sec:message-complexity}

The result from the previous section tells us that, at least from the perspective of reductions in the model we have given, consensus and payment are fairly different problems. This gives us hope that we might be able to construct practical payment systems which behave fundamentally differently compared to any consensus based solution. With this goal in mind, the candidate property we choose to focus on for the rest of this paper is the \textit{best case message complexity} of a payment system, which we define to be the number of messages sent per transaction in the event when all processes behave honestly.\footnote{Note that the payment system is still robust to $f$ processes behaving dishonestly, but the number of messages sent may increase in this case.} In particular, we will do our best, ignoring all other complexity considerations, to focus on building a payment system with good best case message complexity under reasonable conditions. We make this choice for the following reasons:

\begin{enumerate}
    \item While round complexity has strong lower bound results in the fault tolerant model, discrete time steps and round lower bounds don't realistically translate to protocols over the internet in practice. 
    
    \item The signature complexity of a solution in the fault tolerant model does not necessarily translate to practice either: there are cryptographic techniques to "compress" chains of signatures in a single message into a short summary string. 
    
    \item Notice that a message complexity of $\Omega(N)$ per transaction \textit{in the best case} is an innate property of any consensus based solution which reaches agreement on all transactions. This is because all participants need to agree on the updated state of the network after a transaction, and so need to all send/receive at least one message. By focusing on reducing the message complexity as much as possible, we will be forced to end up with solutions to the payment problem which are inherently local: for example, if only $\mathcal{O}(1)$ messages are typically sent per transaction, this would intuitively need to rely on a method very different to global consensus of transactions. 
    
    \item Best case complexity can be a realistic benchmark for practical payment systems. For example, suppose we only send large amounts of messages when processes are behaving dishonestly. In practice, if a single individual is causing excessive network stress and this can be detected, they can be ignored/removed from the network. Moreover, we can naturally build into a payment system financial costs or \textit{fees} which are associated with how many messages are sent over the network. 
\end{enumerate}

With this goal in mind, we begin by studying the best case complexity of payment systems. Recall from proposition \ref{prelim:prop:sig-complexity-lowerbound} that every solution to Byzantine Broadcast requires sending $N(f+1)/4$ signatures even when all processes behave honestly. The message complexity of the marker problem, and payment systems in general, is slightly more more nuanced. Recall we showed a construction for a payment system which always sent $\Theta(f)$ messages per payment. It turns out that there are payment systems which might only send any number between $1$ and $N$ messages per payment. Moreover, if we make certain assumptions about the distribution of income in a payment system, we can get away with sending significantly less than $\Omega(N)$ messages per payment. We start by giving an analogous result for some metric of the message complexity of any deterministic payment system which closely mirrors that of proposition \ref{prelim:prop:sig-complexity-lowerbound} for Byzantine Broadcast:

\newpage

\begin{prop}\label{message-complexity:prop:mclb}
For any $K=1$ round, $T$ step deterministic solution to the marker problem tolerating $f$ faults, let $z_{n}$ denote the number of messages sent by all processes after $T$ steps when the marked process $M=P_1$ receives input $I_{1}=n$ and all processes behave honestly. Then 

$$\sum_{n \in [N]} z_{n} = \Omega(Nf)$$
\end{prop}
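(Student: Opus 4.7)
My plan is to adapt the DR85-style hybrid argument from the proof of Proposition \ref{prelim:prop:sig-complexity-lowerbound}. For each $n \in [N]$, write $H_n$ for the single-round honest execution with $I_1 = n$: $P_n$ must decide marked in $H_n$ and unmarked in every other $H_m$. Since honest processes are deterministic functions of received messages, $P_n$'s view must differ across these $N$ executions, and this is the lever I will use to force enough communication incident to each $P_n$.

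For each $n \in [N] \setminus \{1\}$, let $A(n)$ be the set of processes $P_k \neq P_n$ that, in at least one of $H_1, \dots, H_N$, either send a message to $P_n$ or receive a message from $P_n$. Each element of $A(n)$ witnesses at least one message touching $P_n$ in some $H_m$, and each message in $H_m$ touches exactly two processes, so double counting gives
\[
\sum_{n \in [N]} |A(n)| \;\le\; \sum_{m \in [N]} \sum_{n \in [N]} \#\{\text{messages in } H_m \text{ touching } P_n\} \;=\; 2 \sum_{m \in [N]} z_m.
\]
It therefore suffices to establish the core claim that $|A(n)| \geq f+1$ for every $n \neq 1$, since this yields $\sum_m z_m \geq (N-1)(f+1)/2 = \Omega(Nf)$.

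For the core claim, suppose for contradiction that $|A(n)| \leq f$. Because $|A(n) \cup \{n\}| \leq f+1 < N$, I can pick $n' \neq n$ with $P_{n'} \notin A(n)$. I then build a hybrid $H'$ in which the adversary corrupts exactly $A(n)$ and each corrupted $P_k \in A(n)$ replays, toward $P_n$, the exact messages it sends to $P_n$ in $H_n$, and, toward every other process, the exact messages it sends to that process in $H_{n'}$ (if $P_1 \in A(n)$, this just dictates the two sides of $P_1$'s dishonest behavior). By induction on time, I would show that in $H'$ the incoming message view of $P_n$ agrees with its view in $H_n$ (since honest processes outside $A(n)$ never send to $P_n$ in any $H_m$, by definition of $A(n)$), and the incoming message view of every other honest process agrees with its view in $H_{n'}$ (since $P_n$ never sends to honest processes outside $A(n)$ in any $H_m$). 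Consequently, $P_n$ decides marked (as in $H_n$) while the honest $P_{n'}$ also decides marked (as in $H_{n'}$), violating the consistency condition of the marker problem and completing the contradiction.

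The step I expect to be the main obstacle is signature validity in the hybrid: a simulated $H_n$-message sent to $P_n$ by a corrupted process may carry a signature by some non-$A(n)$ honest process whose signature was created upstream in the $H_n$ signature chain and that would not be produced in $H'$. The standard fix, mirroring DR85, is to enlarge $A(n)$ to include every process whose signature ever appears in a message involving $P_n$ across the $H_m$; those signers then become corrupted and their signatures are forgeable by the colluding adversary. A slightly more careful charging, assigning each enlarged member to the originating message in which it first introduced its own signature, still yields $\sum_n |A(n)| = O(\sum_m z_m)$ and preserves the $\Omega(Nf)$ bound; the condition $3f < N$ is used only to guarantee that a valid $n' \notin A(n) \cup \{n\}$ always exists.
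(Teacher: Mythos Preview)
Your core claim that $|A(n)|\ge f+1$ for every $n\neq 1$ is false, and the proposed enlargement does not rescue it. Take the paper's own Construction~\ref{construction:cycle-coin} (cycle coin), which tolerates any $f<N-1$. There, in every honest execution $H_m$ the only process that ever exchanges a message with $P_n$ is the marked process $P_1$, so $A(n)=\{P_1\}$ and $|A(n)|=1$. Your enlarged set $A'(n)$ of all signers appearing in messages touching $P_n$ is $\{P_1,\dots,P_{n-1}\}$, so $|A'(2)|=1$, which is still below $f+1$ as soon as $f\ge 1$.

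The hybrid itself breaks for a reason you did not anticipate: the problem is on the $H_{n'}$-side, not the $H_n$-side. In the hybrid, $P_n$ is honest and its view matches $H_n$, so it produces exactly the $H_n$-signatures. But the corrupted set, when replaying $H_{n'}$ toward the other honest processes, may need $P_n$'s signatures \emph{from $H_{n'}$}, which differ from those of $H_n$ and cannot be forged. Concretely in cycle coin with $n=2$, $n'=3$: the chain $P_1$ must send to $P_3$ in $H_3$ carries $P_2$'s extension signature, but in the hybrid $P_2$ receives the completed $H_2$-chain, decides it is marked, and never signs an extension. So $P_1$ cannot produce the $H_3$-chain for $P_3$, and the double-marking contradiction never materializes. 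Enlarging $A(n)$ to include signers of messages \emph{touching $P_n$} does nothing for this, and you cannot add $P_n$ itself to the corrupted set without destroying the very conclusion you want $P_n$ to reach.

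The paper's argument avoids all of this by comparing a \emph{pair} of executions and looking at the intersection $X(n_1)\cap X(n_2)$ of their active sets. Its adversary corrupts that intersection and has each corrupted node \emph{run two copies of the honest protocol} rather than replay fixed transcripts, so no forgery is ever needed: honest processes in $X(n_1)\setminus X(n_2)$ and $X(n_2)\setminus X(n_1)$ genuinely produce their own signatures in the hybrid. The resulting dichotomy (Lemma~\ref{message-complexity:lem:message-complexity-intersection}) is exactly what accommodates cycle-coin-like protocols: when $|X(n_1)\cap X(n_2)|$ is small one concludes not that some neighborhood of $P_{n_1}$ is large, but that $P_{n_1}\in X(n_2)$ or $P_{n_2}\in X(n_1)$, and a separate counting argument over these two cases gives $\Omega(Nf)$. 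As a minor point, note also that the statement carries no $3f<N$ hypothesis; the paper's proof works for arbitrary $f$.
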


We note that this result holds even in the case of a weak adversarial model, where the adversary is only able to simulate two copies of honest protocols (it does not take advantage of being able to see the entire network state). To prove this, first  we will give a lemma, which in itself will be instructive in constructing a new solution to the marker problem. Let $X(n)$ be the set of processes which either send or receive a message during the first $T$ steps when $P_1$ receives input $I_{1}=n$ (so $z_n\geq \frac{1}{2}|X(n)|$), and all processes behave honestly.

\begin{lem}\label{message-complexity:lem:message-complexity-intersection}
Let $P_{n_1},P_{n_2} \in \{P_{n}\}_{n \in [N]}\setminus\{P_{1}\}$, and $D:= X(n_1) \cap X(n_2)$. Then either (i) $P_{n_1}$ is contained in $X(n_2)$ or $P_{n_2}$ is contained in $X(n_1)$, or (ii) $|D|\geq f-1$.
\end{lem}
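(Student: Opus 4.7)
The plan is a standard indistinguishability / view-merging argument: I will prove the contrapositive, i.e.\ assume that $P_{n_1}\notin X(n_2)$ and $P_{n_2}\notin X(n_1)$, and show that if $|D|$ is too small, an adversary can violate consistency of the marker game, contradicting fault tolerance.

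First I would record a few immediate set-theoretic consequences. Since $P_1$ must send at least one message to initiate either payment, $P_1\in X(n_1)\cap X(n_2)=D$. The failure of condition (i) puts $P_{n_1}\in X(n_1)\setminus D$ and $P_{n_2}\in X(n_2)\setminus D$, and in particular forces $X(n_1)\setminus D$ and $X(n_2)\setminus D$ to be disjoint (any common element would lie in $D$). Let $E_1$ (resp.\ $E_2$) denote the all-honest execution in which $P_1$ receives input $n_1$ (resp.\ $n_2$).

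The core construction is a hybrid execution $E^\ast$ in which the adversary corrupts $D$ (plus at most a constant number of slack processes to obtain the exact threshold $f-1$ rather than the slightly tighter $f+1$ the naive counting yields). The corrupt processes run two copies of the honest protocol in parallel: towards any process in $X(n_1)\setminus D$ they replay exactly the messages prescribed by $E_1$, and towards any process in $X(n_2)\setminus D$ they replay those prescribed by $E_2$. All processes lying in neither $X(n_1)$ nor $X(n_2)$ send nothing in either $E_1$ or $E_2$, hence remain silent in $E^\ast$. The key claim, proved by induction on the time step $t$, is that each honest $Q\in X(n_i)\setminus D$ has received through time $t$ in $E^\ast$ exactly the multiset of messages it receives through time $t$ in $E_i$: incoming messages from $D$ are replayed verbatim by the adversary, messages from $X(n_i)\setminus D$ are identical by the inductive hypothesis, and no messages arrive from $X(n_{3-i})\setminus D$ because those sets are disjoint and in $E_{3-i}$ their members only message within $X(n_{3-i})$. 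Consequently $P_{n_1}$ and $P_{n_2}$ both decide at the end of the round that they are the marked process, violating the consistency clause of Definition~\ref{payment-systems:df:marker-game}. Since the marker-problem solution tolerates $f$ faults, the corruption count of the adversary must exceed $f$, giving $|D|\geq f-1$.

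The main obstacle is the bookkeeping inside the induction, specifically verifying that the corrupt processes in $D$ never need to forge a signature of an honest party when ferrying cross-execution messages. This ultimately rests on the disjointness of $X(n_1)\setminus D$ and $X(n_2)\setminus D$: any signature chain appearing in a message delivered to $X(n_i)\setminus D$ in $E_i$ is produced solely by members of $D$ (whose keys the adversary controls) and of $X(n_i)\setminus D$ (who in $E^\ast$ play exactly their honest role in $E_i$ and sign naturally), so the adversary can carry out the simulation without ever forging an honest signature.
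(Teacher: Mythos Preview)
Your proposal is correct and is essentially the paper's own argument: corrupt the intersection, have those processes run two parallel honest simulations (one for each target), and observe that each of $P_{n_1},P_{n_2}$ sees exactly its all-honest view and therefore decides it is marked, contradicting consistency.

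Two small points of bookkeeping are worth tightening. First, your claim that $P_1\in D$ is not quite airtight: a pathological protocol could have $P_1$ remain silent for one of the two inputs (encoding information by silence), so $P_1$ might lie in $X(n_1)\cup X(n_2)$ without lying in the intersection. The paper sidesteps this by corrupting $Z:=D\cup\{P_1\}$ outright rather than arguing $P_1\in D$; this costs at most one extra corruption and is cleaner. Second, your parenthetical about ``slack processes'' and the $f-1$ versus $f+1$ threshold is muddled: the argument you sketch in fact yields $|D|\ge f+1$ (or $|D|\ge f$ once you add $P_1$ to the corrupt set), which is \emph{stronger} than the stated bound $|D|\ge f-1$, so no extra corruptions are needed---you simply get a better constant than the lemma records.
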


\begin{proof}
Suppose the claim were false: namely $P_{n_1},P_{n_2} \not \in X(n_1) \cap X(n_2)$, but $|D| \leq f-1$. We now give a dishonest protocol each of the processes in $Z:=D\cup\{P_1\}$ can follow ($|Z|\leq f$) which allows $P_{1}$ to pay both processes $P_{n_1}$ and $P_{n_2}$ its coin (i.e. "double spend"), resulting in both processes deciding they are the marked process for the next round, contradicting the consistency property of definition \ref{payment-systems:df:marker-game}. The scenario is as follows:

\ 
\begin{figure}
\begin{center}
\begin{tikzpicture}[fill=gray]
\scope
\clip (-4,-4) rectangle (4,4)
      (2,0) circle (2);
\fill (0,0) circle (2);
\endscope
\scope
\clip (-4,-4) rectangle (4,4)
      (0,0) circle (2);
\fill (2,0) circle (2);
\endscope
\draw (0,0) circle (2) (0,2)  node [text=black, above left] {$X(n_1)$}
      (2,0) circle (2) (2,2)  node [text=black,above right] {$X(n_2)$}
      (-4,-4) rectangle (6,4) node [text=black,above] {};
      \draw (3,0) node[circle,fill=red, inner sep=0pt,minimum size=0.5pt, opacity=.2,text opacity=1] (n2) {$n_2$};
      \draw (1,0) node[circle, inner sep=0pt,minimum size=0.5pt, opacity=.2,text opacity=1] (Z) {$Z$};
      \draw (-1,0) node[circle,fill=red, inner sep=0pt,minimum size=0.5pt, opacity=.2,text opacity=1] (n1) {$n_1$};
\end{tikzpicture}
\end{center}
\caption{Lemma \ref{message-complexity:lem:message-complexity-intersection}}
\end{figure}

Have each node in $Z$ run two copies of the honest protocol (without nonces); the first copy corresponds to a simulation where $P_1$ pays $P_{n_1}$, and the second simulation corresponds to a simulation where $P_{1}$ pays $P_{n_2}$. When $P_{n_{a}},P_{n_{b}} \in Z$ send messages to each other, they prepend a label $\in \{0,1\}$ to their messages to indicate which simulation copy the message belongs to. When they send messages to honest processes not in $Z$, they omit this label. When they receive messages from processes not in $Z$, they can infer which simulation the message belongs to because the sets $X(n_1)\setminus Z, X(n_2)\setminus Z$ are disjoint. To start off the simulation, we have $P_{1} \in Z$ run two copies of honest protocols for $P_{1}$, one where it "imagines" its input being $I(1)=n_1$, and the other where it imagines $I(1)=n_2$.  

\ 

Notice the following: after $T$ steps, the sent and received messages for all processes in $X(n_1)-Z$ are identical to those in the case where all processes are honest and $P_{1}$ has input $I(1)=n_1$. Thus $P_{n_1}$ decides it is marked after $T$ steps. But the same argument tells us that $P_{n_2}$ decides it is marked as well, leading to a contradiction. 
\end{proof}

Notice that in Construction \ref{construction:simple-det}, we always have $|D| \geq 3f+1$, and so this construction represents the "one extreme" of the condition in Lemma \ref{message-complexity:lem:message-complexity-intersection}, namely we always have $|D|\geq f-1$. One might therefore wonder if it is possible to construct a solution which always satisfies the other extreme. Indeed, after finishing the proof of Proposition \ref{message-complexity:prop:mclb}, we will construct a deterministic solution to a payment system which respects the "other extreme" of this condition, namely it will always be the case that either $P_{n_1}$ is contained in $X(n_2)$ or $P_{n_2}$ is contained in $X(n_1)$. For now, we complete the proof:

\begin{proof}

By lemma \ref{message-complexity:lem:message-complexity-intersection}, $\forall P_{n_1},P_{n_2} \in \{P_{n}\}_{n \in [N]} \setminus \{P_1\}$, either (i) $P_{n_1} \in X(n_2)$ or $P_{n_2} \in X(n_1)$, or (ii)  $|X(n_1)\cap X(n_2)|\geq f-1$. Define 

$$K_1=\{n \in [N]\setminus\{1\}||X(n)| \geq f-1\}$$ 

and its relative complement

$$K_2=[N]\setminus (\{1\} \cup K_1)$$

By condition (ii), it is easy to see that 

$$\sum_{n \in K_2} |X(n)| \geq \frac{|K_2|(|K_2|-1)}{2}$$

because for each unordered pair $(n_1,n_2)$ with $n_1,n_2 \in K_2$ we have either $P_{n_1} \in X(n_2)$ or $P_{n_2} \in X(n_1)$. Thus we have

$$2\sum_{n \in [N]\setminus\{1\}} z_n \geq \sum_{n \in [N]\setminus\{1\}}|X(n)|$$

$$\geq \min_{|K_2| \in \mathbb{R}_{+}} \frac{|K_2|(|K_2|-1)}{2}+|K_1|(f-1)$$

$$=\min_{|K_2| \in \mathbb{R}_{+}} \frac{|K_2|(|K_2|-1)}{2}+(N-1-|K_2|)(f-1)$$

$$=f(N-\frac{1}{2})-N+\frac{7}{8}-\frac{1}{2}f^2=\Omega(Nf)$$

\end{proof}

\ 

As a sanity check, note that Construction \ref{construction:simple-det} satisfies this lower bound and is tight: we have $\sum_{n \in [n]} z_n = N \times 2 \times (3f+1)=N(6f+2)$. 

\

The intuition for Lemma \ref{message-complexity:lem:message-complexity-intersection} is as follows: either the intersection $X(n_1)\cap X(n_2)$ is $\Omega(f)$, so we can guarantee that there is some honest process within this intersection who will prevent $P_{1}$ from double spending. This is very similar in spirit to usual Byzantine Consensus reasoning. If this isn't the case, then the only other way we can guarantee $P_{1}$ can't double spend after paying $P_{n_1}$ is if $P_{n_2}$ itself \textit{saw} $P_{1}$ spending its coin already, and hence $P_{n_2} \in X(n_1)$; this is the point at which the reasoning for payment systems formally differs from that of Byzantine Consensus in the fault tolerant model, because we stipulate that $P_{n_2}$ must be honest if it is to have any protection from this double spending\footnote{wheres in Byzantine Consensus, we wouldn't be able to assume that this particular fixed process behaves honestly.}. 

\ 

How might we construct a payment system where the property $P_{n_2} \in X(n_1)$ or $P_{n_1} \in X(n_2)$ always holds (in general, even when $P_1$ isn't the starting marked process)? One (and the only) way to do this is to connect all processes together in a directed cycle, and stipulate that payments can only move "across" the cycle. Since any two paths on a directed cycle originating from a common point always meet at some endpoint on one of the paths, this will give us the desired property. Using this idea, we now try to  give another solution to the marker problem where the number of messages sent per transaction is highly variable -- some transactions only require $O(1)$ messages in the best case, but only under highly unrealistic assumptions about the distribution of payments. In part \ref{part:trusted-parties}, we will then use this construction as a building block to outline how we can achieve $\mathcal{O}(\log(N))$ messages per transaction in the best case under more reasonable assumptions about the distribution of payments. 

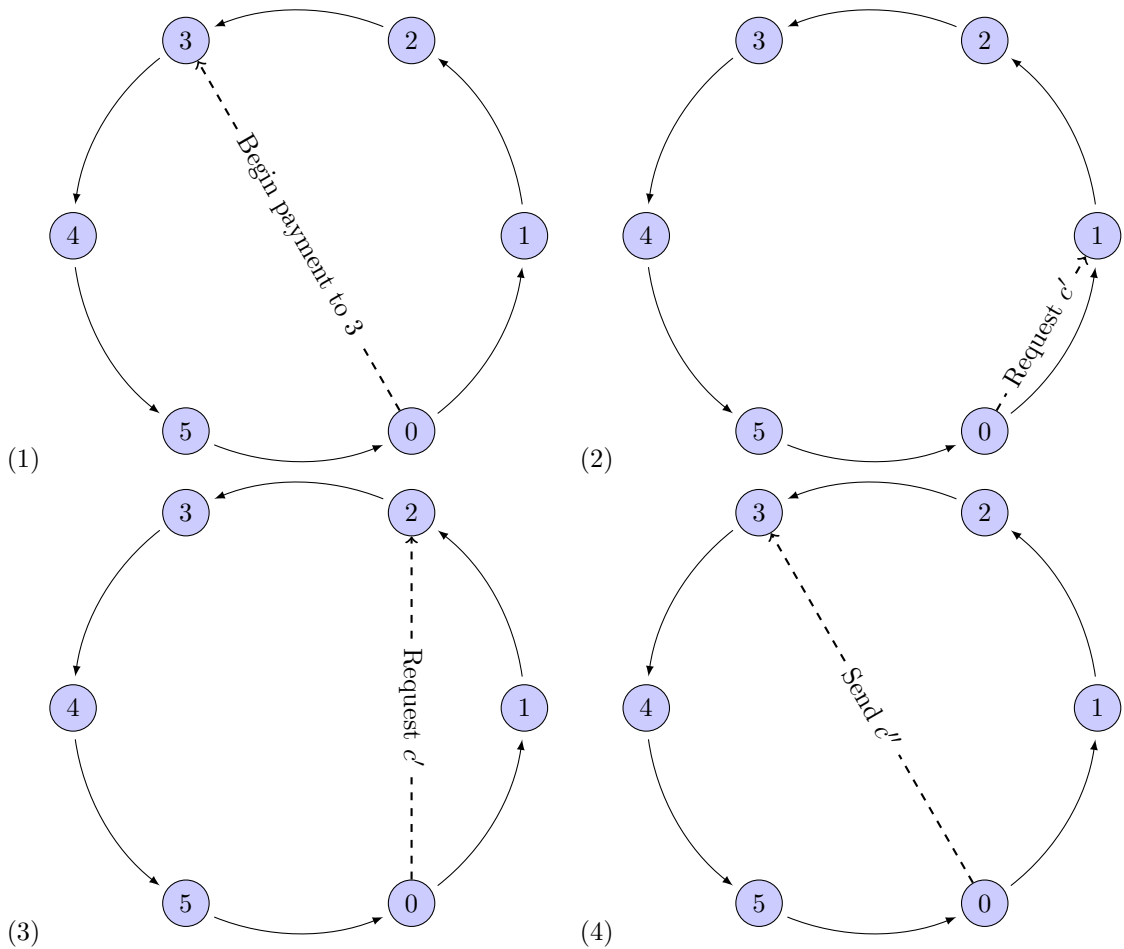
\begin{figure}[!h]
\begin{center}
\begin{tabular}{cc}
(1)
 \begin{tikzpicture}
\def \n {5}
\def \radius {3cm}
\def \margin {8}
\foreach \s in {0,...,\n}
{
  \node[draw, circle, fill=blue!20] (\s) at ({360/(\n+1) * (\s - 1)}:\radius) {$\s$};
  \draw[->, >=latex] ({360/(\n+1) * (\s - 1)+\margin}:\radius) 
    arc ({360/(\n+1) * (\s - 1)+\margin}:{360/(\n+1) * (\s)-\margin}:\radius);
}
    
\path (0) -- node[sloped] (text) {Begin payment to $3$} (3);
  \draw[->,thick,dashed] (0)--(text)--(3);
\end{tikzpicture}  & (2) \begin{tikzpicture}
\def \n {5}
\def \radius {3cm}
\def \margin {8}
\foreach \s in {0,...,\n}
{
  \node[draw, circle, fill=blue!20] (\s) at ({360/(\n+1) * (\s - 1)}:\radius) {$\s$};
  \draw[->, >=latex] ({360/(\n+1) * (\s - 1)+\margin}:\radius) 
    arc ({360/(\n+1) * (\s - 1)+\margin}:{360/(\n+1) * (\s)-\margin}:\radius);
}
\path (0) -- node[sloped] (text) {Request $c'$} (1);
  \draw[->,thick,dashed] (0)--(text)--(1);
  
\end{tikzpicture} \\

 (3) \begin{tikzpicture}
\def \n {5}
\def \radius {3cm}
\def \margin {8}
\foreach \s in {0,...,\n}
{
  \node[draw, circle, fill=blue!20] (\s) at ({360/(\n+1) * (\s - 1)}:\radius) {\s};
  \draw[->, >=latex] ({360/(\n+1) * (\s - 1)+\margin}:\radius) 
    arc ({360/(\n+1) * (\s - 1)+\margin}:{360/(\n+1) * (\s)-\margin}:\radius);
}
  
\path (2) -- node[sloped] (text) {Request $c'$} (0);
  \draw[->,thick,dashed] (0)--(text)--(2);
\end{tikzpicture} &  (4) \begin{tikzpicture}
\def \n {5}
\def \radius {3cm}
\def \margin {8}
\foreach \s in {0,...,\n}
{
  \node[draw, circle, fill=blue!20] (\s) at ({360/(\n+1) * (\s - 1)}:\radius) {$\s$};
  \draw[->, >=latex] ({360/(\n+1) * (\s - 1)+\margin}:\radius) 
    arc ({360/(\n+1) * (\s - 1)+\margin}:{360/(\n+1) * (\s)-\margin}:\radius);
}
\path (0) -- node[sloped] (text) {Send $c''$} (3);
  \draw[->,thick,dashed] (0)--(text)--(3);
\end{tikzpicture}\\
\end{tabular}
\end{center}
\caption{Illustration of a payment from process $P_0$ to process $P_3$ in construction \ref{construction:cycle-coin}.}
\end{figure}

\newpage

\begin{construction}\label{construction:cycle-coin}
(Cycle Coin): 

For simplicity, we will first describe a construction which \textit{almost} works, and requires all processes (regardless of whether they are faulty or not) respond in a particular way which is detectable. We will then describe how to remove this assumption. 

\ 

We number the $N$ processes from $0,\dots,N-1$. Begin by connecting all processes into a cycle, with $P_{n}$ directionally connected and leading to $P_{n+1 \text{ mod } N}$. We let $P_{a,b}$ denote the path along this graph which connects $P_{a}$ to $P_{b}$, with $P_{a}$ included and $P_{b}$ excluded. If $a=b$, then we let $P_{a,b}$ be the empty path. Recall if we have values $v_i$, we let $(\dots ((v \cdot v_1)_{P_{i_1}}\cdot v_{2})_{P_{i_2}} \dots \cdot v_{i_k})_{P_{i_k}}$ denote the string produced when $v \cdot v_1$ is signed by $P_{i_1}$, the resulting string is concatenated with $v_2$ and then signed by $P_{i_2}$ and so on. Notionally, we rewrite this as $(v)[v_1 P_{i_1},\dots,v_k P_{i_k}]$. We define a \text{chain} of length $0$ \textit{ending} at $P_0$ and \textit{extended} by $P_0$ to be the string $("")_{P_0}=()[P_0]=[P_0]$. We define a chain of length $i$ \textit{ending} at $P_{end}$ and \textit{extended} by $P_{ext}$ to be a string of the form $(c)[P_{i_1},xP_{ext},P_{i_2},xP_{ext},P_{i_3},xP_{ext},\dots, P_{i_k},yP_{ext}]$, where $c$ is a chain of length $i-1$ ending at $P_{ext}$, and $P_{ext}=P_{i_1},\dots,P_{i_k}$ are the processes along the ordered path $P_{P_{ext},P_{end}}$. We let $(n_1,...,n_k)$ denote a chain $c$ of length $k$ constructed by a chain of length $1$ ending at $P_{n_1}$, extended by $P_{n_1}$ to a chain of length $2$ ending at $P_{n_2}$, and so forth, and finally ending at $P_{n_k}$ (note that there is a bijective correspondence between a chain $c$ and its representation $(n_1,...,n_k)$). As some concrete examples where $N=6$:

$$(0)=[P_0,(yP_0)]$$

$$(1)=[P_0,(P_0,yP_0)]$$

$$(2)=[P_0,(P_0,xP_0,P_1,yP_0)]$$

$$(3)=[P_0,(P_0,xP_0,P_1,xP_0,P_2,yP_0)]$$

$$(1,3)=[P_0,(P_0,xP_0),(P_1,xP_1,P_2,yP_1)]$$

The chain of length $0$ corresponds to the empty tuple $()$. Given a chain $c=(n_1,...,n_k)$, we let $weight(c)$ be the length of the walk consisting of the paths $P_{0} \rightarrow P_{n_1} \rightarrow \dots \rightarrow P_{n_k}$ along the cycle (where we interpret $P_{a} \rightarrow P_{a}$ as the empty path). A partially constructed chain $c'$ is a string of the form $c'=(n_1,\dots,P_{ext})[P_{i_1},xP_{ext},P_{i_2},xP_{ext},P_{i_3},xP_{ext},\dots P_{i_k},xP_{ext}]$ with $P_{i_1}\rightarrow \dots \rightarrow P_{i_k}$ being some cycle path starting from $P_{ext}$. Given a partially constructed chain $c'$, we let $weight(c')$ denote the weight of the (complete) chain produced by replacing the last $xP_{ext}$ with $yP_{ext}$. 

\ 

The key property of the protocol is the following: an honest protocol $P_{n}$ will decide it is marked iff it receives a chain ending at $P_{n}$, and it \textit{knows} there does not exist a chain of greater weight. $P_{n}$ can be certain of this fact because if there were ever a chain of greater weight, $P_{n}$ would have had to sign it. Likewise, $P_{n}$ can always prevent chains of greater weight from being produced once it has decided it has been marked, by refusing to extend chains. Inductively, $P_{n}$ can then extend the chain of greatest weight to a new process $P_{end}$ as a form of payment. The details are as follows:

\newpage

\begin{enumerate}
    \item At round $0$, the marked process $P_{0}$ begins with the empty chain $()$ of length $0$. 
    
    \item At round $i=0,\dots,K-1$, the marked process $M$ has a chain $c$ of length $i$ ending at $M$: we say $c$ "marks" $M$. It then does the following on receiving input $I_{M}=n$:
    
    \begin{enumerate}
        
        \item $M$ communicates with all processes on the path $P_{M,P_{n}}$ to create a chain $c''$, where $c''$ is extended by $M$, contains $c$ as a prefix, and ends at $P_{n}$. If $P_{n} \not = M$, $M$ does this by beginning with the partial chain $c'=((c)_{M}\cdot x)_{M}$: for each process on the path $P_{M,P_{n}}$ in order (excluding $M$), $M$ messages process $P_{i_j}$ with query $c'$ and asks it to send back $(c')_{P_{i_j}}$. $M$ then updates $c'=((c')_{P_{i_j}}\cdot x)_{M}$ and moves to the next process in the path, or updates $c'=((c')_{P_{i_j}}\cdot y)_{M}$ if $P_{i_j}$ is the last process on the path. If all processes on $P_{M,P_{n}}$ comply, this takes at most $\mathcal{O}(|P_{M,P_{n}}|)$ messages. 
        
        \item $M$ then sends the completely constructed chain $c''$ to process $P_{n}$
    \end{enumerate}
    
    \item At round $i=0,\dots,K-1$, an honest process $P_{n}$ reacts the following way to a request from $P_{a}$ to help extend a chain $c$ to a chain $c''$. We assume $P_{n}$ is sent a partially constructed chain $c'$ which it is asked to append its signature to. 
    
    \begin{enumerate}
        \item Checks that $c$ is a chain of length $i$, and that $c$ ends at $P_{a}$, and that the form of $c'$ is valid. These properties are required for $P_{a}$'s request to be valid, and these properties can be verified by any process. 
        
        \item If $P_{n}$ has previously signed a partial chain $\gamma$ with $weight(\gamma)\geq weight(c')$, then $P_{n}$ refuses to extend $c'$ and replies with $(\gamma)_{P_{n}}$ instead.
        
        \item If $P_{n}$ has received a chain $\gamma$ ending at $P_{n}$ with $weight(\gamma)\geq weight(c')$, then $P_{n}$ refuses to extend $c'$ and replies with $\gamma$ instead.
    
        \item Otherwise $P_{n}$ responds to $P_{a}$ with $(c')_{P_{n}}$. $P_{a}$ can then create a partial chain $((c')_{P_{n}} \cdot x)_{P_{a}}$ or chain $((c')_{P_{n}} \cdot y)_{P_{a}}$ with weight one greater than $c'$.
        
    \end{enumerate}
    
    \item At round $i=0,\dots,K-1$, an honest process $P_{n}$ reacts the following way when receiving a chain $c$ of length $i+1$ ending at $P_{n}$:
    
    \begin{enumerate}
        \item If $P_{n}$ has not signed a partial chain $\gamma$ with $weight(\gamma)=weight(c)$, and $P_{n}$ has not previously received a chain $\gamma$ ending at $P_{n}$ with $weight(\gamma)=weight(c)$, then $P_{n}$ decides it is the marked process for the next round, and that the marked process in the current round is the extender of $c$.
    \end{enumerate}
    
\end{enumerate}

\end{construction}

\begin{prop}\label{prop:cyc-coin}
Suppose that all processes, regardless of whether they are faulty, always produce some valid response when requested to append their signature to a valid partial chain in construction \ref{construction:cycle-coin}. That is, they always comply with an extension request, or publish a "proof" that they are not required to extend a particular request.\footnote{Technically, we only require that the processes on the payment path $P_{M,P_{n}}$ behave in this way.} Then construction \ref{construction:cycle-coin} is a solution to the marker problem which tolerates $f<N-1$ faults. If at round $i$, $M$ sends the marker to $P_{n}$, then the number of steps taken and number of messages sent until $P_{n}$ decides that it is the marked process for round $i+1$ is $\mathcal{O}(|P_{M,P_{n}}|)$. 

\ 

Moreover, $M$ has a proof it has paid $P_{n}$ in the following sense: If a third party Alice would like a proof $M$ paid $P_{n}$ at round $i$, then $M$ sends to Alice the chain $c$ of length $i+1$ ending at $P_{n}$ it used to pay $P_{n}$. Alice then sends $c$ to $P_{n}$. After Alice sends $c$ to $P_{n}$, if $P_{n}$ has not yet decided it is the marked process at round $i+1$, then $P_{n}$ has a chain or partial chain $\gamma$ with $weight(\gamma)=weight(c)$ which does not equal $c$, which proves $M$ did not pay $P_{n}$ or $M$ is dishonest\footnote{In the future, we will use the term dishonest to refer to both (a) $M$ not following the prescribed protocol, and (b) $M$ lying about paying $P_{b}$}. In particular, $P_{n}$ cannot produce such a proof if $M$ is honest. 
\end{prop}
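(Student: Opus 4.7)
The plan is to prove each property by induction on the round number, leveraging the geometric invariant that a chain's ending cycle position is determined by its weight modulo $N$; the base case at round $0$ is immediate since only $P_0$ holds the empty chain $()$ of weight $0$. For consistency, I would suppose for contradiction that distinct honest processes $P_{n_1}, P_{n_2}$ both decide marked at round $i+1$, receiving chains $c_1, c_2$ of length $i+1$ ending at them, of weights $w_1, w_2$. Since a walk from $P_0$ of total length $w$ ends at cycle position $w \bmod N$, I get $w_1 \equiv n_1$ and $w_2 \equiv n_2 \pmod N$, so $w_1 \ne w_2$. Assuming $w_1 < w_2$, the walk realizing $c_2$ must traverse $P_{n_1}$ at walking distance exactly $w_1$. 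At that moment $P_{n_1}$ is either (a) queried as an intermediate signer in some extension step of $c_2$'s construction (so $P_{n_1}$ signs a partial chain of weight $w_1$), or (b) the endpoint of some earlier extension within $c_2$ (so $P_{n_1}$ received a valid chain of weight $w_1$ ending at itself). A case split on whether this event temporally precedes or follows $P_{n_1}$'s receipt of $c_1$ then yields a contradiction: if it precedes, the refusal clause of the deciding step prevents $P_{n_1}$ from becoming marked upon receiving $c_1$; if it follows, $P_{n_1}$ either refuses the signing request or already holds a chain of weight $\ge w_1$ ending at itself, and since $P_{n_1}$'s signature is unforgeable $c_2$ can never be completed into a valid chain.

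Liveness and proof-of-payment both depend on a uniqueness lemma which I would establish inductively alongside consistency: at round $i$, the unique valid chain of length $i$ ending at an honest process is the one $M$ holds, and moreover no valid chain of strictly greater weight exists anywhere in the system. Given this, when honest $M$ walks $P_{M, P_n}$ and queries each process in turn, every honest intermediate lacks any prior commitment at the relevant weight and hence complies, so $P_n$ receives a valid chain and decides marked. Non-impersonation is immediate from unforgeability: if $M = \perp$, no honest process was marked at round $i$, hence no honest process extended any chain; any valid chain received by an honest $P_n$ at round $i+1$ must therefore carry an extender's signature from a dishonest process, which $P_n$ recovers from the chain structure. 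For proof-of-payment, if $M$ is honest then uniqueness forces the only valid length-$(i+1)$ chain of weight $weight(c)$ ending at $P_n$ to be $c$ itself, and any partial chain of weight $weight(c)$ signed by $P_n$ would originate from an extension rooted in an impossible length-$i$ chain, ruling out the conflicting $\gamma$.

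The main obstacle I anticipate is the interplay of two subtle arguments: the cyclic case analysis for consistency and the uniqueness lemma underpinning the remaining properties. Because the cycle allows walks to wrap multiple times, $P_{n_1}$ may appear at several walking distances within $c_2$'s walk, so I need to isolate the encounter at weight exactly $w_1$ and reason carefully about which role ((a) or (b)) it plays; the asymmetry between ``$\ge$'' in the refusal clauses and ``$=$'' in the deciding clause also requires precision. The uniqueness lemma is delicate because a dishonest coalition can in principle harvest honest signatures through deceptive extension requests; I would prove it by tracking round-by-round the maximum weight of any valid chain the adversary can produce, and showing inductively that this maximum equals the weight of the chain currently held by the honest marked process. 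With these two ingredients in hand, the $\mathcal{O}(|P_{M, P_n}|)$ message bound is then a straightforward count of queries and responses along the walk.
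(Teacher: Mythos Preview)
Your proposal is correct and follows essentially the same strategy as the paper's proof. The paper's consistency argument is your argument in compressed form: it asserts directly that $P_{n'}$ must have signed a prefix of $c_n$ at weight exactly $weight(c_{n'})$ and then runs the same temporal case split you describe, without spelling out the modular-arithmetic localization or the intermediate-versus-endpoint dichotomy. Your uniqueness lemma (``no valid chain of strictly greater weight exists, and the honest marked process holds the unique maximal one'') is exactly the invariant the paper invokes for liveness and proof-of-payment; the paper states it as ``when $M$ decides it is marked by $c$, there exist no other chains or partial chains of greater weight (otherwise $M$ would have had to sign them)'' and later sharpens it to ``all chains and partial chains of weight greater than $weight(c)$ have $c$ as a prefix,'' but does not write out the induction you propose. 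Your explicit induction is the right way to make that step rigorous, and your handling of non-impersonation via unforgeability of the extender's terminal signature matches the paper's one-line justification.
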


\newpage

\begin{proof}
We need to check the three conditions: 

\begin{enumerate}
    \item Consistency: Suppose for the sake of contradiction that two distinct honest processes $P_{n},P_{n'}$ both decide in round $i$ they are marked for round $i+1$. Then both processes possess corresponding chains $c_{n},c_{n'}$ of length $i+1$ which end at $P_{n},P_{n'}$ and are extended by $P_{ext},P_{ext'}$ resp. Without loss of generality, assume $weight(c_{n})>weight(c_{n'})$. But then it must be the case that $P_{n'}$ signed a prefix chain/partial chain $\gamma$ of $c_{n}$ with $weight(\gamma)= weight(c_{n'})$. By construction, $P_{n'}$ does not do this in round $r\leq i$ if it receives $c_{n'}$ before being requested to make this signature; thus $P_{n'}$ must have decided it was marked after signing $\gamma$, but this is a contradiction. 
    
    \item Liveness: When $M$ decides it is marked by $c$, there exist no other chains or partial chains of greater weight (otherwise $M$ would have had to sign them). By construction, this means when $M$ messages a process $P_j \in P_{M,P_{n}}$ with partial chain $c'$ in an attempt to extend $c'$, $P_{j}$ can only give the response $(c')_{P_j}$. This is because $M$ is required to sign the end of every partial chain extended by $M$, and so during this process the only partial chains of maximum weight that have been produced by the network are the partial chains $M$ explicitly creates, and all chains constructed up until this time have weight $\leq weight(c)$ (because $M$ never appended $y)_{M}$ to a partial chain with weight $\geq weight(c)-1$). Thus after $M$ receives a response from the last process in the path, $M$ will have a string $\gamma$ which it can sign to produce a completed chain $\tilde{c}=(\gamma \cdot y)_{M}$, where $\tilde{c}$ is extended by $M$, ends at $P_{n}$, and contains $c$ as a prefix. $M$ then sends $\tilde{c}$ to $P_{n}$. By construction, $P_{n}$ will not have previously seen a chain of greater or equal weight ending at $P_{n}$, nor will $P_{n}$ have signed a partial chain of weight $\geq weight(\tilde{c})$. If $P_{n}$ is honest, then $P_{n}$ correctly decides that $M$ was the previously marked process and that $P_{n}$ is the next marked process. 
    
    \item Non-impersonation: By the previous paragraph, the only way for an honest process $P_{n}$ to accept a payment via chain $c$, and decide $c$ was extended by honest process $P_{a}$, is if $P_{a}$ was indeed the extender and appended its signature to the end of $c$. 
    
\end{enumerate}

Note the following: if $M$ is honest and marked via chain $c$ during round $i$, then all chains and partial chains of weight greater than $weight(c)$ produced by the network will have $c$ as a prefix: suppose not, and that $\gamma$ is a chain/partial chain which does not contain $c$ as a prefix. Let $\gamma'$ be the prefix chain/partial chain of $\gamma$ with $weight(\gamma')=weight(c)$, $\gamma'\not = c$. Then $(\gamma')_{M}$ is a prefix of $\gamma$, but by construction if $M$ decides it is marked by chain $c$ in round $i$, it will never sign and never has signed $\gamma'$, contradicting that $\gamma$ exists. With regards to the proof of payment, note that if $M \in \mathcal{H}$ is marked by chain $c$, and did indeed pay $P_{n}$ with chain $c''$ extended by $M$ from $c$, then all partial chains $\gamma$ which do not contain $c$ as a prefix satisfy $weight(\gamma)\leq weight(c)<weight(c'')$, and all chains $\gamma$ which do not contain $c$ as a prefix satisfy $weight(\gamma) \leq weight(c)<weight(c'')$, so $P_{n}$ cannot prove $M$ is dishonest. Conversely, if chain $c''$ is not a valid chain which causes $P_{n}$ to accept a payment from $M$ in round $i$, then $P_{n}$ has seen a chain/partial chain $\gamma\not = c''$ with $weight(\gamma)= weight(c'')$.   

\end{proof}

It remains to explain how we can deal with the issue that a dishonest process $P_{n}$ may refuse to respond altogether, instead of validly responding to an extension request. We can solve this problem by running Byzantine Broadcast as a subprocess. Suppose a process $P_{a}$ (honest or not) sends an extension request to process $P_{b}$ for the extension of a partial chain $c'$, and $P_{b}$ does not respond validly within the correct time. We then allow $P_{a}$ to broadcast to all processes that it would like the network to \textit{decide} on the response $P_{b}$ gives to query $c'$. Importantly, note that we are not saying anything about deciding the \textit{guilt} of $P_{b}$ (for example, it could be that $P_{a}$ simply behaved dishonestly and ignored $P_{b}$'s valid response). Assuming all of this works, all honest processes then reach an agreement on \textit{some} valid extension $\gamma$ of $c'$ which is produced by $P_{b}$, or the default null value $\perp$. If all processes agree on a valid extension $\gamma$, then in particular $P_{a}$ decides on a valid response $\gamma$ from $P_{b}$ and the problem is resolved\footnote{$\gamma$ is necessarily a response from $P_{b}$, because it ends with $P_{b}$'s signature which cannot be forged}. Note this case occurs if $P_{b}$ is honest. If $P_{a}$ instead decides on $\perp$($P_{b}$ is dishonest), then we have all honest processes "imagine" that $P_{b}$ is deleted from the network, and $P_{b}$'s signature is no longer required to construct chains. 

\ 

We need to be slightly careful about how we implement this idea. We give concise details below: 

\begin{construction}\label{proof-of-response}
(Proof of response for construction \ref{construction:cycle-coin})

\

During round $i$, the marked node $M$ tries to extend $c$ by asking for extensions along the path $P_{M,n}$. We break this up into $N$ time periods (the maximum number of extensions needed) each of a predefined number of steps, where during the $k$th time period: 

\begin{enumerate}
    \item At step $1$, $M$ sends an extension request $c'$ to $P_{j}$, the $k$th process in $P_{M,n}$ (this time period is empty if $k>|P_{M,n}|$)
    
    \item At step $1$ in this time period, all honest processes each initialize $N^2$ instances of Byzantine Broadcast in parallel, the implementation of which is detailed in the proof of proposition \ref{prelim:prop:BB-det}. Each instance $(a,b) \in [N]$ is a Byzantine Broadcast protocol for $P_{a}$ to issue a network query for $P_{b}$ about a query request $c'$. The essential property of this particular implementation of Byzantine Broadcast is that no messages are sent if the broadcaster does not broadcast anything. 
    
    \item At step $2$, if $P_{j}$ does not respond, $M$ broadcasts $(c',P_{j})$ to the network. 
    
    \item At step $z=\mathcal{O}(f)$, all the $N^2$ Byzantine Broadcast protocols have terminated. If $M$ is honest and broadcasts $(c',P_{j})$, then all honest processes will decide this. Each honest process checks that the request $c'$ is valid, otherwise the request is ignored.
    
    \item For a further $\mathcal{O}(f)$ steps, all honest processes run Byzantine Broadcast to reach agreement on any network queries of responsiveness for any process $P_{b}$. If a response of $c'$ is requested for $P_{b}$, then Byzantine Broadcast is run with $P_{b}$ as the leader, and $P_{b}$'s broadcast is taken as its response. By the end of these steps, all honest processes agree which queries were made, and the responses to these queries. 
\end{enumerate}

In total, each payment now takes $T=\mathcal{O}(fN)$ steps to complete. However, the number of messages is still only $\mathcal{O}(|P_{M,n}|)$ per round when all processes behave honestly, because no messages are sent by the Byzantine Broadcast protocols in this case. When processes behave dishonestly, the number of messages is $\mathcal{O}(N \times N^2 \times Nf)=\mathcal{O}(N^5)$ per round in the worst case.
\end{construction}

As a sanity check, note that construction \ref{construction:cycle-coin} satisfies the lower bound of proposition \ref{message-complexity:prop:mclb} and is tight as well: when all processes behave honestly, $\sum_{n \in [N]} z_n = \sum_{n \in [N-1]} \Omega(n)=\Omega(Nf)$ since the construction tolerates $f=N-2$ faults. 

\ 

We now build on the ideas in this construction to give a fully fledged payment system which is robust to any $f<N-1$ failures. Moreover, if we have some control over the distribution of income and spending patterns, then when all processes behave honestly, the message complexity is at most $\mathcal{O}(\log(N))$. This marks a distinct shift from constructions of payment systems which are based on consensus mechanisms which reach agreement on all transactions, where $\Omega(N)$ messages per transaction are inherently required. The central idea is that if we could only "hop around" the parts of cycles which are large, then we could avoid the payment paths which require $\Omega(N)$ messages.

\newpage

\part{Extending The Fault Tolerant Model}\label{part:trusted-parties}

\section{Locality Through Trust}

We ended the last section by giving a simple construction of a payment system which has some degree of \textit{locality}: in some transactions, only a few neighbouring processes are messaged in order for a transaction to occur. In this section, we will show how by expanding our model in a natural way, we can significantly bootstrap this locality to construct deterministic payment systems whose payments are \textit{highly local}, and give best case message complexity significantly better than that of randomized consensus solutions, while still tolerating an arbitrary number of faults. 

\ 

\subsection{Trusted Anonymous Third Parties For Indirection}

The idea of having a trusted third party to mediate interactions between processes in a network has been used successfully to design a number of efficient protocols\footnote{for example, secret sharing and anonymous messaging.}. However, we argue that there is a certain kind of trusted third party model which is particularly natural to payment systems. We give a story to motivate this kind of trust: 

\ 

Suppose Alice would like to pay Bob $\$100$ for the comic books he gave her. However, unfortunately Bob's account is held with bank B, while Alice's account is held with bank A, and it is known that there is a very high transaction fee for transfers between these two banks. Luckily, Alice finds an advertisement on the internet from a fellow comic book enthusiast Charlie. Charlie's account is held with bank C, which happens to have very good transaction rates with both banks A and B. Alice hatches a plan: if she can ask Charlie to pay Bob on her behalf, then she could pay Charlie back the difference. With all the savings, she might even be willing to give Charlie an extra transaction fee, and then everyone (except the banks) would get a profit. However, Alice and Charlie have never met, and so Alice is cautious about trusting Charlie. To begin with, she sends Charlie $\$1$. Charlie, being the honest comic book enthusiast he is, happily sends across Alice's $\$1$ to Bob. Alice then asks Bob to confirm the transaction went through. Being a little more confident now, Alice decides to send through $\$2$. Over time, Alice and Charlie might begin to develop a trusting relationship. If Charlie ever cheats and doesn't pay Bob on Alice's behalf, then Alice will know and stop using Charlie as an intermediary. Moreover, she will only lose at most the maximum amount she transacted through Charlie at any single time. Charlie probably doesn't want to cheat: he would lose a profitable revenue stream of future payments from Alice, and Alice might start ruining Charlie's reputation. In reality, we are more likely to have a free market situation: multiple agents like Charlie will try to build a business as efficient intermediaries between Alice and Bob. If an intermediary ever cheats, there will be many more intermediaries to happily take its place. The key properties of this scenario which make third party trust natural are\footnote{We note that this model of trust is also the \textit{de facto} model under which Bitcoin has been operating under in certain contexts: participants will pay goods and service providers in digital currency on the belief that these participants will send them physical goods or provide services in return. Sometimes these providers can be completely anonymous, sometimes the goods themselves are legally questionable, and sometimes the service provides are foreign cryptocurrency startups with unclear legal regulation. There is therefore little opportunity for legal recourse if a trust assumption proves to be invalid.}

\begin{enumerate}
    \item Alice can detect when Charlie is cheating, and get a guaranteed bound on her loss.
    
    \item Alice can stop using Charlie after Charlie cheats once.
    
    \item There are plausible financial incentives for there to exist many choices of good intermediaries Alice can turn to instead. 
\end{enumerate}

Notice that it also makes sense to allow for multiple intermediaries, provided the number of intermediaries is small: Alice may ask Charlie to pay Bob. Instead of Charlie paying Bob directly, it may be cheaper for him to pay Bob via Sam, and so forth. We distill this discussion of trusted payment intermediaries into the following definition: 

\begin{dfn}\label{df:trusted-intermediary}
In a \textbf{trusted payment intermediary} model of payment systems where process $P_{a}$ would like to send a coin to $P_{b}$, we add the optional functionality of allowing process $P_{a}$ to form voluntary agreements with a subset of intermediary processes $\{P_{i_j}\}_{j \in [Z]}$ with the semantics "I, $P_{i_{j}}$, promise to facilitate the payment of $P_{a}$ to $P_{b}$".  We say the payment system in this model is valid if, when such such agreements are used to transact, they have the following properties:

\begin{enumerate}
    \item If $P_{a},P_{b},\{P_{i_j}\}_{j \in [Z]}$ are honest, then the transaction from $P_{a}$ to $P_{b}$ is completed: $P_{b}$ decides $P_{a}$ send a coin to $P_{b}$ in the appropriate round.
    
    \item If $P_{a}$ is honest but $P_{b}$ claims it was not paid, then $P_{a}$ \textit{decides} that at least one process in $\{P_{i_j}\}_{j \in [Z]} \cup \{P_{b}\}$ \textit{cheated}.\footnote{We only stipulate that $P_{a}$ can decide only one intermediary is honest for the following reason: it will be the case, as in the bank analogy, that only one intermediary is required to be dishonest for the transaction to fail: thus only one intermediary is really "responsible for" causing the transaction to fail, and it becomes technically messy to talk about multiple intermediaries being dishonest. Moreover, if multiple intermediaries have ill intentions towards $P_{a}$, they could always collude so that only one of them needs to behave dishonestly at each round, while still preventing payment at each round. Of course, $P_{a}$ may then decide to use an entirely different set of intermediaries all together if a payment fails, and different intermediaries may form their own preferences about which other intermediaries are the most reliable to work with.}
    
    \item If $P_{c} \in \{P_{i_j}\}_{j \in [Z]} \cup \{P_{b}\}$ is honest, then $P_{a}$ does not decide that $P_{c}$ cheated.
\end{enumerate}

\ 

We give no guarantees that the payment will succeed or that $P_{a}$ will keep its coin if $P_{a}$ uses these agreements to transact and any of the parties behave dishonestly. 
\end{dfn}

From the previous section, we know the following payment system exists:

\begin{cor}\label{cycle-payment-systems-cor}
Given any complete directed cycle $C$ on the participant processes, there exists a payment system in the single transaction per round, deterministic model tolerating any $f<N-1$ faults which supports any initial coin distribution and has the following additional properties:

\begin{enumerate}
    \item If any process $P_{c}$ falsely claims to pay some other process $P_{b}$ in round $i$, then $P_{b}$ has a proof to any third party $P_{a}$ in the sense of proposition \ref{prop:cyc-coin} that $P_{c}$ is dishonest.
    
    \item If a process $P_{a}$ pays a process $P_{b}$ in round $i$, and all processes processes are honest, then the message complexity of this round is $\mathcal{O}(|P_{P_{a},P_{b}}|)$.
\end{enumerate}

\end{cor}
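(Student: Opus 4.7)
The plan is to assemble the corollary by bolting together three pieces already developed in the excerpt: Construction \ref{construction:cycle-coin} (cycle coin) run on the given cycle $C$, Construction \ref{proof-of-response} to handle unresponsive adversarial processes, and Proposition \ref{prop:maker-is-general} to lift the resulting marker-game solution to a full payment system for an arbitrary initial coin distribution $\{v_{n,0}\}_{n\in[N]}$.

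First I would instantiate Construction \ref{construction:cycle-coin} along $C$. By Proposition \ref{prop:cyc-coin} this tolerates $f<N-1$ faults under the assumption that every queried process eventually produces a valid response, and its best-case message complexity when the marked process $M$ pays $P_b$ is $\mathcal{O}(|P_{M,P_b}|)$. Next, to remove the responsiveness assumption, I would wrap the extension requests inside Construction \ref{proof-of-response}, whose key property is that its internal Byzantine Broadcast instances send no messages when the broadcaster does not invoke them; thus if all processes behave honestly no auxiliary messages are generated, and the best-case message complexity remains $\mathcal{O}(|P_{M,P_b}|)$. The "moreover" clause of Proposition \ref{prop:cyc-coin} gives exactly the proof-of-dishonesty property demanded by item (1), transported unchanged from the marker-game setting: if a dishonest $P_c$ claims to have paid $P_b$ via a chain $c''$, then either $c''$ is not a valid chain ending at $P_b$ (checkable by any third party) or $P_b$ holds a chain/partial chain $\gamma \neq c''$ of equal weight whose existence certifies $P_c$'s dishonesty.

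Finally I would apply Proposition \ref{prop:maker-is-general} to obtain a full payment system. That construction simulates $V=\sum_{n}v_{n,0}$ concurrent copies of the marker game (with nonces to keep the simulations independent, per Proposition \ref{prelim:prop:concat-det}) and preserves all safety and liveness properties. In a single-transaction round where $P_a$ pays $P_b$, only one of $P_a$'s simulations is driven forward, so the per-round message count is the message count of that single cycle-coin payment, namely $\mathcal{O}(|P_{P_a,P_b}|)$ in the honest case, establishing (2). Property (1) lifts verbatim: the third party $P_a$ asks for the chain $c''$ from the particular simulation in which $P_c$ allegedly paid $P_b$, and applies the proof-of-dishonesty mechanism inside that simulation.

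The main obstacle I expect is not conceptual but bookkeeping: I need to check that Construction \ref{proof-of-response}'s internal Byzantine Broadcast instances do not interact badly with the simulation layer of Proposition \ref{prop:maker-is-general} (each simulation must spawn its own nonce-tagged copies of the broadcast subroutine), and that the proof-of-dishonesty argument, which was stated for a single cycle-coin instance, still functions when $P_b$ must point to the correct simulation nonce. Both issues are resolved by routing each message through the simulation identifier as prescribed by Proposition \ref{prelim:prop:concat-det}, so no new cryptographic or combinatorial idea is required beyond carefully threading nonces through the composition.
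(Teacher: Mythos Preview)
Your approach is correct and matches the paper's: combine Construction~\ref{construction:cycle-coin} (augmented by Construction~\ref{proof-of-response}) with the lifting of Proposition~\ref{prop:maker-is-general}. The one point the paper singles out explicitly, which you leave implicit in the phrase ``only one of $P_a$'s simulations is driven forward,'' is that the cycle-coin construction sends \emph{no} messages when a marker pays itself; this is what guarantees that all $V-1$ idle simulations (including those where some $P_c\neq P_a$ holds the marker and receives input $I_{c,i}=c$) contribute zero to the round's message count, not merely the inactive simulations owned by $P_a$.
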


\begin{proof}
This follows immediately from the cycle coin solution to the marker problem (construction \ref{construction:cycle-coin}) and the reduction given from payment systems to the Marker Problem (proposition \ref{prop:maker-is-general}), because the cycle coin construction sends no messages when a marker pays itself in a round. 
\end{proof}

Using this building block, we are now ready to give the construction of a payment system which uses indirection between cycles to reduce message complexity. The key idea is the following: construct a payment system by stacking together many cycle payment systems consisting of cycles of different permutations. Now imagine that $P_{a}$ would like to pay $P_{b}$. Since $P_{a}$ has non-zero balance, it has a coin in at least one of the subcycles $C$ which make up the payment system. Now $P_{a}$ has a choice: $P_{a}$ could pay $P_{b}$ using the subpayment system $C$. But if $P_{a}$ is very far away from $P_{b}$ on the cycle, $P_{a}$ might be lucky by finding another process $P_{c}$ such that (a) $P_{a}$ is close to $P_{c}$ on cycle $C$, and (b) $P_{c}$ is close to $P_{b}$ on some other cycle $C'$. If $P_{c}$ has a coin on cycle $C'$, then the following can happen: $P_{a}$ pays $P_{c}$ on cycle $C$, and $P_{c}$ promises to ensure that a payment gets to $P_{b}$. $P_{c}$ then pays $P_{b}$ on cycle $C'$, and sends $P_{a}$ a proof of payment equal to the chain it used to pay $P_{b}$. By "hopping between cycles" we are able to reduce the number of messages we need to send per transaction. In general, we might make multiple hops for a single payment.

\begin{prop}\label{prop:msg-vs-graph-diam}
Consider a collection of $K$ cycle payment systems of the form described in corollary \ref{cycle-payment-systems-cor}, with associated cycles $C_{1},\dots,C_{K}$\footnote{we will interchangably refer to payment systems through their associated cycles}. Concatenate these payment systems together to form a new payment system $PS'$ as in the proof of proposition \ref{prop:maker-is-general} which tolerates $f<N-1$ faults, and let $\mathcal{F} \subset \{P_n\}_{n \in [N]}$ be a set of "trusted intermediary" processes. 

\ 

At round $i$, let $Value(C_{k},n)$ denote the value of process $P_{n}$ in the payment system associated with the simulation of payment system $C_{k}$, and define the following graph $G_{i}$ on the vertex set $[K] \times [N]$ and edge set $E$: 

$$((k,P_{a}),(k',P_{b})) \in E \textit{ if } k=k' \land  (P_{a},P_{b}) \in Edges(C_{k})$$

(cycle step) or

$$((k,P_{a}),(k',P_{b})) \in E \textit{ if } Value(C_{k'},b)>0 \land a=b \in \mathcal{F}$$

(cycle hop)

\ 

Let $\mathcal{W}=\{ k \in [K]|Value(C_{k},a)>0\}$ and $D_{a,b}=\min_{k \in \mathcal{W},k' \in [K]} dist_{G_{i}} ((k,P_{a}),(k',P_{b}))$. Then in the trusted payment intermediary, single payment per round model, we can define the payment system $PS'$ to have the following property: 

\ 

At round $i$ when honest process $P_{a}$ makes a payment to honest process $P_{b}$, if $P_{a}$ can find a path $P$ of length $L$ in $G_{i}$ by using $M$ messages, and all processes behave honestly, then the message complexity of a transaction from $P_{a}$ to $P_{b}$ is $\mathcal{O}(M+L)$. In particular, if $P_{a}$ knows $G_{i}$ at the start of round $i$, then the message complexity is $\mathcal{O}(D_{a,b})$. Moreover, if $\mathcal{O}(|P|)$ trusted intermediaries on the path $P$ behave honestly, then $P_{a}$'s payment is guaranteed to go through to $P_{b}$. 
\end{prop}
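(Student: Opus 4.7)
The plan is to construct the payment protocol $PS'$ by having $P_a$ forward its payment to $P_b$ along the path $P$ in $G_i$, using a sequence of sub-payments in the constituent cycle payment systems glued together by trusted intermediaries. I decompose $P$ into maximal cycle-step segments: within each segment, all vertices lie in the same cycle $C_k$; consecutive segments meet at a cycle-hop edge $((k, P_c), (k', P_c))$ where $P_c \in \mathcal{F}$ and $P_c$ has positive value on $C_{k'}$. Each cycle-hop is ``free'' (it represents the same process switching between cycles, so no messages cross the network), while each cycle-step segment of length $\ell$ is realized as a single payment within the cycle payment system for $C_k$ from the segment's start to its end.

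The protocol begins with $P_a$ encoding the full path $P$ together with a payment instruction, and sending it as part of the first sub-payment to the first intermediary on $P$; this initial handoff is implemented using the cycle-coin protocol of Corollary \ref{cycle-payment-systems-cor} on $C_{k_0}$. Each intermediary, upon receiving its ``turn,'' consults the path, pays the next intermediary on its designated cycle, and appends its newly constructed chain to the running proof-of-payment record so the next intermediary can verify what came before; the final intermediary pays $P_b$. By Corollary \ref{cycle-payment-systems-cor}, each sub-payment along a cycle-step segment of length $\ell$ uses $\mathcal{O}(\ell)$ messages when all processes on that segment behave honestly. Summing over all segments gives $\mathcal{O}(L)$ messages for execution, and adding the $\mathcal{O}(M)$ messages $P_a$ used to discover $P$ gives the claimed $\mathcal{O}(M + L)$ bound. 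When $P_a$ already knows $G_i$, no discovery is needed ($M = 0$) and the shortest such path has length exactly $D_{a,b}$.

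For correctness I would verify each of the three conditions in Definition \ref{df:trusted-intermediary}. If $P_a$, $P_b$, and every intermediary on $P$ is honest, then liveness of each constituent cycle payment system (Corollary \ref{cycle-payment-systems-cor}, Proposition \ref{prop:cyc-coin}) applied to each segment yields the cascade of successful sub-payments, and $P_b$ decides it was paid. If $P_b$ later claims non-payment, $P_a$ queries each intermediary in order for the proof-of-payment it received and the proof-of-payment it produced onward: by the proof guarantee of Proposition \ref{prop:cyc-coin}, any honest intermediary can always produce a valid chain showing it paid its successor on the correct cycle, so $P_a$ traverses the path and indicts the first intermediary who fails to produce such a chain (this must exist, since otherwise the final intermediary would exhibit a valid payment to $P_b$, contradicting $P_b$'s claim). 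Conversely, again by Proposition \ref{prop:cyc-coin}, an honest intermediary always holds a valid proof, so it is never falsely indicted.

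The main obstacle I anticipate is the bookkeeping of the multi-cycle proof-of-payment handoffs. The cycle-coin construction produces a chain that certifies a single payment within one cycle, but in $PS'$, $P_a$ must eventually stitch independently-simulated cycle proofs into one globally coherent audit trail that localizes blame to a \emph{single} intermediary, as Definition \ref{df:trusted-intermediary} stipulates. This will require specifying precisely what each intermediary must sign when handing off (something like ``I, $P_c$, received chain $\gamma$ paying me on $C_k$ and now pay $P_{c'}$ on $C_{k'}$ via chain $\gamma'$''), so that $P_a$'s audit walk pinpoints exactly one breakpoint. The single-transaction-per-round assumption is helpful here, since it ensures that each constituent cycle payment system is operating in the regime where Corollary \ref{cycle-payment-systems-cor} applies cleanly and each segment's sub-payment message cost stays at $\mathcal{O}(\ell)$.
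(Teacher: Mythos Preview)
Your proposal is correct and follows essentially the same approach as the paper: decompose the path into cycle-step segments joined at cycle-hop intermediaries, realize each segment as a single sub-payment via Corollary~\ref{cycle-payment-systems-cor}, sum the segment costs to get $\mathcal{O}(L)$, and use the proof-of-payment guarantee of Proposition~\ref{prop:cyc-coin} to localize a dishonest intermediary. The only cosmetic differences are that the paper has $P_a$ first collect signed promises from all intermediaries and forward them to $P_b$ (rather than threading the path description through the payment chain), introduces an explicit macro/micro-round structure to sequence the sub-payments, and runs the audit backward from $P_b$ rather than forward from $P_a$; none of these change the argument.
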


We comment that it is a straightforward extension to consider a different set of trusted intermediates $\mathcal{F}_{n}$ for each different process $P_{n}$, however the core ideas are captured by this statement. 

\begin{proof}
Using the primitives we have developed, we describe one potential way for $P_{a}$ to send a payment to $P_{b}$: 

\begin{enumerate}
    \item Let the path from $P_{a}$ to $P_{b}$ be $P$. $P$ consists of a number of cycle hops and cycle steps. Let $(P_{i_1},\dots,P_{i_Z})$ be the intermediate processes which facilitate the cycle hops in path $P$ (see figure \ref{figure:cycle-merge}). 
    
    \item We define a \textit{macro round} which consists of a large number of \textit{micro rounds}. Each macro round corresponds to a single round in $PS'$. Each micro round corresponds to a single round in the simulated cycle payment systems. We might fix the number of micro rounds per macro round to be some upper bound on the diameter of the graph $G_{i}$. 
    
    \item At macro round $i$, $P_{a}$ receives input $b \in [N]$ during the execution of payment system $PS'$. $P_{a}$ now sends $\mathcal{O}(|P|)$ messages to the intermediate processes $P_{i_1},\dots,P_{i_Z}$ asking for a promise of the form "I, $P_{i_{j}}$, will pay $P_{i_{j+1}}$ in micro round $j+1$, if I receive a payment from $P_{i_{j-1}}$ in round $j$". We notate $P_{i_{0}}:=P_{a}, P_{i_{Z+1}}:=P_{b}$. 
    
    \item At microstep $j=0,\dots,Z$, $P_{i_{j}}$ pays a coin to $P_{i_{j+1}}$ using the path $P_{i_{j}} \rightarrow P_{i_{j+1}}$ along the appropriate cycle $C_{k}$ determined by $P$ (cycle steps). Such an action is possible because (we assume without loss of generality) $P$ does not contain two cycle hops from the same process, and $Value(C_{k},P_{i_{j}})>0$ at the beginning of macro round $i$. By corollary \ref{cycle-payment-systems-cor}, all of these actions combined take $\mathcal{O}(|P|)$ messages to perform assuming all processes behave honestly. 
    
    \item At the end of all the microsteps, $P_{a}$ asks for proofs from all the intermediate processes that their payment obligations were satisfied.
    
    \item Conceptually, we have $P_{a}$ sign a message to $P_{b}$ at the beginning of macro round $i$ indicating that it intends to send a payment, and sends the signed promises of the intermediate processes $P_{i_{j}}$ to $P_{b}$ as well. If $P_{b}$ receives a payment from $P_{i_{Z}}$, it then decides that $P_{a}$ paid $P_{b}$ in macro round $i$. 
\end{enumerate}

To see that this gives a valid payment system in the trusted payment intermediary, single transaction per round model, we need to check a few conditions. First note that $S1$ (non-duplication) in Definition \ref{def:full-PS} always holds, because it holds for each simulated payment system for any number of faults $f<N-1$. We assert that when no intermediaries are used, non-impersonation, self consistency and liveness in round $i$ hold by the construction given in the proof of Proposition \ref{prop:maker-is-general}. The new content we need to check is what happens when intermediaries are used in round $i$. We will check non-impersonation still holds, liveness holds when the intermediaries are honest, and that in the case $P_{b}$ claims it is not paid, the conditions of definition of the trusted payment intermediary model (definition \ref{df:trusted-intermediary}) are met. 

\begin{enumerate}
    
    \item non-impersonation: $P_{b}$ only decides that $P_{a}$ paid $P_{b}$ if it explicitly receives a signature from $P_{a}$ indicating it will pay $P_{b}$ at round $i$. Therefore if $P_{a}$ is honest and does not pay $P_{b}$ at round $i$, it cannot be impersonated. 
    
    \item Liveness and proofs of honesty: If $P_{a},P_{b},\{P_{i_j}\}_{j \in [Z]}$ are honest, then the payment goes through as described. If $P_{b}$ claims it did not receive a payment from intermediary $P_{i_{Z}}$, then $P_{a}$ can iteratively go through each of the processes in the order $P_{b}=P_{i_{Z+1}},P_{i_{Z}},P_{i_{Z-1}},\dots$. Inductively, at step $j$, if $P_{i_{j}}$ claims it was not paid by $P_{i_{j-1}}$, then either $P_{i_{j-1}}$ claims it \textit{did} pay $P_{i_{j}}$ and we have a proof of whether $P_{i_{j-1}}$ is being honest by corollary \ref{cycle-payment-systems-cor}, or $P_{i_{j-1}}$ admits it did not pay $P_{i_{j}}$, because it was not paid by $P_{i_{j-2}}$. In this case, $P_{a}$ recursively moves down to step $j-1$. At step $j$, if $P_{j}$ is honest, then either $P_{j}$ is able to prove to $P_{a}$ that it did pay $P_{j+1}$, or $P_{j-1}$ is unable to prove to $P_{a}$ that it paid $P_{j}$. Thus $P_{a}$ will not decide that $P_{j}$ cheated. Since $P_{b}$ claims it was not paid and $P_{a}$ paid $P_{i_1}$ by assumption, $P_{a}$ will decide that at least one of $P_{b}\cup \{P_{i_z}\}_{z \in [Z]}$ cheated. Thus this protocol satisfies all of the conditions of Definition \ref{df:trusted-intermediary}. 
\end{enumerate}
\end{proof}

\newpage
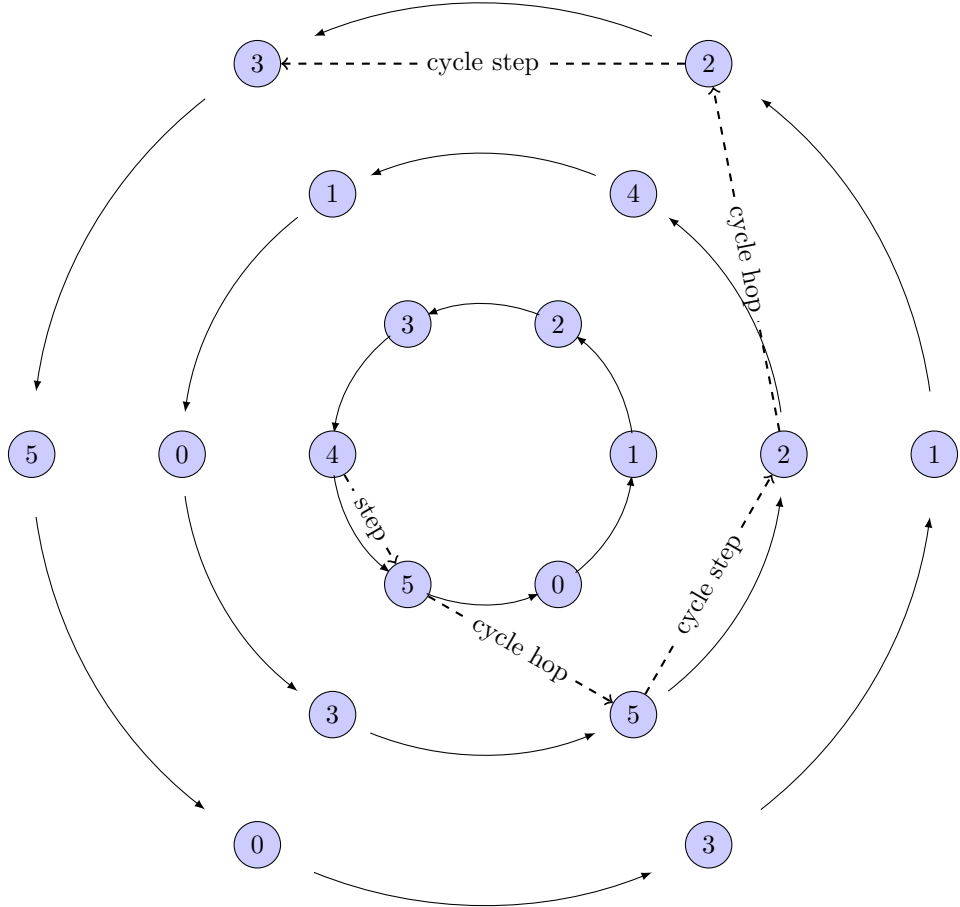
\begin{figure}[!h]
\begin{center}

\begin{tikzpicture}
\def \n {5}
\def \radius {2cm}
\def \margin {8}
\def \s {0}
\def \lab {0}
\node[draw, circle, fill=blue!20] (\s) at ({360/(\n+1) * (\s - 1)}:\radius) {\lab};
  \draw[->, >=latex] ({360/(\n+1) * (\s - 1)+\margin}:\radius) 
    arc ({360/(\n+1) * (\s - 1)+\margin}:{360/(\n+1) * (\s)-\margin}:\radius);
\def \s {1}
\def \lab {1}
\node[draw, circle, fill=blue!20] (\s) at ({360/(\n+1) * (\s - 1)}:\radius) {\lab};
  \draw[->, >=latex] ({360/(\n+1) * (\s - 1)+\margin}:\radius) 
    arc ({360/(\n+1) * (\s - 1)+\margin}:{360/(\n+1) * (\s)-\margin}:\radius);
\def \s {2}
\def \lab {2}
\node[draw, circle, fill=blue!20] (\s) at ({360/(\n+1) * (\s - 1)}:\radius) {\lab};
  \draw[->, >=latex] ({360/(\n+1) * (\s - 1)+\margin}:\radius) 
    arc ({360/(\n+1) * (\s - 1)+\margin}:{360/(\n+1) * (\s)-\margin}:\radius);
\def \s {3}
\def \lab {3}
\node[draw, circle, fill=blue!20] (\s) at ({360/(\n+1) * (\s - 1)}:\radius) {\lab};
  \draw[->, >=latex] ({360/(\n+1) * (\s - 1)+\margin}:\radius) 
    arc ({360/(\n+1) * (\s - 1)+\margin}:{360/(\n+1) * (\s)-\margin}:\radius);
\def \s {4}
\def \lab {4}
\node[draw, circle, fill=blue!20] (\s) at ({360/(\n+1) * (\s - 1)}:\radius) {\lab};
  \draw[->, >=latex] ({360/(\n+1) * (\s - 1)+\margin}:\radius) 
    arc ({360/(\n+1) * (\s - 1)+\margin}:{360/(\n+1) * (\s)-\margin}:\radius);
\def \s {5}
\def \lab {5}
\node[draw, circle, fill=blue!20] (\s) at ({360/(\n+1) * (\s - 1)}:\radius) {\lab};
  \draw[->, >=latex] ({360/(\n+1) * (\s - 1)+\margin}:\radius) 
    arc ({360/(\n+1) * (\s - 1)+\margin}:{360/(\n+1) * (\s)-\margin}:\radius);

\def \n {5}
\def \radius {4cm}
\def \margin {8}
\def \s {6}
\def \lab {5}
\node[draw, circle, fill=blue!20] (\s) at ({360/(\n+1) * (\s - 1)}:\radius) {\lab};
  \draw[->, >=latex] ({360/(\n+1) * (\s - 1)+\margin}:\radius) 
    arc ({360/(\n+1) * (\s - 1)+\margin}:{360/(\n+1) * (\s)-\margin}:\radius);
\def \s {7}
\def \lab {2}
\node[draw, circle, fill=blue!20] (\s) at ({360/(\n+1) * (\s - 1)}:\radius) {\lab};
  \draw[->, >=latex] ({360/(\n+1) * (\s - 1)+\margin}:\radius) 
    arc ({360/(\n+1) * (\s - 1)+\margin}:{360/(\n+1) * (\s)-\margin}:\radius);
\def \s {8}
\def \lab {4}
\node[draw, circle, fill=blue!20] (\s) at ({360/(\n+1) * (\s - 1)}:\radius) {\lab};
  \draw[->, >=latex] ({360/(\n+1) * (\s - 1)+\margin}:\radius) 
    arc ({360/(\n+1) * (\s - 1)+\margin}:{360/(\n+1) * (\s)-\margin}:\radius);
\def \s {9}
\def \lab {1}
\node[draw, circle, fill=blue!20] (\s) at ({360/(\n+1) * (\s - 1)}:\radius) {\lab};
  \draw[->, >=latex] ({360/(\n+1) * (\s - 1)+\margin}:\radius) 
    arc ({360/(\n+1) * (\s - 1)+\margin}:{360/(\n+1) * (\s)-\margin}:\radius);
\def \s {10}
\def \lab {0}
\node[draw, circle, fill=blue!20] (\s) at ({360/(\n+1) * (\s - 1)}:\radius) {\lab};
  \draw[->, >=latex] ({360/(\n+1) * (\s - 1)+\margin}:\radius) 
    arc ({360/(\n+1) * (\s - 1)+\margin}:{360/(\n+1) * (\s)-\margin}:\radius);
\def \s {11}
\def \lab {3}
\node[draw, circle, fill=blue!20] (\s) at ({360/(\n+1) * (\s - 1)}:\radius) {\lab};
  \draw[->, >=latex] ({360/(\n+1) * (\s - 1)+\margin}:\radius) 
    arc ({360/(\n+1) * (\s - 1)+\margin}:{360/(\n+1) * (\s)-\margin}:\radius);
    
\def \n {5}
\def \radius {6cm}
\def \margin {8}
\def \s {12}
\def \lab {3}
\node[draw, circle, fill=blue!20] (\s) at ({360/(\n+1) * (\s - 1)}:\radius) {\lab};
  \draw[->, >=latex] ({360/(\n+1) * (\s - 1)+\margin}:\radius) 
    arc ({360/(\n+1) * (\s - 1)+\margin}:{360/(\n+1) * (\s)-\margin}:\radius);
\def \s {13}
\def \lab {1}
\node[draw, circle, fill=blue!20] (\s) at ({360/(\n+1) * (\s - 1)}:\radius) {\lab};
  \draw[->, >=latex] ({360/(\n+1) * (\s - 1)+\margin}:\radius) 
    arc ({360/(\n+1) * (\s - 1)+\margin}:{360/(\n+1) * (\s)-\margin}:\radius);
\def \s {14}
\def \lab {2}
\node[draw, circle, fill=blue!20] (\s) at ({360/(\n+1) * (\s - 1)}:\radius) {\lab};
  \draw[->, >=latex] ({360/(\n+1) * (\s - 1)+\margin}:\radius) 
    arc ({360/(\n+1) * (\s - 1)+\margin}:{360/(\n+1) * (\s)-\margin}:\radius);
\def \s {15}
\def \lab {3}
\node[draw, circle, fill=blue!20] (\s) at ({360/(\n+1) * (\s - 1)}:\radius) {\lab};
  \draw[->, >=latex] ({360/(\n+1) * (\s - 1)+\margin}:\radius) 
    arc ({360/(\n+1) * (\s - 1)+\margin}:{360/(\n+1) * (\s)-\margin}:\radius);
\def \s {16}
\def \lab {5}
\node[draw, circle, fill=blue!20] (\s) at ({360/(\n+1) * (\s - 1)}:\radius) {\lab};
  \draw[->, >=latex] ({360/(\n+1) * (\s - 1)+\margin}:\radius) 
    arc ({360/(\n+1) * (\s - 1)+\margin}:{360/(\n+1) * (\s)-\margin}:\radius);
\def \s {17}
\def \lab {0}
\node[draw, circle, fill=blue!20] (\s) at ({360/(\n+1) * (\s - 1)}:\radius) {\lab};
  \draw[->, >=latex] ({360/(\n+1) * (\s - 1)+\margin}:\radius) 
    arc ({360/(\n+1) * (\s - 1)+\margin}:{360/(\n+1) * (\s)-\margin}:\radius);

\path (4) -- node[sloped] (text) {step} (5);
  \draw[->,thick,dashed] (4)--(text)--(5);

\path (5) -- node[sloped] (text) {cycle hop} (6);
  \draw[->,thick,dashed] (5)--(text)--(6);
  
\path (6) -- node[sloped] (text) {cycle step} (7);
  \draw[->,thick,dashed] (6)--(text)--(7);
  
\path (7) -- node[sloped] (text) {cycle hop} (14);
  \draw[->,thick,dashed] (7)--(text)--(14);
 
\path (14) -- node[sloped] (text) {cycle step} (15);
  \draw[->,thick,dashed] (14)--(text)--(15);
  
\end{tikzpicture} 
\end{center}
\caption{Illustration of a payment from process $P_{4}$ to process $P_{3}$ consisting of a series of cycle hops and cycle steps. The payment system consists of a concatenation of $3$ cycles: We have $Value(C_{1},P_{4})>0$ in the inner cycle, $Value(C_{2},P_{5})>0$ in the middle cycle, and $Value(C_{2},P_{2})>0$ in the outer cycle. The intermediaries are $P_{5}$ and $P_{2}$.}
\label{figure:cycle-merge}
\end{figure}

We comment that in the above construction, it is not actually necessary for $P_{a}$ to know the entire path to $P_{b}$ in advance, and we can change the semantics of the intermediate trust assumptions as well: for example, $P_{a}$ could simply forward the payment to $P_{b}$ which it thinks is a good intermediate process for this transaction. $P_{b}$ can then sign the promise "I promise to make sure a payment gets to $P_{b}$ by time $t$", and assume full responsibility and trust. Provided $P_{b}$ can figure out the remaining short path and find its own trusted intermediaries, the message complexity can still be made small. 

\

We will refer to payment systems of the form described in proposition \ref{prop:msg-vs-graph-diam} consisting of cycles $C_1,\dots C_K$ as a \textit{cycle payment system} with cycles $C_1,\dots,C_K$\footnote{Thus $C_1$ is a cycle payment system consisting of the cycle $C_1$.}. We now define an unnecessarily strong condition on the spending distribution in a payment system which allows us to state some simple constructions which give low message complexity: 

\newpage

\begin{dfn}\label{condition:balanced}
Consider a cycle payment system consisting of cycles $C_1,\dots,C_K$. We say that the system is \textit{balanced} in the single transaction per round model if, during the beginning of every round, we have that $Value(P_n,C_k)>0$ for all $n \in [N], k \in [K]$.
\end{dfn}

\begin{thm}\label{thrm:short-cyc}
Suppose for simplicity that $\mathcal{F}=[N]$\footnote{Even this extreme case is not entirely unrealistic: we might imagine that the default behavior of participants is to facilitate payments, at the benefit of gaining transaction fees and reputation for being a reliable facilitator.}. Then there exists a deterministic cycle payment system consisting of $2K\geq 4$ cycles $C_1,\dots ,C_K$ and tolerating any $f<N-1$ faults in the trusted payment intermediary model, such that if the payment system is balanced, the best case message complexity is $\mathcal{O}(\log_{K}(N))$ per transaction. The number of trusted intermediaries per transaction is also $\mathcal{O}(\log_{K}(N))$. 
\end{thm}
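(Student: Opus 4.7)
The plan is to invoke Proposition~\ref{prop:msg-vs-graph-diam} with a carefully chosen collection of $2K$ Hamiltonian cycles $C_{1},\dots,C_{2K}$ on $[N]$. Fault tolerance up to $f<N-1$ is inherited automatically from the cycle-coin building block (Construction~\ref{construction:cycle-coin}), so the entire task reduces to engineering the $C_{k}$ so that the graph $G_{i}$ of Proposition~\ref{prop:msg-vs-graph-diam} has diameter $\mathcal{O}(\log_{K}N)$ in the balanced regime. The first step is to observe that balance (Definition~\ref{condition:balanced}) makes every cycle-hop edge present: for any $k,k'$ and any process $P_{a}$, one may move from $(k,P_{a})$ to $(k',P_{a})$ at the cost of a single $G_{i}$-edge. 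Consequently $D_{a,b}$ is controlled, up to a constant factor, by the graph-theoretic distance in the multigraph $H:=([N],\bigcup_{k}\mathrm{Edges}(C_{k}))$, and it suffices to ensure $\mathrm{diam}(H)=\mathcal{O}(\log_{K}N)$.

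The key structural fact I would exploit is that a $\Theta(K)$-regular graph on $N$ vertices can achieve diameter $\mathcal{O}(\log_{K}N)$---this is essentially the Moore bound---and for a suitably chosen such graph the edge set decomposes into $\mathcal{O}(K)$ Hamiltonian cycles. Concretely, I would identify the $N$ processes with length-$L$ strings over an alphabet of size $K$, where $L=\lceil\log_{K}N\rceil$, and design the $2K$ cycles so that the combined graph realises a digit-based routing: to move from $a=(a_{L-1},\dots,a_{0})$ to $b=(b_{L-1},\dots,b_{0})$, one corrects the digits of $a$ one position at a time, switching cycles via a cycle-hop between successive digit fixes. If each digit correction costs $\mathcal{O}(1)$ cycle-steps on some $C_{k}$, then the total $G_{i}$-path length is $\mathcal{O}(L)=\mathcal{O}(\log_{K}N)$ with at most $L$ cycle-hops (hence at most $\mathcal{O}(\log_{K}N)$ trusted intermediaries), and Proposition~\ref{prop:msg-vs-graph-diam} immediately converts this into the claimed best-case message complexity.

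The main technical obstacle is exhibiting an explicit decomposition of a diameter-$\mathcal{O}(\log_{K}N)$, $\mathcal{O}(K)$-regular graph into $\mathcal{O}(K)$ Hamiltonian cycles that supports $\mathcal{O}(1)$-cost-per-digit corrections. A natural first attempt is the Cayley graph of $\mathbb{Z}_{N}$ with generating set $\{\pm 1,\pm K,\pm K^{2},\dots\}$, but the shifts $\pm K^{j}$ are Hamiltonian only when $\gcd(K^{j},N)=1$, which generically fails; a plausible fix is to embed into $\mathbb{Z}_{N'}$ for a prime $N'\geq N$ and perform local surgery to splice the extra vertices out of each cycle without breaking Hamiltonicity or the routing cost. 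An alternative I would consider is a wrapped-butterfly or de~Bruijn-style construction in which each of the $2K$ cycles services all $L$ digit positions by exploiting the shift-symmetry of the index set. Either way, once the cycle family is fixed the two preceding paragraphs close out the theorem, with the trusted-intermediary count automatically bounded by the number of cycle-hops along the routing path.
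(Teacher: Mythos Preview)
Your reduction is exactly right and mirrors the paper: invoke Proposition~\ref{prop:msg-vs-graph-diam}, observe that balance makes every cycle-hop edge available so that $D_{a,b}$ is bounded (up to a factor of~$2$) by the diameter of the multigraph $H=\bigl([N],\bigcup_k \mathrm{Edges}(C_k)\bigr)$, and then arrange for $H$ to be a $\Theta(K)$-regular graph of diameter $\mathcal{O}(\log_K N)$. Where you diverge from the paper is in how you produce the cycles.

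The paper does not attempt an explicit construction at all. It simply takes a \emph{random} $2K$-regular graph $G_{N,2K}$ and invokes two off-the-shelf results: (i) almost every $r$-regular graph has diameter $\mathcal{O}(\log_r N)$ (Bollob\'as--de la Vega), and (ii) for even $r\ge 4$, a random $r$-regular graph a.a.s.\ admits a decomposition into $r/2$ edge-disjoint Hamiltonian cycles (Kim--Wormald). Taking both directions of each undirected Hamiltonian cycle yields the $2K$ directed cycles, and the diameter bound on $G_{N,2K}$ transfers to $G_i$ with a factor-of-two loss for the cycle-hop edges. This sidesteps entirely the coprimality and splicing headaches you identify with the $\mathbb{Z}_N$/Cayley approach.

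Your route, if completed, would have the advantage of being explicit and of coming with a concrete $\mathcal{O}(1)$-per-digit routing scheme, whereas the paper's is purely existential. But as written your proposal stops short of actually exhibiting a working cycle family: the $\pm K^j$ shifts fail Hamiltonicity for generic $N$, and neither the prime-embedding surgery nor the wrapped-butterfly alternative is worked out. The paper's probabilistic argument is the missing ingredient that closes this gap cleanly in two lines.
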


\begin{proof}
Pick a random, undirected $r=2K\geq 4$ regular graph on $N$ processes, $G_{N,r}$. We use two well known properties about such graphs:

\begin{enumerate}
    \item If $r\geq 4$ is even, then $G_{N,r}$ asymptotically almost surely has a complete Hamiltonian decomposition into edge disjoint Hamiltonian cycles $t_1,\dots,t_{\frac{r}{2}}$. \cite{KW01}
    
    \item Almost every $r$ regular graph has diameter at most $d\geq \mathcal{O}\left(\frac{\log(3rN\log(N))}{\log(r)}\right)=\mathcal{O}(\log_{r}{N})$. \cite{BV82}
\end{enumerate}

Thus, for sufficiently large $N$, we can pick $G_{N,2K}$ such that it has diameter $\mathcal{O}(\log_{K}{N})$, and can be decomposed into undirected Hamiltonian cycles $t_1,\dots,t_K$.

\

For each undirected $t_i$, choose two directed cycles $C_{2i},C_{2i+1}$ for each direction around the cycle $t_i$. Now construct a cycle payment system $PS$ consisting of the cycles $C_{1},\dots,C_{2K}$. Now we invoke proposition \ref{prop:msg-vs-graph-diam}: consider the corresponding graph $G_{i}$ at round $i$ in the statement of the proposition. Because $PS$ is balanced, the cycle hop edges of $G_{i}$ do not change between rounds. Consequently, $G_{i}$ is fixed for all $i$, and determined by the cycles $C_{1},\dots,C_{2K}$; thus we can assume each process $P_{a}$ can compute shortest paths in $G_{i}$ for any round $i$ without the need to send any messages. Moreover, we claim that $\max_{a,b \in [N]} D_{a,b}=\mathcal{O}(\log_{K}{N})$, from which the claim follows by proposition \ref{prop:msg-vs-graph-diam}. To see this, notice that for any $a,b \in [N]$, given a shortest path $P$ in $G_{N,2K}$ connecting $P_{a}$ to $P_{b}$, we can find a corresponding path in $G_{i}$ with at most twice the number of edges: The path $P$ can be decomposed into a sequence of paths $l_1 \rightarrow \dots \rightarrow l_z \rightarrow \dots \rightarrow l_Z$ where each $l_z$ moves in a particular direction around some Hamiltonian cycle $t_{j_{z}}$, and $t_{j_{z}}$ and $t_{j_{z+1}}$ are distinct cycles. For notational convenience, we will say that $l_{z}$ ends at the same vertex $l_{z+1}$ begins (so that adjacent paths share endpoints). By the choices of the cycles $C_{1},\dots,C_{2K}$, we can follow each path $l_z$ in $G_{N,2K}$  with a path $l'_{z}$ of the same length in $G_{i}$, by following the corresponding cycle (either $C_{2j_{z}}$ or $C_{2j_{z}+1}$, depending on orientation) in $G_{i}$: both paths begin and end at the same process. Moreover, by following a cycle hop between cycles and using the fact that the payment system is balanced, we can find an edge $e_{z}$ in $G_{i}$ so that the paths $l_{z} \rightarrow l_{z+1}$ in $G_{N,2K}$ and $l'_{z} \rightarrow e_{z} \rightarrow l'_{z+1}$ both end at the same processes. Inductively, $P'=l'_{1} \rightarrow e_{1} \dots \rightarrow e_{Z-1} \rightarrow l'_{Z}$ has at most twice the path length of $l_{1} \rightarrow \dots \rightarrow l_{Z}$, and $P'$ starts at process $P_{a}$ and ends at process $P_{b}$. It follows by another use of the well balanced condition that $D_{a,b} \leq |P'|$. 
\end{proof}

We emphasize that the statement of \ref{thrm:short-cyc} is, in an important measure, weaker than what can actually be achieved. Given a highly connected graph $G_{i}$, the process $P_{a}$ initiating the payment may have reasonable flexibility in choosing which participants are required to be honest for the message complexity of the transaction to be low. In particular, $P_{a}$ can actively look for processes willing to facilitate the transaction when trying to find a short path through $G_{i}$, rather than being at the mercy of an arbitrary subset of $\mathcal{O}(\log_{K}(N))$ participants to behave honestly. 

\

There are many different choices of cycles one can try to combine to get small graph diameter. Condition \ref{condition:balanced} is overly strong because in practice, especially on random graphs, removing a few edges because we cannot make a cycle jump (say, because $Value(C_k,P_j)=0$ on a cycle $C_k$) will not significantly affect the graph diameter. $P_{a}$ might try a local search along a few different short paths before finding one which works. To illustrate this idea more concretely, we briefly sketch another construction using two cycles: while we don't give any analysis, we leave it to the reader to convince themselves that such a construction allows for small message complexity when the balance condition is satisfied, and that the construction is reasonably robust to removing some between-cycle edges. 

\begin{construction}

\ 

\begin{enumerate}
    \item Start with the standard cycle on $N$ processes, where the edges have the orientation $n \rightarrow n-1$. 
    
    \item Construct a sequence of paths recursively as follows: Start at process $0$. Extend a directed edge to the median $m$ of $[0,N]$. Now recursively (a): continue extending this path to the median of $[m,N]$, and (b) start a new path at $m-1$, extending it to the median of $[0,m-1]$, and so on. At the end of this process, we will have a number of disjoint directed paths. Join them all together to get the second cycle. The picture looks like this:
    
    \begin{figure}[!h]
\begin{center}

 \begin{tikzpicture}
\def \n {28}
\def \radius {5cm}
\def \margin {8}
\foreach \s in {0,...,\n}
{
  \node[draw, circle, fill=blue!20] (\s) at ({360/(\n+1) * (\s - 1)}:\radius) {$\s$};
  \draw[->, >=latex] ({360/(\n+1) * (\s - 1)+\margin}:\radius) 
    arc ({360/(\n+1) * (\s - 1)+\margin}:{360/(\n+1) * (\s)-\margin}:\radius);
}

\path [->,dashed] (0) edge (14);

\path [->,dashed,bend left] (14) edge (21);

\path [->,dashed,bend right] (13) edge (6);

\path [->,dashed,bend left] (6) edge (9);

\path [->,dashed,bend right] (5) edge (3);

\path [->,dashed,bend left] (21) edge (25);

\path [->,dashed,bend right] (20) edge (17);

\path [->,dashed,bend left] (25) edge (27);

\path [->,dashed,bend right] (2) edge (1);

\path [->,dashed,bend right] (24) edge (23);

\path [->,dashed,bend left] (17) edge (18);

\path [->,dashed,bend right] (16) edge (15);

\path [->,dashed,bend left] (9) edge (11);

\path [->,dashed,bend left] (11) edge (12);

\path [->,dashed,bend right] (8) edge (7);

\path [->,dashed,bend left] (3) edge (4);

\path [->,dashed,bend left] (27) edge (28);
  
\end{tikzpicture}
\end{center}
\caption{Recursively constructed "binary search" cycles for $N=29$.}
\end{figure}
\end{enumerate}
\end{construction}

We end this section with a final comment: One may wonder to what extent $3$rd party trust is really necessary for this construction. For example, if $P_{c}$ promises $P_{a}$ it will pay $P_{b}$ in round $i$ but does not, can we not use a similar idea like the one in construction \ref{proof-of-response} (proof of response) to have the network come to a consensus about whether $P_{c}$ cheated? The key issue is making sure that if we catch $P_{c}$ cheating, $P_{c}$ still has some non-zero value it can be forced to pay back to compensate $P_{a}$. Note that a solution of low best case message complexity which doesn't rely on trusted intermediaries doesn't necessarily violate the message complexity lower bound of $\Omega(Nf)$ for payment systems in the best case (Proposition \ref{message-complexity:prop:mclb}), because this was proved for the Marker Problem where the distribution of income is centered only on $P_1$, but our low message constructions rely on the distribution of income being sufficiently spread out with respect to the network topology. Such an idea may therefore work, but the solution constructed in the fault tolerant model may also make heavy use of timing assumptions which would not realistically translate to use over the internet. In practice, one would need to make $P_{c}$ put money in escrow for a certain amount of time while it was behaving as an intermediary, and this may require some kind of global consensus to achieve. The idea of using peer to peer payment indirection, putting money in escrow, and using consensus to resolve peer to peer indirection disputes is very similar in spirit to the Lightning Network \cite{lightning-network}. We find it interesting that by considering a corner case of Lemma \ref{message-complexity:lem:message-complexity-intersection} in an ideal model of payment systems, we have been led down a road which ended with ideas similar to those being experimented with in real world payment systems. We now turn the reader's attention to the appendix, which gives a high level overview of the lightning network and its relation to Part \ref{part:trusted-parties}.  

\subsection{Trusted Third Parties for Coordinating Payment Cancellation}

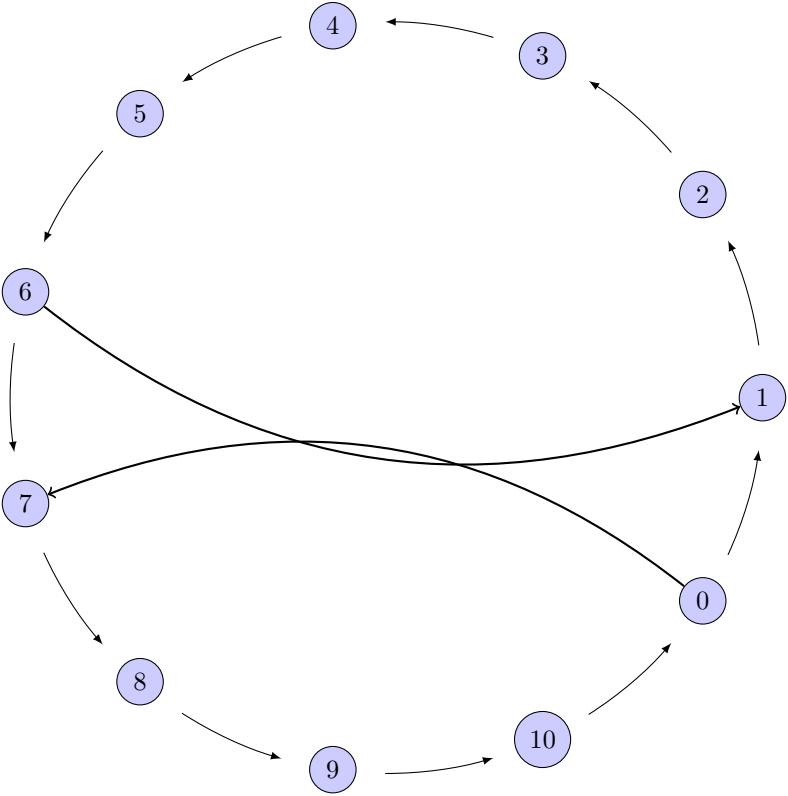
\begin{figure}[!h]
\begin{center}
\begin{tikzpicture}
\def \n {10}
\def \radius {5cm}
\def \margin {8}
\foreach \s in {0,...,\n}
{
  \node[draw, circle, fill=blue!20] (\s) at ({360/(\n+1) * (\s - 1)}:\radius) {$\s$};
  \draw[->, >=latex] ({360/(\n+1) * (\s - 1)+\margin}:\radius) 
    arc ({360/(\n+1) * (\s - 1)+\margin}:{360/(\n+1) * (\s)-\margin}:\radius);
}
    
\path [->,thick, bend right] (0) edge (7);

\path [->,thick, bend right] (6) edge (1);

\end{tikzpicture}
\end{center}
\caption{The payments $P_{0} \rightarrow P_{7}$ and $P_{6} \rightarrow P_{1}$ can be paired together cancel their common payment path intersection, reducing the overall message complexity.}
\end{figure}

In this last subsection, we sketch a simple idea which can also be used to reduce the message complexity in a third party trusted model. Imagine a case where there is a trusted third party which is responsible for coordinating information about payments between processes. If the third party is dishonest, the only consequence is that the message complexity per transaction increases (but no security guarantees are violated). In practice, we imagine a market of such "information intermediaries" which compete to offer services which give the best information/reduction in message complexity. 

\ 

The idea is fairly simple: consider at round $i$, in the multiple payments per round model, of processes making $Q$ payments $\{P_{a_i} \rightarrow P_{b_i}\}_{i \in [Q]}$ on a cycle payment system consisting of a single cycle $C$. If $P_{a_i}$ is much closer to $P_{b_j}$ on $C$ than $P_{a_i}$ is to $P_{b_i}$, and symmetrically, then the central coordinator can pair $P_{a_i},P_{a_j}$ together so that $P_{a_i}$ pays $P_{b_j}$ on behalf of $P_{a_j}$ and $P_{a_j}$ pays $P_{b_i}$ on behalf of $P_{a_i}$ under the same trust model of Definition \ref{df:trusted-intermediary}. In general, multiple processes may cooperate together to collectively "cancel" their payment paths, and be required to collectively trust each other. In practice, we imagine a system where a central coordinator will have clients which routinely join and coordinate with other processes in this group. The membership of a process will be conditional on it never cheating in the group, and the coordinator may offer insurance against such behavior. Better coordinators will receive more clients, and well-behaved clients will be able to join coordinators with larger client pools, leading to reduced message complexity per transaction. 

\

How should the coordinator pair multiple transactions together? Given a collection of payers/sources $\{P_{a_i}\}_{i \in [Q]}$ and recipients/sinks $\{P_{b_i}\}_{i \in [Q]}$, define the cost of the pairing $P_{a_i}$ with $P_{b_j}$, $c_{i,j}$ to be the length of the directed cycle on $C$ from $P_{a_i}$ to $P_{b_j}$, which is proportional to the message complexity associated with such a payment. By repairing difference sources/sinks, we would like to minimize the sums of these costs. Consider the algorithm GREEDY, which at each iteration $i \in [Q]$, picks an arbitrary source/sink, and pairs it with a sink/source not already paired which has the smallest cost associated with its pairing. 

\begin{prop}
GREEDY produces a pairing with minimal total cost.
\end{prop}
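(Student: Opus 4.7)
The plan is to prove the result by a standard greedy exchange argument, using a special algebraic property of directed-distance costs on a cycle. Throughout, write $c(s,t) = (\mathrm{pos}(t) - \mathrm{pos}(s)) \bmod N \in [0, N)$ for the cost of pairing source $s$ with sink $t$.

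First I would establish the following \emph{swap identity}: for any two sources $s, s'$ and two sinks $t, t'$, the quantity
$$\Delta \;:=\; \bigl[c(s,t) + c(s',t')\bigr] \;-\; \bigl[c(s,t') + c(s',t)\bigr]$$
lies in $\{-N, 0, N\}$. The reason is elementary: in $\mathbb{Z}$ (before any modular reduction) we have $(t-s) + (t'-s') = (t-s') + (t'-s)$, so after reducing each of the four terms mod $N$ the two bracketed sums differ by a multiple of $N$. Since each $c(\cdot,\cdot) \in [0, N)$, each bracket lies in $[0, 2N)$, forcing $|\Delta| < 2N$, so $\Delta \in \{-N, 0, N\}$. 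This identity is the one structural fact about the cost function that I will use.

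Next I would run the exchange. Suppose GREEDY's first step pairs $(s, t)$; by construction either it picked a source $s$ and chose $t$ minimizing $c(s, \cdot)$ over available sinks, or it picked a sink $t$ and chose $s$ minimizing $c(\cdot, t)$ over available sources (the two cases are symmetric). Let $\pi^*$ be any optimal pairing. If $\pi^*(s) = t$ we are done; otherwise $\pi^*(s) = t'$ and $\pi^*(s') = t$ for some $s' \neq s$, $t' \neq t$. Form $\pi'$ by replacing these two pairs with $(s,t), (s', t')$; the total cost change is exactly the $\Delta$ above, so $\Delta \in \{-N, 0, N\}$. Optimality of $\pi^*$ rules out $\Delta = -N$. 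The case $\Delta = N$ is incompatible with the greedy choice: in the source-picked case, $\Delta = N$ rearranges to $c(s', t') - c(s', t) = N + [c(s, t') - c(s, t)] \geq N$, contradicting $c(s', t') < N$ and $c(s', t) \geq 0$; the sink-picked case is analogous after interchanging the roles of the two coordinates. Hence $\Delta = 0$, so $\pi'$ is itself optimal and contains GREEDY's first pair.

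Finally I would finish by induction on $Q$. Deleting $s$ and $t$ produces an instance of the same form with $Q-1$ source/sink pairs, and GREEDY continues on the remaining elements exactly as it would if started fresh on this sub-instance. The inductive hypothesis gives that the remaining greedy matching is optimal on the sub-instance, and combining with $(s,t)$ yields an optimum for the original instance; the base case $Q = 0$ is trivial. The only non-routine ingredient is the swap identity and the particular way the bound $c(\cdot,\cdot) < N$ --- a feature special to directed cycle distance --- kills the would-be damaging case $\Delta = N$; I do not anticipate any other obstacle.
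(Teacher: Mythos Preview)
Your argument is correct and follows the same overall shape as the paper --- induction on $Q$, with an exchange step that forces the greedy first pair into an optimal matching --- but the mechanism you use to certify the swap is different. The paper carries out a geometric case analysis: it places $P_{b_{j'}}$ on the directed arc $P_{a_i}\to P_{b_j}$ (since it is the closer sink) and then distinguishes three positions of $P_{a_{i'}}$ relative to that arc, checking in each case that the swapped matching has cost at most that of the original optimum. Your route replaces this casework with the algebraic identity $\Delta\in\{-N,0,N\}$, then eliminates $-N$ by optimality of $\pi^*$ and $+N$ by the greedy minimality of $c(s,t)$. The payoff of your approach is a cleaner, case-free argument that makes transparent why directed cycle distance is special (the bound $c(\cdot,\cdot)<N$ is exactly what kills the bad case); the paper's approach, by contrast, yields the slightly stronger unconditional inequality $\mathrm{cost}(S')\le \mathrm{cost}(S)$ without invoking optimality of $S$ in the swap step, and is perhaps more visually suggestive of what is going on on the cycle. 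Either suffices.
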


\begin{proof}
Induct on the number of sources $Q$, with $Q=1$ being immediate. Now suppose we have a collection $Q$ of sources $\{P_{a_i}\}_{i \in [Q]}$ and sinks $\{P_{b_i}\}_{i \in [Q]}$ which have not already been paired. Let $S$ be a pairing which minimizes the total cost. Without loss of generality, pick an arbitrary source $P_{a_i}$ which is paired to $P_{b_j}$ in solution $S$. If $P_{b_j}$ is already a sink which minimizes the path length $|P_{P_{a_i},P_{b_j}}|$, we remove the pairing $P_{a_i},P_{b_j}$ from the source and sink list and are done by induction. Otherwise, there is some sink $P_{b_{j'}}$ with $|P_{P_{a_i},P_{b_{j'}}}|<|P_{P_{a_i},P_{b_j}}|$. Let $P_{a_{i'}}$ be the source connected to $P_{b_{j'}}$ in solution $S$. Construct a new solution $S'$ which repairs $P_{a_{i}}$ with $P_{b_{j'}}$ and $P_{a_{i'}}$ with $P_{b_{j}}$, hence having the property that $P_{a_{i}}$ is paired with a nearest sink. If we can show $S'$ does not have greater total cost than $S$, then we will be done by induction. 

\ 

Since $|P_{P_{a_i},P_{b_{j'}}}|<|P_{P_{a_i},P_{b_j}}|$, it must be that $P_{b_{j'}}$ lies on the path $P_{P_{a_i},P_{b_j}}$ and occurs strictly before $P_{b_j}$. Consider three cases: (i), $P_{a_{j'}}$ lies on the path $P_{P_{a_i},P_{b_j}}$ and in addition (a) occurs on the path $P_{a}$ to $P_{b_{j'}}$ ($S'$ has the same cost as $S$), (b) occurs on the path $P_{b_{j'}}$ to $P_{b}$ ($S'$ has cost less than or equal to $S$), (ii) $P_{a_{j'}}$ does not lie on the path $P_{P_{a_i},P_{b_j}}$ ($S'$ has the same cost as $S$). 
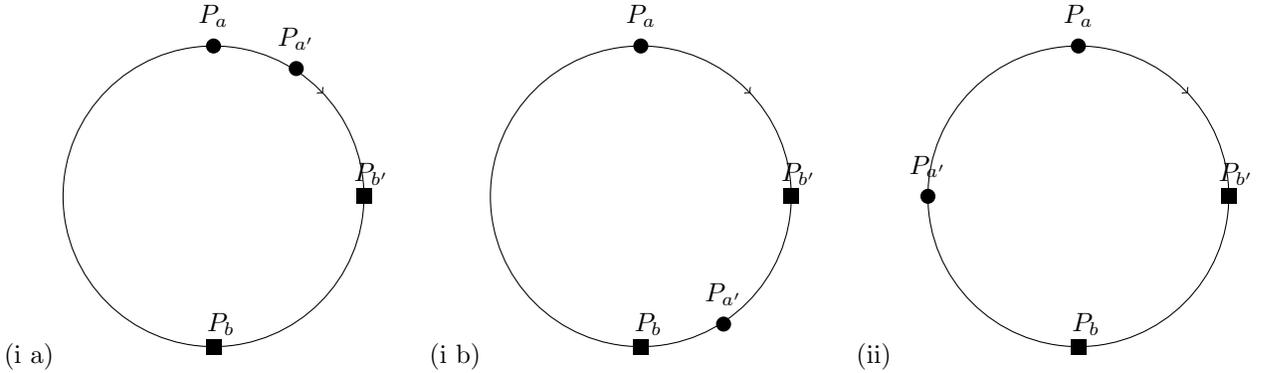
\begin{figure}[!h]
\begin{center}
\begin{tabular}{ccc}
(i a)
 \begin{tikzpicture}

    \draw[ 
        decoration={markings, mark=at position 0.125 with {\arrow{<}}},
        postaction={decorate}
        ]
        (0,0) circle (2);


    \fill (-0.1,-1.9) rectangle (0.1,-2.1) node[text=black, label=$P_{b}$] {}; 
    \fill (0,2) circle (0.1) node[text=black, label=$P_{a}$] {}; 
    \fill (1.9,0.1) rectangle (2.1,-0.1) node[text=black, label=$P_{b'}$] {}; 
    \fill (1.1,1.7) circle (0.1) node[text=black, label=$P_{a'}$] {}; 
    
\end{tikzpicture}  & (i b) 

\begin{tikzpicture}
\draw[ 
        decoration={markings, mark=at position 0.125 with {\arrow{<}}},
        postaction={decorate}
        ]
        (0,0) circle (2);


    \fill (-0.1,-1.9) rectangle (0.1,-2.1) node[text=black, label=$P_{b}$] {}; 
    \fill (0,2) circle (0.1) node[text=black, label=$P_{a}$] {}; 
    \fill (1.9,0.1) rectangle (2.1,-0.1) node[text=black, label=$P_{b'}$] {}; 
    \fill (1.1,-1.7) circle (0.1) node[text=black, label=$P_{a'}$] {}; 
\end{tikzpicture} & (ii) 

\begin{tikzpicture}
\draw[ decoration={markings, mark=at position 0.125 with {\arrow{<}}},
        postaction={decorate}
        ]
        (0,0) circle (2);


    \fill (-0.1,-1.9) rectangle (0.1,-2.1) node[text=black, label=$P_{b}$] {}; 
    \fill (0,2) circle (0.1) node[text=black, label=$P_{a}$] {}; 
    \fill (1.9,0.1) rectangle (2.1,-0.1) node[text=black, label=$P_{b'}$] {}; 
    \fill (-2,0) circle (0.1) node[text=black, label=$P_{a'}$] {}; 
\end{tikzpicture} 
\end{tabular}
\end{center}
\caption{Cases for GREEDY.}
\end{figure}

\end{proof}

While it is not done so here, it might be interesting to explore simple conditions on the distribution of payments in the multiple transaction per round model which cause GREEDY to give low cost solutions. 

\begin{prop}
In the multiple transaction per round, trusted payment intermediary model, there exists a deterministic payment system tolerating any $f<N-1$ faults with best case message complexity in round $i$ of $|GREEDY(\{P_{a_i}\}_{i \in [Q]},\{P_{b_i}\}_{i \in [Q]})|/Q$, where $\{P_{a_i}\}_{i \in [Q]}$ are the processes which make payments to $\{P_{b_i}\}_{i \in [Q]}$ in round $i$. 
\end{prop}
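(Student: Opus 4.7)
The plan is to combine the single-cycle payment system of Corollary \ref{cycle-payment-systems-cor} with the GREEDY pairing mechanism, arbitrated by the trusted third party coordinator described in this subsection. Fix an arbitrary directed Hamiltonian cycle $C$ on the $N$ processes and instantiate $PS_C$ from that corollary; this already tolerates any $f<N-1$ faults, supports arbitrary initial coin distributions, has best-case per-payment message cost $\mathcal{O}(|P_{P_{a},P_{b}}|)$ along $C$, and provides proofs of payment via Proposition \ref{prop:cyc-coin}. At the start of each round $i$, every process with an intended payment publishes its (source, sink) pair to the coordinator, which runs GREEDY to obtain a pairing $\pi$ and broadcasts it back.

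For each $k \in [Q]$, invoke the trusted payment intermediary mechanism of Definition \ref{df:trusted-intermediary}: $P_{a_k}$ signs an intent to pay $P_{b_k}$ at round $i$ and uses $P_{a_{\pi^{-1}(k)}}$ as its single intermediary, which in turn signs a promise to deliver a coin to $P_{b_k}$ along $C$. In aggregate, each $P_{a_k}$ performs exactly one cycle-step payment in $PS_C$, from itself to $P_{b_{\pi(k)}}$, costing $\mathcal{O}(c_{k,\pi(k)})$ messages in the best case by Corollary \ref{cycle-payment-systems-cor}. Summing over all $Q$ pairs, the round total is $\mathcal{O}(|GREEDY|)$ (the $O(Q)$ coordinator and setup messages are dominated by the cycle payments whenever the average path length is at least a constant, and are trivially zero when source and sink coincide), giving an amortized per-transaction cost of $|GREEDY|/Q$.

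Correctness reduces to Corollary \ref{cycle-payment-systems-cor} and Definition \ref{df:trusted-intermediary}: non-duplication (S1), self-consistency (S3), and non-impersonation (S2) of Definition \ref{def:full-PS} are inherited unchanged from $PS_C$, and best-case liveness (L1) holds because when all processes behave honestly every intermediary honours its promise. Specifically, $P_{b_k}$ decides that $P_{a_k}$ paid it upon receiving $P_{a_k}$'s signed intent (forwarded through the intermediary) together with a valid $PS_C$-chain ending at $P_{b_k}$. If instead some intermediary misbehaves, the proof-of-payment mechanism of Proposition \ref{prop:cyc-coin} allows $P_{a_k}$ to identify it as cheating, as required by Definition \ref{df:trusted-intermediary}; importantly this affects only the message complexity, not the safety properties of the underlying $PS_C$.

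The main subtlety will be the bookkeeping that attributes each paired cycle payment back to the originally intended sender--recipient pair $(P_{a_k}, P_{b_k})$, rather than to the intermediary that actually delivered the coin on $C$. This is handled exactly as in the construction for Proposition \ref{prop:msg-vs-graph-diam}: $P_{a_k}$'s signed intent is exchanged up front and $P_{b_k}$ only registers $P_{a_k} \in S_{b_k,i}$ upon matching that intent with the delivered chain. These auxiliary messages contribute only $O(1)$ per pair and are absorbed into the $\mathcal{O}(|GREEDY|)$ bound, so the amortized per-transaction bound of $|GREEDY|/Q$ is preserved.
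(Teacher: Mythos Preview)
Your proposal is correct and matches the paper's intended construction. In fact, the paper does not supply a formal proof for this proposition at all: it is stated as an immediate consequence of the surrounding discussion, where a coordinator pairs the sources $\{P_{a_i}\}$ with the sinks $\{P_{b_i}\}$ via GREEDY and each $P_{a_k}$ pays $P_{b_{\pi(k)}}$ on behalf of $P_{a_{\pi^{-1}(k)}}$ under the trust model of Definition~\ref{df:trusted-intermediary}. Your write-up simply makes explicit the pieces the paper leaves implicit---instantiating the single-cycle system of Corollary~\ref{cycle-payment-systems-cor}, invoking the signed-intent bookkeeping from the proof of Proposition~\ref{prop:msg-vs-graph-diam}, and summing the per-pair costs $c_{k,\pi(k)}$ to obtain $|GREEDY|$---so there is nothing substantive to compare.
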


\newpage

\part{Wrapping Up}\label{part:conclusion}

\section{Conclusions}

This thesis began with the aim of interrogating the following assumption:

\begin{assumption*}
Distributed payment systems cannot exist without achieving regular global consensus about which payments have occurred.
\end{assumption*}

By the end of this thesis, we were able to get a clearer idea of the validity of this assumption in the following ways:

\paragraph{In the formal fault tolerance model, payment systems are weaker than consensus:}

\

We showed in proposition \ref{prop:PS-to-BG} that there is a single step back box reduction from payment systems to Byzantine Broadcast, showing that if we can solve the consensus problem, then we can implement a payment system. On the other hand, by using known lower bound results for Byzantine Broadcast and constructing a low round complexity solution to the Marker Problem, we were able to show in Theorem \ref{thm:no-black-box} that there exists no black box, or even "see-through box" reduction from Byzantine Broadcast to payment systems which is significantly better than trivially using a payment system as an unauthenticated messaging channel.

\ 

\paragraph{Under a reasonable trust model, we need not achieve regular global consensus in order to facilitate payments:}

\ 

In section \ref{sec:message-complexity}, Construction \ref{construction:cycle-coin}, we showed how to construct simple solutions to the Marker Problem which did not require global consensus to transact when processors are well behaved. In contrast, we also showed a tight lower bound of $\Omega(Nf)$ for the best case message complexity of the Marker Problem in Proposition \ref{message-complexity:prop:mclb}, indicating that if we wanted to do much better than Construction \ref{construction:cycle-coin}, we would either need to change the model of our problem, or make assumptions about the distribution of income. In part \ref{part:trusted-parties}, we extended our model to allow trusted intermediaries. We showed that there exist highly fault tolerant payment systems in this model which facilitate inherently local transactions in the best case. In Theorem \ref{thrm:short-cyc} we collected these ideas to show that under certain transaction distribution assumptions, we can tolerate any $f<N-1$ faults with best case message complexity  $\mathcal{O}(\log_{d}(N))$ per transaction and $\mathcal{O}(\log_{d}(N))$ trusted intermediaries per transaction. Major cryptocurrencies such as Bitcoin do not tolerate more than $f$ faults for $2f<N$, and require achieving consensus at every round about which transactions have occurred (in our model, which is not the same model as the one a real-world cryptocurrency operates under, this would require sending $\Omega(N)$ messages per transaction in the best case). This result implies that in the model we have chosen, the assumption that payment systems require regular global consensus about which transactions have occurred is not necessarily true: at the least, this degree of consensus is not needed in the best case. 

\

Despite these results, it is also important to point out that all of the constructions given were analyzed in models which are different than the ones in which practical cryptocurrencies operate under (these differences are detailed in section \ref{sec:crypto} part \ref{part:prelim}). While we hope the ideas given in these formal models lead to useful translations in more practical settings, such translations are not always obvious. We state the gaps in this understanding in the form of posing new problems:

\newpage

\paragraph{Part \ref{part:payment-systems}}

\begin{concquestion}
Does there exist a reduction from the $K$ round marker problem to the $2$ round Marker Problem? What about the $1$ Round Marker Problem? (Definition \ref{payment-systems:df:marker-game}).
\end{concquestion}

\ 

Notice that the heavy requirement that "all processes behave honestly" in order to get low message complexity in Construction \ref{construction:cycle-coin}, and hence the construction in Theorem \ref{thrm:short-cyc}, is because the dispute resolution mechanism which forces processes to extend chains uses Byzantine Broadcast as a black box. If anyone requests the network to decide on a response for process $P_{n}$, then the message complexity is dominated by the number of messages required to run Byzantine Broadcast. If we could therefore force processes to be responsive without this consensus mechanism, then we could significantly lower the message complexity even without best case behavior. 

\

\begin{concquestion}
Is there a way to force, or strongly incentivize processes to be responsive in Construction \ref{construction:cycle-coin} without a global consensus mechanism?\footnote{For example, if there is a financial penalty associated with being unresponsive, then we can have a bound on the message complexity in terms of how much an adversary needs to spend to increase the number of messages.}
\end{concquestion}

\paragraph{Part \ref{part:trusted-parties}}

\begin{concquestion}
By using tools for leader election in Randomized Byzantine Agreement, can the ideas presented in part \ref{part:trusted-parties} be extended to the realistic case where the $N$ participants are unknown and may be \textit{online} or \textit{offline}? 
\end{concquestion}

\begin{concquestion}
Can we find realistic, more well motivated conditions on the distribution of transactions which guarantee we can always efficiently find a short payment path in Proposition \ref{prop:msg-vs-graph-diam}?
\end{concquestion}

\begin{concquestion}
Given a transaction distribution, can we always construct a cycle payment system with the guarantee that there exists (with high probability) a short payment path as in Proposition \ref{prop:msg-vs-graph-diam}?
\end{concquestion}

\newpage

\section{Appendix}

\subsection{The Lightning Network and Part \ref{part:trusted-parties}}

 We will briefly describe the core ideas behind the lightning network, and then move to explaining how Part \ref{part:trusted-parties} relates to these ideas.  

\ 

The lightning network is designed to operate as a second layer protocol, on top of Bitcoin. There does not yet seem to be any formal analysis of its performance, although there has been some empirical analysis \cite{lightning-analysis-1}, \cite{lightning-analysis-2}. The idea is the following: suppose Alice and Bob regularly make small payments to each other. Instead of making all of these small transactions on the blockchain, Alice can create a payment channel with Bob. This involves Alice "paying" $1$ Bitcoin into a new payment channel with Bob by putting a transaction on the blockchain, effectively putting her $1$ Bitcoin in escrow. Alice's payment channel is effectively a local ledger between Alice and Bob, where Alice has balance $1$ and Bob has balance $0$. Now when Alice wants to pay Bob 0.5 Bitcoin, she sends it through her payment channel by updating the local ledger. Alice and Bob both sign the update, and Bob is "paid" through the payment channel. Alice and Bob now both have 0.5 Bitcoin on the local payment channel. If Bob wants to now pay Alice, he can send back the 0.5 Bitcoin on the payment channel by signing an updated copy of the local ledger. All of this happens without communicating to the blockchain, besides the initial setup of the payment channel. Now, if Bob has 0.5 Bitcoin in the payment channel with Alice, but would like to pay Charlie with this value, Bob dissolves the payment channel by publishing it to the blockchain. In particular, Bob publishes the most updated version of the local payment channel to the Blockchain. There's a chance Bob can cheat by publishing an outdated version of the payment channel which doesn't contain his payment to Alice (the blockchain cannot tell the difference, because the payment channel only involved communication between Alice and Bob). This problem is solved by putting a timelock on how quickly Bob can dissolve the channel: when he tries to do this, Alice has a few days to publish a "more recent" version of the payment channel to prove Bob is cheating\footnote{Since the more recent version contains Bob's signature in it, this proves that Bob intentionally published an outdated ledger.}. If Bob is cheating and Alice does this, Alice gets all the Bitcoin in the payment channel. Otherwise the payment channel is dissolved, and Alice's $1$ Bitcoin which she originally deposited in escrow to create the payment channel is now split between Alice and Bob on the global blockchain, according to the local ledger of the payment channel. 

\ 

The reason the lightning network is interesting is the following: again, suppose that Bob wanted to pay Charlie. There is another way for Bob to do this without dissolving his payment channel with Alice. In particular, if Alice has a local payment channel with Charlie in which Alice has a positive balance of 0.5 Bitcoin, then Bob can make Alice sign the promise "If Bob pays me 0.5 Bitcoin in the channel between Alice and Bob, I'll pay Charlie 0.5 Bitcoin in the channel between Alice and Charlie". If this happens, then Bob pays Alice and Alice pays Charlie, meaning that Bob effectively pays Charlie. None of this required any messages on the blockchain. The main constraint of this construction is only the capacity of the payment channels\footnote{This constraint also seems to be a major empirical limitation on the ability of the lightning network to facilitate transactions.} (for example, if Alice and Bob's payment channel started with 1 Bitcoin, Alice can only be an intermediary for Bob for a value of at most $1$ Bitcoin), and the topology of the connections of these channels. If Alice cheats and does not pay Charlie, then Bob publishes Alice's promise and proof of her cheating on the blockchain, and Alice loses the balance in her local payment channel. Of course, this idea of paying someone through an intermediary can be done inductively: the lightning network imagines that everyone might form connections and process the majority of small transactions through this network, leaving only large transactions for the blockchain. 

\

The lightning network's idea of using payment indirection between 2 party payment channels is very similar to Part \ref{part:trusted-parties}'s idea to use payment indirection to hop over long cycles. Note however that the lightning network and Part \ref{part:trusted-parties} operate in very different trust and network models:

\begin{enumerate}
    \item The lightning network is designed to be practical over the internet: participants might be temporarily offline\footnote{But they cannot be offline for too long, because then Alice might not be able to catch Bob cheating in time if he publishes an outdated copy of the local payment channel to the blockchain.}, and there are less strong timing assumptions than in a synchronous network. 
    
    \item The lightning network does not trust payment intermediaries (if an intermediary is dishonest, this can be proved and published to the blockchain), while Part \ref{part:trusted-parties}  is motivated in the context of having many intermediaries which can be trusted to behave well due to market incentives. 
\end{enumerate}

Despite these model differences, we believe there is an intuitive way to view the relation between the constructions in Part \ref{part:trusted-parties} and the lightning network: 

\begin{enumerate}
    \item For each local payment channel of one unit of value between $P_n,P_{n'}$ in the lightning network, "deposit" this value into a cycle-coin construction where the only two participants in the cycle are $P_n,P_{n'}$. Local two party payment channels in the lightning network correspond to two-member cycles: if Bob would like to pay Alice in the two-cycle, he signs the current longest chain, which currently ends at Bob, and sends it to Alice. Notice that in Construction \ref{construction:cycle-coin}, Alice's signature is not required to extend the chain in the special case of a 2-cycle: only the sender needs to sign the chain.  
    
    \item Each payment which moves between two payment channels in the lightning network corresponds to a "cycle hop" between 2-cycles in the network. 
    
    \item For Bob to dissolve a local payment channel and redeem a unit of value, he publishes the longest chain (ending at Bob) currently created by the 2-cycle which proves Bob is the marked process in this cycle. Alice has finite time to refute this proof by publishing a chain of greater weight. 
\end{enumerate}

\begin{figure}[!h]
    \centering
    \begin{tikzpicture}
    \def \scale {4}
     \node[shape=circle,draw=black] (B) at (0,0) {B};
     
     \node[shape=circle,draw=black] (A1) at (0,\scale) {A};
     
     \node[shape=circle,draw=black] (A2) at (\scale,\scale) {A};
     
     \node[shape=circle,draw=black] (A3) at (\scale+\scale,\scale+\scale) {A};
     
     \node[shape=circle,draw=black] (D) at (\scale+\scale,\scale) {D};
     
     \node[shape=circle,draw=black] (C) at (\scale,0) {C};
     
     \node[shape=circle,draw=black] (C2) at (\scale+\scale,0) {C};
     
     \node[shape=circle,draw=black] (E) at (\scale+\scale+0.5*\scale,0.5*\scale) {E};

     \path [->,solid,bend left] (B) edge (A1);
     \path [->,solid,bend left] (A1) edge (B);
     
     \path [->,solid,bend left] (A2) edge (C);
     \path [->,solid,bend left] (C) edge (A2);
     
     \path [->,solid,bend left] (A3) edge (D);
     \path [->,solid,bend left] (D) edge (A3);
     
     \path [->,solid,bend left] (C2) edge (E);
     \path [->,solid,bend left] (E) edge (C2);
     
     \path [-,dashed] (A1) edge (A2);
     \path [-,dashed] (A2) edge (A3);
     \path [-,dashed] (A3) edge (A1);
     \path [-,dashed] (C) edge (C2);

    \end{tikzpicture}
    \caption{The lightning network as a collection of 2-cycles connected by cycle-hops. The solid arrows correspond to 2-cycle payment channels, and the dashed lines correspond to cycle hops between 2-cycles.}
\end{figure}
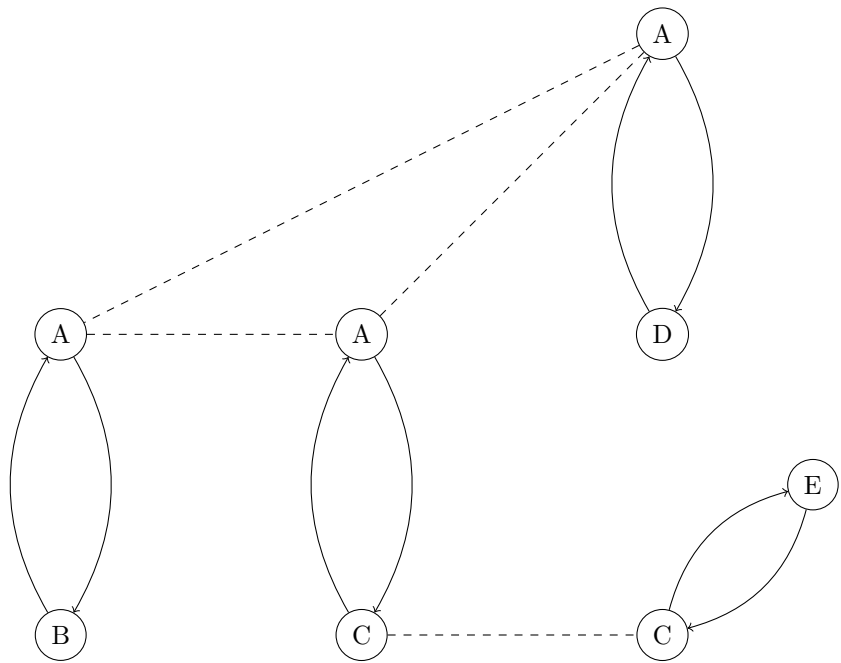

In general, the lightning network corresponds to the network topology of having a large number of 2-cycles of Construction \ref{construction:cycle-coin} connected together by cycle hops. The lightning network chooses to reach consensus on whether participants are honest in these cycle hops, which is why it has stronger security guarantees every time a payment moves through a cycle hop. We justify this correspondence as follows: notice that two party payment systems and 2-cycles have the same security guarantees and mechanism of enforcement. Bob only accepts an updated ledger from Alice if it is an extended ledger from the one they agreed to most recently. Likewise, when redeeming value, Bob proves that his ledger is the most recent by allowing Alice the opportunity to present a counter example of a ledger which is longer (more recent) which contains Bob's signature, and Alice has a finite time in which to do this. In a 2-cycle of Construction \ref{construction:cycle-coin}, Bob only accepts payment from Alice if Alice signs a chain with greater weight than any chain Bob has seen before. Likewise, when redeeming value, Bob can prove that he has the marker in the local payment channel if he publishes a chain ending at Bob, and Alice cannot produce a greater weight chain. This proof of value is identical to the "proof of valid response" in Construction \ref{construction:cycle-coin}, where participants prove that a cycle extension request is invalid by publishing a chain of greater weight. Notice however that in the special case of a 2-cycle construction, there is never a need to ask another participant to extend a cycle if you are the marked process. Thus network consensus is only needed when wanting to prove one particular process is marked in the 2-cycle, but consensus is not needed for in-cycle payments when the marker moves between Alice and Bob, i.e. 2-cycle payments do not require use of the global blockchain. In both a 2-cycle and a payment channel in the lightning network, both Alice and Bob only have finite time to broadcast this a proof or counter example to the network\footnote{For 2-cycles, these need to be broadcast within the current round.}, so both the lightning network and Construction \ref{construction:cycle-coin} use a time locking mechanism. The cycle hops in both systems correspond to using individual participants to act as intermediaries between different payment channels/2-cycles. 

\ 

Thus, when we formally studied the constructions in Part \ref{part:trusted-parties}, we were studying the properties of a structure which is closely related to that of the lightning network, the primary difference being that we used $N$-cycles instead of $2$-cycles. We hope that the formalism developed in parts \ref{part:payment-systems}, \ref{part:trusted-parties} are therefore useful in setting up a way to formally understand the security and efficiency of systems like the lightning network. 

\

\newpage


\begin{thebibliography}{99}

\bibitem{PSL80}
Marshall C. Pease, Robert E. Shostak, Leslie Lamport:
Reaching Agreement in the Presence of Faults. J. ACM 27(2): 228-234 (1980)

\bibitem{LSP82}
Leslie Lamport, Robert E. Shostak, Marshall C. Pease:
\emph{The Byzantine Generals Problem.} ACM Trans. Program. Lang. Syst. 4(3): 382-401 (1982)

\bibitem{FL82}
Michael J. Fischer, Nancy A. Lynch:
A Lower Bound for the Time to Assure Interactive Consistency. Inf. Process. Lett. 14(4): 183-186 (1982)

\bibitem{DF+82}
Danny Dolev, Michael J. Fischer, Robert J. Fowler, Nancy A. Lynch, H. Raymond Strong:
An Efficient Algorithm for Byzantine Agreement without Authentication. Information and Control 52(3): 257-274 (1982)

\bibitem{BV82}
Bollobás, B., Fernandez de la Vega, W. Combinatorica (1982) 2: 125. \url{https://doi.org/10.1007/BF02579310}

\bibitem{DS83}
Danny Dolev, H. Raymond Strong:
Authenticated Algorithms for Byzantine Agreement. SIAM J. Comput. 12(4): 656-666 (1983)

\bibitem{BO83}
Michael Ben-Or:
Another Advantage of Free Choice: Completely Asynchronous Agreement Protocols (Extended Abstract). PODC 1983: 27-30

\bibitem{R83}
Michael O. Rabin:
Randomized Byzantine Generals. FOCS 1983: 403-409

\bibitem{TC84}
Russell Turpin, Brian A. Coan:
Extending Binary Byzantine Agreement to Multivalued Byzantine Agreement. Inf. Process. Lett. 18(2): 73-76 (1984)

\bibitem{FLP85}
Michael J. Fischer, Nancy A. Lynch, Mike Paterson:
\textit{Impossibility of Distributed Consensus with One Faulty Process.} J. ACM 32(2): 374-382 (1985)

\bibitem{DR85}
Danny Dolev, Rüdiger Reischuk:
Bounds on Information Exchange for Byzantine Agreement. J. ACM 32(1): 191-204 (1985)

\bibitem{FLM86}
Michael J. Fischer, Nancy A. Lynch, Michael Merritt:
Easy Impossibility Proofs for Distributed Consensus Problems. Distributed Computing 1(1): 26-39 (1986)

\bibitem{DLS88}
Cynthia Dwork, Nancy A. Lynch, Larry J. Stockmeyer:
Consensus in the presence of partial synchrony. J. ACM 35(2): 288-323 (1988)

\bibitem{CFN88}
David Chaum, Amos Fiat, Moni Naor:
Untraceable Electronic Cash. CRYPTO 1988: 319-327

\bibitem{CMS89}
Benny Chor, Michael Merritt, David B. Shmoys:
Simple constant-time consensus protocols in realistic failure models. J. ACM 36(3): 591-614 (1989)

\bibitem{KW01}
Jeong Han Kim, Nicholas C. Wormald, 
Random Matchings Which Induce Hamilton Cycles and Hamiltonian Decompositions of Random Regular Graphs, Journal of Combinatorial Theory, Series B, Volume 81, Issue 1, January 2001, Pages 20-44

\bibitem{DSU04}
Xavier Défago, André Schiper, Péter Urbán:
Total order broadcast and multicast algorithms: Taxonomy and survey. ACM Comput. Surv. 36(4): 372-421 (2004)

\bibitem{LLR06}
Yehuda Lindell, Anna Lysyanskaya, Tal Rabin:
On the composition of authenticated Byzantine Agreement. J. ACM 53(6): 881-917 (2006)

\bibitem{BBRP07}
Roberto Baldoni, Marin Bertier, Michel Raynal, Sara Tucci Piergiovanni:
Looking for a Definition of Dynamic Distributed Systems. PaCT 2007: 1-14

\bibitem{KK09}
Jonathan Katz, Chiu-Yuen Koo:
On expected constant-round protocols for Byzantine agreement. J. Comput. Syst. Sci. 75(2): 91-112 (2009)

\bibitem{Bit}
Bitcoin: A Peer-to-Peer Electronic Cash System. 2009.

\bibitem{AH10}
Hagit Attiya, Keren Censor-Hillel:
Lower Bounds for Randomized Consensus under a Weak Adversary. SIAM J. Comput. 39(8): 3885-3904 (2010)

\bibitem{SCG+14}
Eli Ben-Sasson, Alessandro Chiesa, Christina Garman, Matthew Green, Ian Miers, Eran Tromer, Madars Virza: Zerocash: Decentralized Anonymous Payments from Bitcoin. IEEE Symposium on Security and Privacy 2014: 459-474

\bibitem{GKL15}
Juan A. Garay, Aggelos Kiayias, Nikos Leonardos:
The Bitcoin Backbone Protocol: Analysis and Applications. EUROCRYPT (2) 2015: 281-310

\bibitem{LBSZR15}
Yoad Lewenberg, Yoram Bachrach, Yonatan Sompolinsky, Aviv Zohar, Jeffrey S. Rosenschein:
Bitcoin Mining Pools: A Cooperative Game Theoretic Analysis. AAMAS 2015: 919-927

\bibitem{LTKS15}
Loi Luu, Jason Teutsch, Raghav Kulkarni, Prateek Saxena:
Demystifying Incentives in the Consensus Computer. ACM Conference on Computer and Communications Security 2015: 706-719

\bibitem{E15}
Ittay Eyal:
The Miner's Dilemma. IEEE Symposium on Security and Privacy 2015: 89-103

\bibitem{SBBR16}
Okke Schrijvers, Joseph Bonneau, Dan Boneh, Tim Roughgarden:
Incentive Compatibility of Bitcoin Mining Pool Reward Functions. Financial Cryptography 2016: 477-498

\bibitem{SSZ16}
Ayelet Sapirshtein, Yonatan Sompolinsky, Aviv Zohar:
Optimal Selfish Mining Strategies in Bitcoin. Financial Cryptography 2016: 515-532

\bibitem{NKMS16}
Kartik Nayak, Srijan Kumar, Andrew Miller, Elaine Shi:
Stubborn Mining: Generalizing Selfish Mining and Combining with an Eclipse Attack. EuroS\&P 2016: 305-320


\bibitem{lightning-network}
Joseph Poon, Thaddeus Dryja.
The Bitcoin Lightning Network: Scalable Off-Chain Instant Payments. 2016. 

\bibitem{CKWN16}
Miles Carlsten, Harry A. Kalodner, S. Matthew Weinberg, Arvind Narayanan:
On the Instability of Bitcoin Without the Block Reward. ACM Conference on Computer and Communications Security 2016: 154-167

\bibitem{KKKT16}
Aggelos Kiayias, Elias Koutsoupias, Maria Kyropoulou, Yiannis Tselekounis:
Blockchain Mining Games. EC 2016: 365-382

\bibitem{PS17}
Rafael Pass, Elaine Shi:
The Sleepy Model of Consensus. ASIACRYPT (2) 2017: 380-409

\bibitem{ZXD+17}
Zibin Zheng, Shaoan Xie, Hongning Dai, Xiangping Chen, Huaimin Wang:
An Overview of Blockchain Technology: Architecture, Consensus, and Future Trends. BigData Congress 2017: 557-564

\bibitem{youtubevid-micali}
Micali, S. (2017, July 25). Dr. Silvio Micali, MIT, DLS [Video file]. Retrieved from \url{https://youtu.be/QNQHbfI3IAQ}

\bibitem{youtubevid-ver}
How Lightning Network Scales For The World - Lightning Network Explained (2018, December 28). Roger Ver [Video file]. Retrieved from \url{https://www.youtube.com/watch?v=Xg_-dz5PqAY}

\bibitem{EE18}
Ittay Eyal, Emin Gün Sirer:
Majority is not enough: bitcoin mining is vulnerable. Commun. ACM 61(7): 95-102 (2018)

\bibitem{SJS+18}
Nicholas Stifter, Aljosha Judmayer, Philipp Schindler, Alexei Zamyatin, Edgar R. Weippl:
\emph{Agreement with Satoshi - On the Formalization of Nakamoto Consensus.} IACR Cryptology ePrint Archive 2018: 400 (2018)


\bibitem{M18}
Silvio Micali
Byzantine Agreement, Made Trivial, 2018


\bibitem{tangle}
Serguei Popov,
The Tangle, 2018

\bibitem{CGMV18}
Jing Chen, Sergey Gorbunov, Silvio Micali, Georgios Vlachos (2018) 
ALGORAND AGREEMENT: Super Fast and Partition Resilient Byzantine Agreement

\bibitem{TE18}
Itay Tsabary, Ittay Eyal:
The Gap Game. ACM Conference on Computer and Communications Security 2018: 713-728

\bibitem{lightning-analysis-1}
István András Seres, László Gulyás, Dániel A. Nagy, Péter Burcsi:
Topological Analysis of Bitcoin's Lightning Network. CoRR abs/1901.04972 (2019)

\bibitem{lightning-analysis-2}
Stefano Martinazzi:
The evolution of Lightning Network's Topology during its first year and the influence over its core values. CoRR abs/1902.07307 (2019)


\bibitem{coinmarketcap}
coinmarketcap.com, a database for cryptocurrency statistics. Last accessed March 2019. 


\end{thebibliography}
\end{document}